\crefname{algocfline}{algorithm}{algorithms}
\Crefname{algocfline}{Algorithm}{Algorithms}
\setlist[itemize]{leftmargin=*}
\setlist[enumerate]{leftmargin=*}
\title{Instance-Optimality of Bidirectional PageRank Estimation\footnote{This work was supported by the VILLUM Foundation grant 54451.}
}
\author[1]{Mikkel Thorup}
\author[2]{Hanzhi Wang~\footnote{Work partially done while at BARC, University of Copenhagen}\vspace{0.5em}}
\affil[1]{BARC, University of Copenhagen}
\affil[2]{The University of Melbourne\vspace{0.5em}}
\affil[1,2]{\{mikkel2thorup, hanzhi.hzwang\}@gmail.com}
\date{}
\newtheorem{theorem}{Theorem}[section]
\newtheorem{lemma}[theorem]{Lemma}
\def\cG{\mathcal{G}^*}
\def\eps{\varepsilon}
\def\tO{\widetilde{O}}
\def\Veps{V_{r}}
\def\Teps{T_{r}}
\def\Wout{W_{\mathrm{ext}}}
\def\Wisolated{W_{\mathrm{iso}}}
\def\nr{q}
\def\tO{\widetilde{O}}
\DeclareMathOperator{\polylog}{polylog}
\def\Nin{\mathcal{N}_{\mathrm{in}}}
\def\Nout{\mathcal{N}_{\mathrm{out}}}
\def\din{d_{\mathrm{in}}}
\def\dout{d_{\mathrm{out}}}
\def\Deltain{\Delta_{\mathrm{in}}}
\def\Deltaout{\Delta_{\mathrm{out}}}
\def\r{r}
\def\p{p}
\def\fmu{\mu}
\def\tpi{\tilde{\pi}}
\def\epi{\hat{\pi}}
\def\vpi{\pi}
\def\E{\mathrm{E}}
\def\Var{\mathrm{Var}}
\def\hset{V_h}
\def\rmax{r_{\mathrm{max}}}
\def\BiPR{\textup{\texttt{BiPR}}\xspace}
\def\push{\textup{\texttt{pushback}}\xspace}
\def\Ain{E_{\mathrm{in}}}
\def\indeg{\textup{\textsc{indeg}}\xspace}
\def\innbr{\textup{\textsc{in}}\xspace}
\def\noninnbr{\overline{\textup{\textsc{in}}}\xspace}
\def\outdeg{\textup{\textsc{outdeg}}\xspace}
\def\outnbr{\textup{\textsc{out}}\xspace}
\def\nonoutnbr{\overline{\textup{\textsc{out}}}\xspace}
\def\jump{\textup{\textsc{jump}}\xspace}
\def\adj{\textup{\textsc{adj}}\xspace}
\def\adapush{\textup{\texttt{AdaptiveBiPR}}\xspace}
\def\tp{\pi}
\def\Din{D_{\mathrm{in}}}
\def\pf{\delta}
\def\bpush{b_{\mathrm{push}}}
\def\vpush{v_{\mathrm{push}}}
\def\smallexpo{\gamma}
\begin{document}

\maketitle

\begin{abstract}
We study the problem of estimating a vertex's PageRank within a constant relative error, with constant probability. 
We prove that an adaptive variant of the simple classic bidirectional algorithm is instance-optimal up to a polylogarithmic factor for all directed graphs of order $n$ whose maximum in- and out-degrees are at most a constant fraction of $n$. 
In other words, there is no correct algorithm that can be faster than our algorithm on any such graph by more than a polylogarithmic factor. We further extend the instance-optimality to all graphs in which at most a polylogarithmic number of vertices have unbounded degrees. This covers all sparse graphs with $\tO(n)$ edges. 
In addition, we provide a counterexample showing that the bidirectional algorithm is not instance-optimal for graphs whose degrees are mostly equal to $n$. 
We also consider weighted graphs and multigraphs. We show that the bidirectional algorithm is instance-optimal on \emph{all} multigraphs, but for weighted simple graphs, we have almost the same limitations as for unweighted simple graphs. 
\end{abstract}

\section{Introduction} \label{sec:introduction}

In the last two decades, the problem of PageRank computation in its many forms has attracted considerable attention~\cite{chen2004local, fogaras2005towards, avrachenkov2007monte, gleich2007approximating, andersen2008local, bar2008local, bressan2011local, borgs2012sublinear, bressan2013power, borgs2014multiscale, lofgren2014fast, lofgren2016personalized, wang2018efficient, bressan2018sublinear, wang2020personalized, sigmod_LiaoLDW22, bressan2023sublinear, wang2024revisiting, LiuL24_hppr, bertram2025estimating, PODS_ssppr, kwok_yang_ITCS26, bertram2026undirected, soda_Thorup0W026}, and has been adopted to various applications, including webpage ranking~\cite{page1998pagerank}, recommender systems~\cite{wtf_GuptaGLSWZ13}, spam filtering~\cite{vldb_GyongyiGP04}, node classification~\cite{wang2021approximate}, link prediction~\cite{yin2019scalable}, and many others~\cite{gleich2015pagerank, berkhin2005survey}. 

PageRank is a classic graph centrality measure introduced in \cite{page1998pagerank}. 
The definition of PageRank builds on so-called {\em $\alpha$-discounted random walks}, where $\alpha \in (0,1)$ is a parameter typically assumed to be constant. Such a random walk has a start vertex, predefined or randomly sampled. In each step, it terminates with probability $\alpha$, or continues to a uniformly chosen out-neighbor with probability $1-\alpha$. Throughout the paper, for simplicity we assume that every dangling vertex has a self-loop, so that every vertex has at least one out-neighbor. Following prior work~\cite{lofgren2014fast, bressan2023sublinear, wang2024revisiting, soda_Thorup0W026}, we assume that $\alpha$ is a constant. 

Given a directed graph $G=(V,E)$ and a vertex $t\in V$, the \emph{PageRank} of $t$, denoted $\pi(t)$, is defined as the probability that an $\alpha$-discounted walk starting from a uniformly random vertex terminates at $t$. A widely used variant is the Personalized PageRank (PPR): for vertices $v,t \in V$, the PPR of $t$ with respect to $v$, denoted $\vpi(v,t)$, is the probability that an $\alpha$-discounted walk starting from $v$ terminates at $t$. Let $n=|V|$ denote the number of vertices in $G$. Then, 
\begin{align}\label{eqn:pagerank_ppr}
\vpi(t)=\sum_{v\in V}\vpi(v,t)/n. 
\end{align}

\paragraph{Problem formulation. } In this paper, we study the problem of estimating a vertex's PageRank. The input to our problem is $(G,t)$, where $G$ is an arbitrary directed graph and $t$ is an arbitrary vertex in $G$ called the target vertex. The goal is to estimate the PageRank of $t$ (i.e., $\vpi(t)$) within a constant relative error with constant probability. 

For our results, we focus on algorithmic computational complexity under the standard RAM model. To clarify what query operations for interacting with the graph are available, we consider the standard query model~\cite[Chapter 10]{books_Goldreich17} for sublinear graph algorithms (see~\Cref{sec:preliminaries} for details). When describing the complexity of the problem, we use $n$ and $m$ to denote the number of vertices and edges in the graph $G$, respectively. The value of $n$ is part of the query model and is therefore available to the algorithms. We are not given any additional information about the graph.

\paragraph{A classic bidirectional search algorithm. } In this paper, we investigate the instance optimality~\cite{instance_optimal_Roughgarden19,book_Roughgarden_beyond} of an adaptive variant of a simple, classic bidirectional search algorithm for PageRank from 
\cite{lofgren2014fast,lofgren2016personalized}, here refer to as \BiPR. 

The \BiPR algorithm combines the $\push$ operation with forward Monte Carlo simulations, where $\push$ is a canonical technique proposed in~\cite{andersen2008local} to propagate random-walk probability mass backward from the target $t$ along in-edges. In comparison, the Monte Carlo simulations generate $\alpha$-discounted random walks forward from uniformly chosen source vertices along out-edges. 
This bidirectional search algorithm was formally introduced in 2014~\cite{lofgren2014fast}, originally designed to estimate $\vpi(s,t)$ for a given pair of vertices $s$ and $t$, and it can be trivially extended to estimate $t$'s PageRank $\pi(t)$ (simply add a new source $s$ with edges to all nodes in the original graph). 
A simplified version, was later proposed in 2016~\cite{lofgren2016personalized}, optimizing the way $\push$ and Monte Carlo simulations are combined. It is this simplified version, applied to PageRank that we refer to as \BiPR.

Due to the difficulty of analyzing the time cost of $\push$, for many years, only an average-case upper bound of $O(m^{1/2})$ ~\cite{lofgren2016personalized} was known for \BiPR. This is when averaging the complexity over all target vertices $t$ in a worst-case graph $G$. Considering the worst-case complexity of PageRank, for given $G$ and $t$, for a long time, the best bound in terms of $n$ and $m$ was the $O(n)$ time achieved just using Monte Carlo simulations.  However, for $m\ll n^2$,~\cite{bressan2018sublinear} improved the complexity to $\tO(n^{5/7}m^{1/7})$ by using a more complicated version of the bidirectional approach, later improving it to $\tO(n^{2/3}m^{1/6})$~\cite{bressan2023sublinear}.  More recently, \cite{wang2024revisiting} revisited the original $\BiPR$ algorithm and established a complexity of $O(n^{1/2}  m^{1/4} )$.
They proved that this complexity is worst-case optimal by constructing hard instance graphs, where any algorithm must take at least $\Omega(n^{1/2} m^{1/4})$ time on the hard graphs to ensure the derived estimate is within a constant relative error with constant probability. 

We shall return with  more discussion of related work in Section \ref{sec:other-relate}, including bounds related to the maximal in- and out-degrees. For now we just conclude
that the \BiPR algorithm is worst-case optimal in terms of $n$ and $m$, and that it has much better average-case complexity.

\subsection{Instance-Optimality}\label{sec:instance-optimal}
As mentioned previously, we will investigate the instance-optimality of the \BiPR algorithm. Instance-optimality is the ultimate notion of optimality~\cite{colt_ChenL16_openproblem,instnce-optimal_AfshaniBC17}. 
It is an extremely appealing beyond-worst-case guarantee that an algorithm can have, yet only in rare cases does a problem admit an instance-optimal solution
\cite{instance_optimal_Roughgarden19, book_Roughgarden_beyond,sosa_bidirectional_HaeuplerHRTT25}. 

More specifically, instance-optimality is the dream of an algorithm that is best, or close to best, on every possible instance. This contrasts with worst-case optimal algorithms which are best on worst-case instances, but are often beaten in practice on real instances.

Above, when we say that an algorithm is best on an instance, we are only competing with {\em correct algorithms} that output correct answers on every {\em valid input instance}. This includes an {\em instance-smart} algorithm that may be tuned to be fast for particular instances. However, to be correct, it must always \emph{verify} that the answer is correct for any given input instance (which may or may not be the one it was tuned for). 
Thus, an algorithm is instance optimal only if no correct (possibly instance-smart) algorithm is substantially faster on any valid instance. 

Worst-case optimal algorithms exist for all problems since by definition the worst-case complexity of a problem is the best worst-case complexity of an algorithm. However, for many problems it is easy to show that no instance-optimal algorithm exists, illustrated below.

\paragraph{Impossibility of instance-optimality and the power of instance-smart algorithms.} 
Consider a simple informative example where instance-optimality is impossible. The problem is that of deciding if an array $B$ with $n$ entries has a duplicate element.
Using randomization, we can solve the problem in $O(n)$
expected time using a hash table. To verify a no-instance with constant probability, we need to probe $\Omega(n)$ entries, so $\Theta(n)$ is the expected worst-case complexity of the problem.
However, if we had an instance-optimal algorithm, then we could solve all instances in constant time!

More specifically, we have a yes-instance if and only if we have indices $i$ and $j$ such that $B[i]=B[j]$. An instance-smart algorithm could have two ``lucky'' numbers $i$ and $j$. It would first check if $B[i]=B[j]$, reporting yes in this case; otherwise it could just check the whole array for duplicates. 
An instance-optimal algorithm $A$ has to be best, or close to best, on all instances, so it would have to answer all yes-instances within some constant, or close to constant, time $C$. We could then create an algorithm $A^+$ that on any instance ran $A$ for $C$ steps. If $A$ says yes, $A^+$ says yes, and if $A$ says no, or does not terminate within $C$ steps, then $A^+$ says no, so $A^+$ solves all instances within $O(C)$ time. This contradicts the $\Omega(n)$ lower bound and demonstrates the impossibility of instance optimality for this problem.

A similar argument says that an instance-optimal algorithm for any NP-hard decision problem would imply NP$=$P, and instance-optimality is actually much stronger, as it requires the algorithm to be efficient on any instance where the answer can be verified quickly. More explanations can be found in~\cite[Section 3.4.1]{books_Barbay20}. 

In this paper, we seek instance-optimality within a polylogarithmic factor. 
This slightly relaxes the requirement compared to instance optimality within a constant factor, but does not fundamentally affect the hardness of the problem, and we still encounter the same impossibility issues discussed above.
For example, for the duplicate detection problem, instance-optimality within a polylogarithmic factor would imply a polylogarithmic-time algorithm for all instances, contradicting that the worst-case complexity is linear.

\paragraph{Some  known positive examples of instance-optimality.}

While instance-optimality is not generally possible, some 
very nice cases of instance-optimal algorithms are known, e.g., in quantity estimation~\cite{siamcomp_DagumKLR00}, in database aggregation~\cite{FaginLN03_instance_optimal}, in sequential estimation~\cite{stoc_ValiantV16_instance_optimal, dang_neurips_23, focs_NarayananRTT24_thorup}, in computational geometry~\cite{instnce-optimal_AfshaniBC17}, 
in distribution testing and learning~\cite{siamcomp/ValiantV17, books_ValiantV20, focs_BlancCW25_clement}, in shortest paths~\cite{sosa_bidirectional_HaeuplerHRTT25}, and in contextual bandits~\cite{nips_LiRNJJ22}. 

Most related to our work is the recent result \cite{sosa_bidirectional_HaeuplerHRTT25}   that the bidirectional Dijkstra algorithm is instance-optimal in the number of probed edges for deciding the shortest path from $s$ to $t$.
The computational complexity may exceed the edge-probe lower bound by a logarithmic factor due to the use of priority queue operations. 

Interestingly the above instance-optimality for bidirectional Dijkstra is only for weighted multi-graphs with positive weights (no zeros). Having parallel weighted edges when computing shortest paths, may seem redundant, but recall that the biggest challenge for instance-optimal algorithm is if life is too easy for an instance-smart algorithm.
Dijkstra's algorithm relaxes all incident edges when visiting a vertex. 
We would like to force the instance-smart algorithm to do more or less the same.
Without parallel edges, you would know the one-hop distance from $u$ to $v$ as soon as you have found an edge from $u$ to $v$ and read its weight. However, with parallel edges, you have to read the whole incidence out-list from $u$ (or in-list from $v$), to make sure there isn't a lighter edge (which is always possible since edge-weights are positive). 

For contrast, in the easy case where the valid inputs 
are restricted unweighted graphs, that is, all weights are $1$, the instance-optimality from \cite{sosa_bidirectional_HaeuplerHRTT25} is off by a factor $O(\Delta)$ where $\Delta$ is the maximal degree; for in this case, an instance-smart algorithm can verify if the distance from $s$ to $t$ is $1$, simply by pointing to the edge $(s,t)$.

Related to \cite{sosa_bidirectional_HaeuplerHRTT25}, we note that~\cite{FOCS_best_HaeuplerHRTT24} establishes universal optimality for a new priority-queue in Dijkstra's single-source shortest paths problem. This is a notion strictly weaker than instance-optimality but still substantially stronger than worst-case optimality.

\subsection{Our Contribution: Instance-Optimality in PageRank}\label{sec:contribution}

In this paper, we study a simple adaptive variant of the aforementioned $\BiPR$ algorithm from~\cite{lofgren2016personalized}. Our main contribution is to prove that this adaptive $\BiPR$ is instance-optimal within a polylogarithmic factor, not for all graphs, but for a large class of graphs including all sparse graphs. Our analysis shows that even though the running time of $\BiPR$ is highly dependent on the instance, varying between $O(\log n)$ and $O(n)$, its running time is always close to the best possible on every one of these graphs. Thus, contrasting worst-case complexity, where we just have to analyze worst-case instances, here we have to prove upper- and lower-bounds for all instances for which we claim instance-optimality. This instance-complexity view will be explained more clearly in~\Cref{subsec:technique_analysis}.

\paragraph{Instance-optimality and good (statistically correct) PageRank algorithms}
Recall that instance-optimal algorithms should only compete with correct algorithms. In the probabilistic setting of PageRank computation, the correctness is statistical, and for brevity, we shall call such algorithms ``good'', as defined below.

A \emph{valid} input instance to PageRank is a pair $(G,t)$ consisting of an arbitrary directed graph $G$ and an arbitrary target vertex $t$ in $G$. We say a PageRank algorithm $A$ is \emph{good} if for every valid input $(G,t)$, the algorithm estimates $\pi(t)$  within a constant factor with constant probability.

A good PageRank algorithm $A^*$ is \emph{instance-optimal} on a given instance $(G,t)$ if there is no good algorithm $A$  that on $(G,t)$ is faster than $A^*$  by more than a polylogarithmic factor. 
The dream would be to find a good algorithm $A^*$ that is instance-optimal on all valid instances. However, recall from the discussion in Section \ref{sec:instance-optimal} that such an instance-optimal algorithm often does not exist. 

Here we show that there are indeed
graphs on which the
$\BiPR$-algorithm is not
instance-optimal. However, instead of just giving up the dream of instance-optimality, we establish that $\BiPR$ is instance-optimal on a rich class of graphs $\cG$, which includes all sparse graphs and almost all other graphs, as detailed below.

\paragraph{The class $\cG$ including all sparse graphs and almost all other graphs.}

Our contribution is to show that the adaptive $\BiPR$ is instance optimal for a large class $\cG$ consisting of all graphs where at most a polylogarithmic number of vertices may have a very high in- or out-degree of $(1-o(1))n$. This includes all sparse graphs
because a graph with $n$ vertices and $m$ edges can have at most $4m/n$
vertices with in- or out-degree above $n/2$, and sparse graphs have $m=\tilde O(n)$. It also includes
almost all graphs, since a large random graph, w.h.p., has no vertices of degree above $0.6 n$. We note that $\cG$ also includes dense worst-case instances used in previous lower bounds from \cite{wang2024revisiting}; for inspection shows that they remain hard if we change $n$ to $n'=2n$,
adding $n$ extra vertices not connected to the original instance, and now the maximal degree is $n'/2$. For these worst-case instances, we have a lower-bound of $\Omega(n^{1/2} m^{1/4})$ which is 
$\Omega(n)$ for $m=\Omega(n^2)$.

\paragraph{Instance-Optimality of $\BiPR$ on all graphs in $\cG$.}
The instance optimality of $\BiPR$ on $\cG$ means that it is always a near-optimal choice for all graphs in $\cG$. We note that the valid inputs are still all graphs, including graphs outside $\cG$. We only compare against algorithms that like $\BiPR$ are good on all graphs, when we say $\BiPR$ is near-optimal on all graphs in $\cG$.

The class $\cG$ includes not only hard instances for worst-case and average-case, but also easy instances on which we should perform much better. 
A very simple example is a graph where $t$ is only connected to itself by a loop. An instance-smart algorithm would recognize this case in constant time, and report that the PageRank is $1/n$. 
The $\BiPR$-algorithm is not designed for this easy case. It will
spend $O(\log n)$ time and report that
the PageRank is around $1/n$. This is, however, still within a polylogarithmic factor from best possible as we promised for every single instance in $\cG$.

\paragraph{Instance-bad case outside $\cG$.}
Complementing these positive results, we will show that $\BiPR$ is not instance optimal on all graphs. Specifically, we show that $\BiPR$  is not instance optimal on what we call \emph{mostly degree-$n$} graphs where all but $o(n/\log n)$ vertices have in- and out-degree exactly $n$ (or $n-1$ without self-loops). In mostly degree-$n$ graphs, 
all vertices have PageRank close to $1/n$.

An instance-smart algorithm can verify a mostly degree-$n$ graph by sampling $O(\log n)$ vertices. If they 
all have in- and out-degree exactly $n$ (or $n-1$ without self-loops), then it can report that the PageRank is around $1/n$; otherwise it can just run a standard Monte Carlo simulation to estimate the PageRank.
This is indeed a good algorithm for all graphs, and it handles mostly degree-$n$ graphs in $O(\log n)$ time.
However, $\BiPR$ will use $\tilde\Theta(n)$ time on mostly degree-$n$ graphs. 

We can think of the issue as follows. For a mostly degree-$n$ graph, the instance-smart algorithm will 
query the degree of a single vertex, and discover that it has edges to \emph{all} other vertices. This is a lot of edges to learn in constant time, and the instance-smart algorithm takes full advantage of this information. However, $\BiPR$ is not designed to exploit this kind of extreme information. It only learns about edges that it queries directly.

We could easily augment \BiPR to include the above test for degree-$n$ graphs, and perhaps other special cases, but our goal is not to construct a complicated algorithm that is instance-optimal for as many graphs as possible. Our main contribution is to show that (the adaptive variant of) the simple classic $\BiPR$-algorithm is instance-optimal for the class $\cG$. That being said, it is hard to imagine natural graphs outside $\cG$, except perhaps the complete graph $K_n$ (with bidirected edges). For that reason, in practice it might be worthwhile to include the above simple test before running \BiPR so that it becomes instance optimal on $\cG\cup\{K_n\}$.

\paragraph{Reflections on exceptions to instance-optimality.}
Recall from Section \ref{sec:instance-optimal}
that instance-optimality
is an extremely appealing beyond-worst-case guarantee that an algorithm can have, yet only in rare cases does a problem admit an instance-optimal solution
\cite{instance_optimal_Roughgarden19, book_Roughgarden_beyond,sosa_bidirectional_HaeuplerHRTT25}. 

For PageRank, we have the classic $\BiPR$-algorithm that was known to be worst-case optimal.
It is, however, clearly not instance-optimal in general, as illustrated above by the mostly degree-$n$ graphs.
However, instead of just giving up the dream of instance-optimality, we prove that
we do have it if we restrict ourselves to $\cG$, not allowing too many vertices of very high degree.
\emph{To the best of our knowledge, this is the first natural example of an algorithm for a standard problem that is not 
instance-optimal for all instances, but instance-optimal for a rich and interesting subset of the  valid input instances}. 
Handling random instances may be easy, but $\cG$ includes all sparse graphs and they can have a lot of interesting structure.

A different approach is to change the problem to allow a wider set of valid input instances that is harder for an instance-smart algorithm. An example of this is the previously mentioned instance-optimality of bidirectional Dijkstra \cite{sosa_bidirectional_HaeuplerHRTT25}. Their instance-optimality is for the positively-weighted multigraphs.
This means that even if the input is a simple graph, a good/correct algorithm has to check for more
edges between the same pair of vertices, to be sure it has the lightest edge. In \cite{sosa_bidirectional_HaeuplerHRTT25}, they leave it as a possibly impossible open problem to get an instance-optimal algorithm for the natural case where only simple graphs are valid inputs.

\paragraph{Weighted graphs and multigraphs.}
We will also consider weighted graphs and multigraphs at the end of this paper. We will show that $\BiPR$ is instance-optimal on \emph{all} multigraphs, but for weighted simple graphs, we have almost the same limitations as for unweighted simple graphs (we only claim instance-optimality for weighted versions of graphs in $\cG$).

\subsection{Other Related Work}\label{sec:other-relate}
As discussed earlier,
when considering worst-case computational complexity parameterized solely by $n$ and $m$ under the standard adjacency-list query model on general directed graphs, \cite{wang2024revisiting}
proved that $\BiPR$ runs in $O(n^{1/2} m^{1/4})$ time, and that this is the best worst-case time for any algorithm. 
Improved worst-case bounds have been obtained only under additional assumptions, such as allowing more queries~\cite{bertram2025estimating}, exploiting additional knowledge of the graph (e.g., the maximum in-degree of $G$)\cite{wang2024revisiting,soda_Thorup0W026}, or restricting to undirected graphs\cite{lofgren2015bidirectional, wang2023estimating, wang_kdd2024_revisiting, bertram2026undirected, kwok_yang_ITCS26}. We shall discuss some of these results, and how they relate to our instance optimality.

\paragraph{Bounds on the maximal degrees.}
Let $\Deltain$ and 
$\Deltaout$ be the maximal in- and out-degree in the graph. Then \cite{wang2024revisiting} actually provides a more refined bound of  $\Theta(n^{1/2} \min\{\Deltain^{1/2}, \Deltaout^{1/2}, m^{1/4} \})$
for the running-time of $\BiPR$. 
Note that if $\Deltain,\Deltaout<n/2$, then the graph is in our class $\cG$ for which we know that $\BiPR$ is instance-optimal. 

However, our instance-optimality is only when we compare against algorithms that have to be good or statistically correct on all possible graphs, that is, any graph is a valid input. If we instead view  $\Deltain$ and $\Deltaout$ as constraints in the sense that only graphs with these degree bounds are valid inputs, then one might be able to do much better. 

When $\Deltain$ and $\Deltaout$ are constraints on the degrees in the valid input, then a
very recent work~\cite{soda_Thorup0W026} presents a novel randomized backward propagation technique which only propagates selectively based on Monte Carlo estimated PageRank scores. The new method improves the worst-case computational complexity to $\tilde{\Theta}\left(n^{1/2}\min\left\{ \Deltain^{1/2} \big/ n^{\smallexpo},\Deltaout^{1/2} \big/ n^{\smallexpo},m^{1/4}\right\}\right)$, where $\smallexpo = \frac{1}{2} \left(2\max\left\{\log_{1/(1-\alpha)}\Deltain,1\right\}-1\right)^{-1}$. When $\Deltain=n^{o(1)}$, this result is polynomially better than the $\Theta\left(n^{1/2}\min\left\{ \Deltain^{1/2},\ \Deltaout^{1/2},\ m^{1/4} \right\}\right)$ established in~\cite{wang2024revisiting} for $\BiPR$.

The new randomized backward propagation technique from~\cite{soda_Thorup0W026} also
applies in the unconstrained case with $\Deltain=\Deltaout=n$, and one might wonder if it could be better than the $\BiPR$ algorithm, at least for some graphs. However, our instance-optimality says that it cannot be substantially better for any graph in $\cG$.

\paragraph{Undirected graphs.}
There is also a line of work~\cite{lofgren2015bidirectional, wang2023estimating, wang_kdd2024_revisiting, bertram2026undirected, kwok_yang_ITCS26} that considers PageRank computation on undirected graphs. It is shown in \cite{wang_kdd2024_revisiting, bertram2026undirected, kwok_yang_ITCS26} that $\Theta(m^{1/2})$ characterizes the worst-case complexity of the problem on undirected graphs under the standard adjacency-list model. 
While $\BiPR$ can be applied to undirected graphs, it cannot compete with algorithms that are allowed to exploit their guaranteed symmetry.

\paragraph{Other graph queries.}
There has also been work considering different graph queries. Our instance-optimality of $\BiPR$ holds even if we allow the adjacency query (is there an edge between $u$ and $v$?), but 
\cite{wang2020personalized} proposes sorting in-neighbors by their out-degree to accelerate the problem of estimating $\pi(s,t)$ for all $s\in V$ given a target vertex $t$. Using both adjacency queries and sorted in-neighbors~\cite{bertram2025estimating} improves the previously mentioned average-case upper bound from $\tilde{\Theta}(m^{1/2})$~\cite{lofgren2016personalized} to $\tilde{\Theta}(\min\{m^{1/2}, n^{2/3}\})$.

It is important for our instance-optimality that valid instances include all possible orderings of the in- and out-neighbors of each vertex. This means
that we do not assume
any specific ordering such as the above ordering by in-neighbors by out-degree. The point is that if we from a vertex $v$ were given the first neighbor $w$ in such an ordering, then we would know that $v$ did not have any neighbors prior to $w$ in this ordering, thus ruling out a lot of potential edges with a single query.

\subsection{Techniques: Instance Complexity Analysis}\label{subsec:technique_analysis}

To present our results more specifically, we fix a directed graph $G = (V, E)$ with target vertex $t$ whose PageRank we aim to estimate. For any $r\in [0,1]$, we define
\begin{align}\label{eqn:def_V_T}
\Veps=\{v\in V \mid \vpi(v,t) \ge r\}, \text{ and } \Teps=\sum_{v\in \Veps}\left(1+\din(v)\right), 
\end{align}
where $\din(v)$ denotes the in-degree of vertex $v$ in $G$. 
We also define: 
\begin{align}\label{eqn:T*}
T^*=\max_{r\in [0, 1]}\left\{\min\left\{\Teps, r/\vpi(t)\right\}\right\}. 
\end{align} 
We always have
$T^*=O(n)$ since $\pi(t)\geq \alpha/n$, but otherwise $T^*$ is not a nice expression to understand
as it depends in quite complicated ways on the instance. However, we claim that  $T^*$ characterizes the \emph{instance complexity} of estimating $\pi(t)$ for all graphs in $\cG$, that is, $\tilde\Theta(T^*)$ is best run-time an algorithm that is good on all graphs can have for the instance $(G,t)$ if $G\in\cG$.

\paragraph{Upper and lower bounds of instance complexity. }
On the upper-bound side, we prove that the adaptive variant of $\BiPR$ estimates $\pi(t)$ within a multiplicative factor of $(1 \pm 1/\log^{1/4} n)$ in expected time $O(T^* \log n)$, with probability at least $1 - 1/\log^{1/4} n$.

On the lower-bound side, 
suppose $G$ has at most $h$ vertices with in- or out-degree above $(1-\eps)n$.  We note that $h$ and $1/\eps$ are polylogarithmic for every graph in the class $\cG$ (see~\Cref{sec:contribution} for definition of the class $\cG$). 
We will prove a general lower-bound of  
\begin{equation}\label{eq:lower-intro}
\Omega((T^*(\eps/ (h+1)^2)/\log^{3/2} n)=_{\cG}\tilde\Omega(T^*)\textnormal,
\end{equation}
on the time it takes for any good algorithm to estimate $\pi(t)$. Note that multiple valid settings of $\varepsilon$ and $h$ may exist for a given graph, and the setting that yields the largest complexity bound constitutes our lower bound. 

Above, the $=_{\cG}\tilde\Omega(T^*)$ denotes that this last lower-bound is only for graphs in $\cG$. 
Combining this with our upper bound, we conclude that $\tilde{\Theta}(T^*)$ captures the instance-complexity of graphs in $\cG$ and that our adaptive $\BiPR$ is instance optimal for every graph in $\cG$.

To understand the lower bound, suppose now that there exists a good algorithm $A$ that estimates $\pi(t)$ in $G$ in expected time $O((T^*(\eps/ (h+1)^2)/\log^{3/2} n)$. 
Then we can construct a graph $G^+$ with $\pi_{G^+}(t)=\omega(\pi_{G}(t))$, and such that when $A$ is run with the same random seed on both $G$ and $G^+$, then with probability $1 - o(1)$, $A$ performs exactly the same sequence of queries and get exactly the same answers on $G$ and $G^+$. Therefore $A$ returns the same estimate on $G$ and $G^+$. Since $\pi_{G^+}(t)=\omega(\pi_{G}(t))$, this implies that $A$ cannot be correct within a constant factor on both graphs. Here, $\pi_{G}(t)$ and $\pi_{G^+}(t)$ denote $t$'s PageRank centrality scores in $G$ and $G^+$, respectively. 

\paragraph{Instance-non-optimality for mostly-degree-$n$ graphs. }

As mentioned in Section~\ref{sec:contribution},
we will also demonstrate that $\BiPR$ is not instance optimal on all graphs. The bad example was the very dense graphs called \emph{mostly degree-$n$ graphs} in which all but $o(n/\log n)$ of its vertices have both in-degrees and out-degrees equal to $n$. This implies that all vertices have in- and out-degree close to $n$.
In such graphs, all vertices end up with PageRank close to $1/n$. An instance-smart algorithm can sample $O(\log n)$ vertices, and return the estimate $1/n$ if all their in- and out-degrees are $n$.  Otherwise, it can just revert to $\BiPR$.

For mostly degree-$n$ graphs we claim that $T^*=\Theta(n)$. To see this, consider $r=\alpha$. Then $\pi(t,t)\geq r$ since a walk from $t$ terminates instantly with probability $\alpha$. Therefore $t\in V_r$, and since all vertices have in-degree close to $n$, we conclude that $T_r=\Omega(n)$. We also have $r/\pi(t)\sim n$, so we conclude that $T^*=\Theta(n)$.

We will also show that $\BiPR$ (including any reasonable variant), takes $\Omega(n)$ time on these mostly degree-$n$ graphs, meaning that it is exponentially worse than the instance-smart algorithm. The fundamental
issue is that the $\BiPR$-approach does not take advantage of learning 
$2n$ edges, when discovering that a vertex has in- and out-degree $n$.

\section{Preliminaries} \label{sec:preliminaries}

\paragraph{Query model.}
In this paper, we work within the standard {\em adjacency-list} query model for sublinear graph algorithms~\cite[Chapter 10]{books_Goldreich17}. The model assumes that the graph is stored in incidence-list format; i.e., each vertex maintains a list of its outgoing edges and a list of its incoming edges, and we have query access to the vertices. The vertices are numbered from 1 to $n$ and are identified by their indices. 
The following queries are supported in the model, each with unit cost: 
\begin{itemize}
    \item Degree queries $\indeg(i)$ and $\outdeg(i)$: Given a vertex $i \in [1, n]$, the $\indeg(i)$ query returns the in-degree of the vertex $i$, and the $\outdeg(i)$ query returns the out-degree of the vertex $i$. 
    \item Neighbor queries $\innbr(i, j)$ and $\outnbr(i, k)$: Given a vertex $i \in [1, n]$ and an index $j\in [1, \din(i)]$, the $\innbr(i, j)$ query returns the $j$-th in-neighbor of vertex $i$; Analogously, given a vertex $i \in [1, n]$ and a index $k\in [1, \dout(i)]$, the  $\outnbr(i, k)$, returns the $k$-th out-neighbor of $i$. 
\end{itemize}

We remark that this query model is generally consistent with the {\em arc-centric graph-access model} used in prior work on PageRank computation~\cite{bressan2023sublinear, wang2024revisiting, soda_Thorup0W026}. In the more realistic arc-centric graph-access model, the vertex set is not just the indices $1,\ldots,n$. To access vertices, we either have to find them following neighbor links from $t$, or we use a special $\jump()$ query that returns a vertex chosen uniformly at random from the graph. In the adjacency-list model, the $\jump()$ query can be simulated by simply generating a uniformly random index $k \in [1,n]$ and returning the $k$-th vertex. 

The $\BiPR$-algorithm and our upper-bound are formulated in the slightly weaker but more realistic arc-centric graph-access model using the $\jump()$-query while our lower-bounds are in the cleaner adjacency-list model.

For our lower-bounds, it is important that valid inputs include all possible orderings of the incidence-lists. In particular, we do not assume orderings like  in \cite{wang2020personalized} where in-neighbors are sorted according to their out-degrees, or in \cite{TetekT22} where the neighbors are sorted according to a global hash-order. However, our lower-bounds do not require general permutation-obliviousness as it is sometimes needed in connection with instance-optimality~\cite{instnce-optimal_AfshaniBC17, focs_NarayananRTT24_thorup}. 

\paragraph{Additional queries allowed in the lower-bounds.}
When establishing our lower bounds, we even allow for some more types of queries used in prior work. One is the adjacency query $\adj(u, v)$, which returns $1$ if there is an edge between vertices $u$ and $v$, and $0$ otherwise (i.e., $\adj(u, v)=1$ if $(u,v)\in E$ or $(v,u)\in E$). Another involves access to non-adjacency lists, including $\noninnbr(v, i)$ and $\nonoutnbr(v, i)$, which return the $i$-th vertex in the graph that is not an in-neighbor or out-neighbor of $v$, respectively. 
These queries are efficient for a vertex with degree above $n/2$, where we can specify only the non-neighbors.
Our lower bounds hold when we include these queries while our upper bound does not need them. 
We do not consider these additional queries when extending our results to weighted graphs and multigraphs.

\paragraph{Notations. } We summarize some frequently used notations below. We use $n = |V|$ and $m = |E|$ to denote the number of vertices and edges in the underlying graph, respectively. 
Throughout the paper, we assume that the value of $n$ is sufficiently large and known in advance. 
Additionally, we use $\din(v)$ and $\dout(v)$ to denote the in-degree and out-degree of a vertex $v$, respectively. We also use $\Nin(v)$ and $\Nout(v)$ to denote the in- and out-adjacency lists of $v$, respectively. Moreover, for any subset of vertices $U \subseteq V$, we define $\Nout(U)$ as the set of all out-neighbors of vertices in $U$, i.e., $\Nout(U)=\bigcup_{u\in U}\Nout(u)$. 

Additionally, for any subset $X$, where $X$ may be a set of vertices ($X \subseteq V$) or edges ($X \subseteq E$), we define $\vpi(u,X,v)$ to be the probability that an $\alpha$-discounted random walk starting from $u$ visits at least one element of $X$ and eventually terminates at $v$. Conversely, $\vpi(u,\overline{X},v)$ denotes the probability that the walk terminates at $v$ without visiting any element of $X$. Thus, for all $u,v$ we have
\begin{align}\label{eqn:relation_W_barW}
\vpi(u,v)=\vpi(u,X,v)+\vpi(u,\overline{X},v).
\end{align}

We are going to consider several modifications $G'$ to our graph $G$.
In our notations, when it is not clear which graph we are talking about, we will specify it as a subscript, e.g., we use $\vpi_{G'}(t)$ to denote the PageRank of vertex $t$ in $G'$. For any algorithm $A_R$ and random seed $R$, we use $\epi_{A_R(G')}(t)$ to denote the estimate for $\vpi(t)$ produced by running the algorithm $A$ with random seed $R$ on graph $G'$. 
Finally, throughout the paper, $\log$ denotes the logarithm to base $2$ unless otherwise specified.

\paragraph{Paper organization. } In the remainder of the paper, we first establish our lower bound in~\Cref{sec:instance_lowerbound_with_n_constraints} without including $h$ for simplicity, and then present our upper-bound algorithm in~\Cref{sec:upperbound}. In~\Cref{sec:counterexample}, we provide a counterexample showing that our algorithm is not instance-optimal for mostly-degree-$n$ graphs. Finally, in~\Cref{sec:lowerbound_sparsegraph}, we complement our lower bound results by incorporating $h$.
\section{Lower Bounds} \label{sec:instance_lowerbound_with_n_constraints}

This section presents our main result on the instance-optimality lower bound. For simplicity, we assume for now that $h=0$.

\begin{theorem}\label{thm:lowerbound_withn}
Consider any directed graph $G$ with maximal in- and out-degrees upper bounded by $(1-\eps)n$ for some $\eps\in[0,1]$. For any $r\in [0,1]$, suppose there exists an algorithm $A$ that estimates $\pi(t)$ in expected time $O\left(\min\left\{\Teps/\log^{1/2} n,\  r/\vpi(t)\cdot \eps /\log^{3/2} n\right\}\right)$. We can then construct a graph $G^+$, such that
\begin{align*}
\pi_{G^+}(t)=\omega(\pi_{G}(t)), \quad \text{and}\quad \Pr_R\left\{\epi_{A_R(G^+)}(t)=\epi_{A_R(G)}(t)\right\}\ge 1-o(1), 
\end{align*}
where the probability $\Pr_R$ is taken over the choice of the random seed $R$ used by the algorithm $A$. 
\end{theorem}

\Cref{thm:lowerbound_withn} implies that any good algorithm, which estimates $\vpi(t)$ within a constant factor with constant probability, requires an expected running time of 
\begin{align}\label{eqn:contradiction}
\Omega\left(\max_{r\in [0,1]}\left\{\min\left\{\Teps/\log^{1/2} n,\  r/\vpi(t)\cdot \eps /\log^{3/2} n\right\} \right\}\right).   
\end{align}
This establishes a lower bound of $\Omega\left(T^* \eps/\log^{3/2} n\right)$, 
where $T^*=\max_{r\in [0,1]}\left(\min\left(\Teps, r/\vpi(t)\right)\right)$ as defined in~\Cref{eqn:T*}.

\subsection{Some technical lemmas} \label{subsec:technical_lemma}
Before delving into the details of our lower-bound proof, we first present some technical lemmas that show certain graph modifications do not significantly change the $\alpha$-discounted random walk probabilities. Notably, these lemmas are general: they hold for any directed graph without degree constraints, and are independent of our lower-bound graph construction. 
To the best of our knowledge, these lemmas are established here for the first time. In a first reading, the reader may want to skip the proofs, and move on to the lower-bound construction. 

\begin{lemma}\label{lem:edge_subdividing}
For any vertices $s,t$ in any directed graph, subdividing any edge $(u,v)$ in the graph (i.e., introducing a vertex $v'$ and replacing the edge $(u,v)$ with the edges $(u,v')$ and $(v',v)$) can only decrease $\pi(s,t)$ by a factor at most $(2-\alpha)/(1-\alpha)$. This also allows all cases where some of the vertices $s, t, u, v$ are identical. 
\end{lemma}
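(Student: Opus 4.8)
The plan is to analyze how the $\alpha$-discounted random walk from $s$ to $t$ changes when a single edge $(u,v)$ is subdivided through a new vertex $v'$. I will write $G$ for the original graph and $G'$ for the subdivided graph. Since the walk only "notices" the subdivision when it traverses the specific edge $(u,v)$, I would like to set up a correspondence between walks in $G$ that use the step $u \to v$ and walks in $G'$ that use the two steps $u \to v' \to v$. Concretely, I would condition on the sequence of "real" vertices visited (i.e., contracting $v'$ away), and compare the probability that a walk in $G'$ realizing a given such sequence reaches $t$ with the corresponding probability in $G$.

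The key quantitative point is what happens at a single traversal of $u \to v$. In $G$, once the walk is at $u$ and decides to continue (probability $1-\alpha$) and picks $v$ among $\dout(u)$ out-neighbors, it is immediately at $v$. In $G'$, the out-degree of $u$ is unchanged ($v'$ replaces $v$ in $u$'s out-list), so the walk arrives at $v'$ with exactly the same probability; but now at $v'$ it must again survive a termination coin, i.e., with probability $1-\alpha$ it proceeds, and since $v'$ has out-degree $1$ it deterministically goes to $v$. So each traversal of this edge costs an extra factor of $(1-\alpha)$ in the "continue to $v$" probability, while the branching at $u$ is untouched. Hence, conditioned on the walk in $G$ taking the step $u\to v$ exactly $k$ times before terminating at $t$, the corresponding family of walks in $G'$ carries probability smaller by a factor $(1-\alpha)^{k} \ge$ ... and here is the subtlety: a walk could use the edge many times, so naively this gives an unbounded loss. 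The resolution is that using the edge $k$ times in $G$ already carries a factor that is at least geometrically small, and I would instead argue a clean pointwise inequality: there is a probability-preserving-up-to-$(1/(2-\alpha))$ injection, or better, I would sum a geometric series. Precisely, let $\pi_G(s,t) = \sum_{k\ge 0} p_k$ where $p_k$ is the probability of walks from $s$ reaching $t$ using edge $(u,v)$ exactly $k$ times; then $\pi_{G'}(s,t) = \sum_{k \ge 0} (1-\alpha)^k p_k \ge \sum_{k\ge 0} (1-\alpha)^k p_k$, and I need a lower bound on this in terms of $\sum_k p_k$. Since $(1-\alpha)^k \ge (1-\alpha)$ is false for $k=0$, I should not factor bluntly; instead note each walk traversing the edge $k$ times decomposes into an "up to first traversal" part times $k$ loop-excursions that return to $u$ and re-take the edge; each such excursion in $G$ has probability at most $(1-\alpha)$ times something, so $p_k \le (1-\alpha)^{k-1} p_1$-type bounds hold...

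A cleaner route, which I expect to be the actual argument: decompose by whether the walk ever uses edge $(u,v)$ at all. Walks never using it contribute identically to $\pi_G$ and $\pi_{G'}$. For walks using it, condition on the state at the first arrival at $u$ that is followed by the step to $v$: from that point, the continuation contributes, in $G$, the probability $q := \pi_G(v, t) \cdot (1-\alpha)/\dout(u) \cdot (\text{prob. of being at }u)$, and in $G'$ the same with an extra $(1-\alpha)$ and with $\pi_{G'}(v,t)$ in place of $\pi_G(v,t)$. This is recursive in nature, so I would set $f = \pi_G(s,t)$, $f' = \pi_{G'}(s,t)$, and more generally track all PPR values simultaneously, showing by a fixed-point/monotonicity argument on the linear systems defining PPR that every entry satisfies $\pi_{G'}(x,y) \ge \pi_G(x,y)/(2-\alpha)$. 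The inequality $(1-\alpha) + (1-\alpha)^2 + \cdots \le$ geometric with ratio $(1-\alpha)$ sums to $(1-\alpha)/\alpha$, and one checks $1 + (1-\alpha)/\alpha \cdot(\text{loss}) $ bottoms out at $1/(2-\alpha)$ in the worst case where the walk is forced through the edge every step — i.e., a self-loop-like configuration; I would verify the constant $2-\alpha$ is tight on, e.g., a single edge $(s,t)$ with a loop that revisits.

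The main obstacle is handling multiple (possibly unboundedly many) traversals of the subdivided edge without the per-traversal $(1-\alpha)$ factors compounding into an unbounded loss; the right framing is that each extra traversal in $G'$ costs $(1-\alpha)$ but each extra traversal in $G$ already "costs" a return excursion to $u$, and the worst case is a tight geometric series summing to the stated constant $(2-\alpha)$. Once that bookkeeping is set up — most naturally as a comparison of the two linear systems $\pi = \alpha e + (1-\alpha)\P\pi$ with a monotonicity/Neumann-series argument — the rest is routine, and the cases where $s,t,u,v$ coincide (e.g., $u = v$, a self-loop being subdivided) are covered automatically since the argument never assumed distinctness, only that $v'$ is fresh with out-degree $1$.
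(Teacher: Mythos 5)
Your plan follows essentially the same route as the paper's proof: split $\pi(s,t)$ into the contribution of walks that never use $(u,v)$ (unchanged by the subdivision) and, for the remainder, decompose by the exact number $k$ of traversals of the edge, observe that each traversal in the subdivided graph costs one extra survival factor $(1-\alpha)$ at the fresh out-degree-one vertex $v'$, and sum the resulting geometric series, using the fact that each additional traversal in the original graph already requires a return excursion to $u$. This is exactly the paper's bookkeeping (its quantity $P(x)$ is your ``probability of reaching $u$ poised to take the edge'').

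There is, however, a concrete problem with the step ``the worst case is a tight geometric series summing to the stated constant $(2-\alpha)$,'' and your own proposed sanity check exposes it. Take the graph consisting of the single edge $(s,t)$ and subdivide it: $\pi_G(s,t)=1-\alpha$ while $\pi_{G'}(s,t)=(1-\alpha)^2$, so the loss factor is $\tfrac{1}{1-\alpha}$, which exceeds $2-\alpha$ whenever $\alpha>(3-\sqrt5)/2\approx 0.382$. In your notation this is the case $p_0=0$, $p_1>0$, $p_k=0$ for $k\ge2$: the very first traversal already costs $\tfrac{1}{1-\alpha}$, before any compounding, so no geometric-series argument can recover the constant $2-\alpha$ from $\sum_k(1-\alpha)^kp_k$. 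Carrying the computation through with $p_k=P(s)(1-\alpha)\bigl(P(v)(1-\alpha)\bigr)^{k-1}\pi(v,\overline{(u,v)},t)$ yields the ratio $\tfrac{1}{1-\alpha}\cdot\tfrac{1-P(v)(1-\alpha)^2}{1-P(v)(1-\alpha)}\le\tfrac{2-\alpha}{1-\alpha}$, not $2-\alpha$. (The paper's write-up reaches $2-\alpha$ only by dropping a factor $(1-\alpha)$ from the numerator when summing the series for the subdivided graph; its self-loop example attains ratio exactly $2-\alpha$ only because the $k=0$ term dilutes the through-the-edge ratio $\tfrac{2-\alpha}{1-\alpha}$, and that dilution is absent in general.) Since $\alpha$ is a constant, $\tfrac{2-\alpha}{1-\alpha}=O(1)$ and every downstream use of the lemma needs only a constant factor, so your decomposition, once the series is actually summed, does prove the lemma with this corrected constant; but as written the decisive calculation is left at ``I would sum a geometric series,'' and the target constant $2-\alpha$ is not the right one to aim for.
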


\begin{proof}
Let $H$ and $H'$ denote the input graph and the graph after subdividing edge $(u,v)$, respectively. We then use $\vpi_{H}(s,t)$ and $\vpi_{H'}(s,t)$ to denote the values of $\vpi(s, t)$ on graph $H$ and $H'$, respectively. We will prove that for any vertices $s$ and $t$ in $H$: 
\begin{align}\label{eqn:vt}
\vpi_{H'}(s, t)\ge \vpi_{H}(s, t)(1-\alpha)/(2-\alpha). 
\end{align}

We begin by considering the simple and special case where the graph $H$ only consists of a single vertex with a self-loop, i.e., $s=u=v=t$. In this case, we have
\begin{align*}
\vpi_H(s, t)=\sum_{i=0}^{\infty} \alpha (1-\alpha)^i=1, \quad \text{ and }\quad \vpi_{H'}(s, t)=\sum_{i=0}^{\infty} \alpha (1-\alpha)^{2i}=1/(2-\alpha), 
\end{align*}
This is consistent with \Cref{eqn:vt} and shows that equality can be achieved in \Cref{eqn:vt}.

We now extend the proof to cover all cases. 

We define the edge sets $X=\{(u,v)\}$ and $X'=\{(u,v'), (v', v)\}$. By~\Cref{eqn:relation_W_barW}, we have:
\begin{align*}
\vpi_H(s, t)=\vpi_H(s, \overline{X}, t)+\vpi_H(s, X, t), \quad \text{ and } \quad \vpi_{H'}(s, t)=\vpi_{H'}(s, \overline{X'}, t)+\vpi_{H'}(s, X', t). 
\end{align*}
Since the graph $H$ differs from graph $H'$ only by subdividing the edge $(u,v)$, we have $\vpi_H(s, \overline{X}, t)=\vpi_{H'}(s, \overline{X'}, t)$. We therefore focus on comparing $\vpi_H(s, X, t)$ and $\vpi_{H'}(s, X', t)$ in what follows. If $\vpi_H(s, X, t)\le \vpi_{H'}(s, X', t)$, then $\vpi_H(s,t)\le \vpi_{H'}(s,t)$ and \Cref{eqn:vt} holds trivially. Thus, it suffices to consider the case $\vpi_H(s, X, t)>\vpi_{H'}(s, X', t)$. In this case, 
\begin{align}\label{eqn:bridge_X}
\frac{\vpi_H(s, t)}{\vpi_{H'}(s, t)}=\frac{\vpi_H(s, \overline{X}, t)+\vpi_H(s, X, t)}{\vpi_{H'}(s, \overline{X'}, t)+\vpi_{H'}(s, X', t)}\le \frac{\vpi_H(s,X,t)}{\vpi_{H'}(s, X', t)}.  
\end{align}
We are going to show that $\frac{\vpi_H(s,X,t)}{\vpi_{H'}(s, X', t)}\le (2-\alpha)/(1-\alpha)$ in the following. 

For any vertex $x$, let $P_H(x)$ denote the probability that an $\alpha$-discounted random walk in the graph $H$, starting from $x$, reaches $u$ without ever traversing the edge $(u,v)$, and that, conditional on not terminating at $u$, its next step traverses $(u,v)$. Thus, $P_H(s)(1-\alpha)$ is exactly the probability that a walk starting from $s$ reaches $u$ without using $(u,v)$ and then traverses $(u,v)$ in the next step. Moreover, $(P_H(v)(1-\alpha))^{i-1}$ represents the probability that a walk starting from $v$ repeatedly moves to $u$ without using $(u,v)$ and then returns to $v$ by traversing $(u,v)$, repeating this cycle $i-1$ times. Consequently, in the graph $H$, the probability that a walk from $s$ to $v$ passes through $(u,v)$ exactly $i>0$ times is
\begin{align}\label{eqn:i_times_H}
P_H(s)(1-\alpha)\big(P_H(v)(1-\alpha)\big)^{i-1}.
\end{align}

Recall that the edge set $X={(u,v)}$, and that $\vpi_H(s,X,t)$ and $\vpi_H(s,\overline{X},t)$ denote the probabilities that an $\alpha$-discounted random walk from $s$ terminates at $t$ with and without visiting any edge in $X$ (here in this case it is just the edge $(u,v)$), respectively. We therefore have
\begin{align}\label{eqn:compare1}
\vpi_H(s,X,t)
= \left(\sum_{i=1}^{\infty}
P_H(s)(1-\alpha)\big(P_H(v)(1-\alpha)\big)^{i-1}\right)
\vpi_H(v,\overline{X},t)
= \frac{P_H(s)(1-\alpha)\vpi_H(v,\overline{X},t)}{1-P_H(v)(1-\alpha)} .
\end{align}

Analogously, for any vertex $x$ in the graph $H'$, let $P_{H'}(x)$ denote the probability that an $\alpha$-discounted random walk in $H'$, moving from $x$ to $u$ without ever traversing the edges $(u,v’)$ or $(v’,v)$, and that, conditional on not terminating in the next two steps, it subsequently traverses the path $u \to v' \to v$. Hence, in $H’$, the probability that a walk from $s$ to $v$ passes through the path $u \to v' \to v$ exactly $i>0$ times is
\begin{align*}
P_{H'}(s)(1-\alpha)^2\big(P_{H'}(v)(1-\alpha)^2\big)^{i-1}.
\end{align*}
Compared with \Cref{eqn:i_times_H}, here the factor $(1-\alpha)^2$ appears because $\dout(v')=1$ and traversing $u \to v' \to v$ requires two steps.

Recall that the edge set $X'=\{(u,v'), (v', v)\}$, and that $\vpi_{H'}(s,X',t)$ and $\vpi_{H'}(s,\overline{X'},t)$ denote the probabilities that an $\alpha$-discounted random walk from $s$ terminates at $t$ with and without visiting any edge in $X'$, respectively. It follows: 
\begin{align}\label{eqn:compare2}
\vpi_{H'}(s, X', t)\ge \left(\sum_{i=1}^{\infty}P_{H'}(s)(1-\alpha)^2(P_{H'}(v) (1-\alpha)^2)^{i-1}\right)\vpi_{H'}(v, \overline{X'}, t)=\frac{P_{H'}(s)(1-\alpha)^2\vpi_{H'}(v, \overline{X'}, t)}{1-P_{H'}(v)(1-\alpha)^2}. 
\end{align}

We now compare \Cref{eqn:compare1} and \Cref{eqn:compare2}. Observe that $P_H(x)=P_{H’}(x)$ for every vertex $x$, since these probabilities concern random walks that never traverse any of the edges $(u,v)$, $(u,v')$, or $(v',v)$. For simplicity, we therefore drop the subscripts and write $P(x)$. Then we have
\begin{align*}
\frac{\vpi_H(s,X,t)}{\vpi_{H'}(s, X', t)}
&\le \frac{1-P(v)(1-\alpha)^2}{(1-\alpha)(1-P(v)(1-\alpha))}=\frac{1-P(v)(1-\alpha)+P(v)(\alpha-\alpha^2)}{(1-\alpha)}(1-P(v)(1-\alpha))\\
&=\frac{1}{(1-\alpha)}+\frac{P(v)(\alpha-\alpha^2)}{(1-\alpha)(1-P(v)(1-\alpha))}. 
\end{align*}
This is maximized when $P(v)$ achieves $1$ since $P(v)\in [0,1]$. This implies: 
\begin{align*}
\frac{\vpi_H(s,X,t)}{\vpi_{H'}(s, X', t)}\le \frac{1}{(1-\alpha)}+\frac{(\alpha-\alpha^2)}{(1-\alpha)(1-(1-\alpha))}=\frac{2-\alpha}{1-\alpha}. 
\end{align*}
Combining with~\Cref{eqn:bridge_X} completes the proof. 
\end{proof}

\begin{lemma}\label{lem:either}
For any subset of vertices $U$ and any vertices $s, t$ in a directed graph with $t \notin U$, 
\begin{align*}
\max\{\pi(s, \overline{U}, t),\;\pi(\fmu(U), \overline{U}, t)\} = \Omega(\pi(s, t)),
\end{align*}
where $\fmu(U) = \arg\max_{v \in \Nout(U)\setminus U} \pi(v, \overline{U}, t)$. This result also holds when $s=t$. 
\end{lemma}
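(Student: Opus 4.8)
The plan is to prove the contrapositive-flavored statement directly: condition an $\alpha$-discounted walk from $s$ on whether it ever reaches $U$, and show that the $U$-avoiding contribution already beats $\pi(s,t)$ unless the walk passes through $U$ with substantial probability, in which case we can "restart" the walk at the best out-neighbor of $U$ to recover $\pi(s,t)$ up to a constant. First I would split
\begin{align*}
\pi(s,t)=\pi(s,\overline{U},t)+\pi(s,U,t).
\end{align*}
If $\pi(s,\overline U,t)\ge \tfrac12\pi(s,t)$ we are done, so assume $\pi(s,U,t)\ge\tfrac12\pi(s,t)$. The point is that a walk contributing to $\pi(s,U,t)$ must, at the moment it first enters $U$, have just taken a step from some vertex $w\in\Nout(U)\setminus U$ \emph{or} have started inside... wait — since $t\notin U$ and the walk ends at $t$, after its last visit to $U$ it must step out of $U$; consider instead the \emph{last} vertex of $U$ visited and the step immediately after it, which lands at some $w\in\Nout(U)$. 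Actually the cleanest route is: let $w$ be the last vertex in $\Nout(U)\setminus U$ visited before termination along a walk counted in $\pi(s,U,t)$ — by the Markov property, the suffix of the walk from $w$ onward avoids $U$ and ends at $t$, so it contributes to $\pi(w,\overline U,t)$.

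The key step is then a counting/averaging argument over which vertex $w\in\Nout(U)\setminus U$ plays this role. I would write $\pi(s,U,t)\le \sum_{w\in\Nout(U)\setminus U}\pi(s,\text{(hits }U\text{, then }w\text{, then avoid }U)\text{,}t)$ and bound each term by (probability of reaching $w$ at all) $\times\, \pi(w,\overline U,t)$, hence by $C\cdot\pi(w,\overline U,t)$ for an absolute constant $C$ absorbing the $\alpha$-discount factor for the detour (each extra step costs only a constant factor, and the probability of eventually reaching $w$ is at most $1$, but we need the suffix to start cleanly at $w$ — here a single step from $U$ to $w$ costs a factor $\ge (1-\alpha)/\Delta$, which is \emph{not} constant, so I must be careful and instead bound the whole middle segment by $1$ and just keep $\pi(w,\overline U,t)\le \pi(\fmu(U),\overline U,t)$). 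Taking the max over $w$ gives $\pi(s,U,t)\le \pi(\fmu(U),\overline U,t)\cdot(\text{something})$; the "something" is the expected number of distinct vertices of $\Nout(U)\setminus U$ visited, which is trivially at most... no — this is the obstacle.

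The main obstacle is exactly this: naively the number of candidate exit vertices $w$ is unbounded, so a union bound over $w$ loses a factor of $|\Nout(U)\setminus U|$. The fix I anticipate is to not sum over $w$ at all, but instead to use a single-step argument at the \emph{first} entry into $U$: if the walk enters $U$ for the first time by stepping from $w\in\Nout(U)\setminus U$ into $U$, then before that step the walk was at $w$ having avoided $U$, so it contributes to $\pi(s,\overline U,w)$, and then $\pi(s,U,t)\le\sum_w \pi(s,\overline U,w)\cdot\pi^{\,U\text{-entry}}$... which again sums over $w$. So the real resolution must be: observe that $\pi(\fmu(U),\overline U,t)$ is the \emph{maximum}, and that a random walk reaching $t$ via $U$ passes through $\fmu(U)$ (as a specific single vertex) is not guaranteed — hence one genuinely needs that \emph{some} exit vertex is good, and the correct statement bounds $\pi(s,U,t)$ by the max over exit vertices of $\pi(w,\overline U,t)$ times the probability of reaching $U$, with the union-bound factor killed because we only need \emph{one} $w$ with $\pi(w,\overline U,t)\ge \pi(s,U,t)$, which follows since the conditional distribution of the walk suffix, averaged over the (random) exit vertex $W$, has $\mathbb{E}[\pi(W,\overline U,t)\mid \text{reach }U]\ge$ (conditional probability of ending at $t$ avoiding $U$ after exiting) $= \pi(s,U,t)/\Pr[\text{reach }U]\ge \pi(s,U,t)$, so by averaging some fixed $w$ attains at least this value, and $\fmu(U)$ attains at least that. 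Hence $\pi(\fmu(U),\overline U,t)\ge\pi(s,U,t)\ge\tfrac12\pi(s,t)$, completing the proof. I would present the argument in this last, clean form: (i) split on reaching $U$; (ii) in the $U$-heavy case, define $W$ as the last vertex outside $U$ before the walk's final $U$-visit, or simpler, the vertex from which the walk makes its final exit from $U$ to the $U$-avoiding suffix; (iii) apply the strong Markov property at that exit and the averaging inequality $\mathbb{E}[\pi(W,\overline U,t)\mid \cdot]\ge \pi(s,U,t)$; (iv) conclude via $\pi(\fmu(U),\overline U,t)\ge\mathbb{E}[\pi(W,\overline U,t)\mid\cdot]$. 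I do not expect Lemma~\ref{lem:edge_subdividing} to be needed here.
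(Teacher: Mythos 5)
Your final ``clean form'' is essentially the paper's proof: split $\pi(s,t)=\pi(s,U,t)+\pi(s,\overline U,t)$, and in the $U$-heavy case perform a last-exit decomposition of $\pi(s,U,t)$ over the vertex $w\in\Nout(U)\setminus U$ reached immediately after the walk's final visit to $U$, then bound the resulting weighted average of $\pi(w,\overline U,t)$ by its maximum $\pi(\fmu(U),\overline U,t)$. The paper writes this as $\pi(s,U,t)\le\sum_{\ell\ge 0}\sum_{v\in\Nout(U)}(1-\alpha)^{\ell}\,\pi^{(\ell)}(s,v)\,\pi(v,\overline U,t)$ and uses $\sum_{v}\pi^{(\ell)}(s,v)\le 1$ together with $\sum_{\ell}(1-\alpha)^{\ell}=1/\alpha$ --- exactly your observation that no union bound over exit vertices is needed, because at each fixed time the walk occupies a single vertex. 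The one slip is quantitative: your claimed inequality $\mathbb{E}[\pi(W,\overline U,t)\mid\text{reach }U]\ge\pi(s,U,t)/\Pr[\text{reach }U]$, and hence $\pi(\fmu(U),\overline U,t)\ge\pi(s,U,t)\ge\tfrac12\pi(s,t)$, is not correct as stated. A walk may exit $U$ several times (re-entering in between), so the correct normalizer is the expected discounted number of exits from $U$, which can exceed $\Pr[\text{reach }U]$ and is only bounded by $O(1/\alpha)$; equivalently, conditioning on $W$ being the \emph{last} exit tilts the suffix law toward $U$-avoidance, costing a factor $\Pr[\text{no return to }U]\ge\alpha$. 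Either way one obtains $\pi(\fmu(U),\overline U,t)\ge\alpha\,\pi(s,U,t)$ rather than $\ge\pi(s,U,t)$; since $\alpha$ is a constant this still gives the claimed $\Omega(\pi(s,t))$, so your argument goes through once that constant is tracked honestly.
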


\begin{proof}
Recall from~\Cref{eqn:relation_W_barW} that
$\vpi(s,t)=\vpi(s,U,t)+\vpi(s,\overline{U},t)$. 
If $\vpi(s,U,t)< \vpi(s,t)/2$, then we already have $\vpi(s,\overline{U},t)\ge \vpi(s,t)/2=\Omega(\vpi(s,t))$. If instead  $\vpi(s,U,t)\ge \vpi(s,t)/2$, we will show:  
\begin{align}\label{eqn:bound_U}
\vpi(\fmu(U),\overline{U},t)\ge \alpha \vpi(s,U,t). 
\end{align}
Substituting $\vpi(s,U,t)\ge \vpi(s,t)/2$ into this inequality gives
$\vpi(\fmu(U),\overline{U},t) \ge \alpha \pi(s,t)/2=\Omega(\pi(s,t))$. 
Combining these two cases completes the proof of~\Cref{lem:either}. 

To prove~\Cref{eqn:bound_U}, we recall that $\vpi(s, U, t)$ denotes the probability that an $\alpha$-discounted random walk starting from $s$ passes through at least one node in $U$ and eventually terminates at $t$. We also recall that $\vpi(s, \overline{U}, t)$ denotes the probability that such a walk from $s$ terminates at $t$ without visiting any node in $U$. Let $\tp^{(i)}(s, v)$ denote the probability that a standard (non-terminating) random walk starting at $s$ visits $v$ at its $i$-th step, that is, 
$\vpi(u, t)=\sum_{i=0}^{\infty}\alpha (1-\alpha)^i \vpi^{(i)}(u,t)$. 
Then for any $v\in \Nout(U)\setminus U$, $(1-\alpha)^{\ell}\tp^{(\ell)}(s, v)$ is an upper bound on the probability that an $\alpha$-discounted random walk starting from $s$ enters $U$ multiple times and reaches node $v$ at its $\ell$-th step without terminating. 
Moreover, since $t \notin U$, i.e., $t$ locates outside $W$, any paths that passes through at least one node in $U$ must eventually exit $U$ through some vertex $v \in \Nout(U)\setminus U$ and never return to $U$ before terminating at $t$. 
Therefore, we have
\begin{align}\label{eqn:upper_sUw}
\pi(s, U, t) \le \sum_{v\in \Nout(U)\setminus U}\sum_{\ell=0}^\infty (1-\alpha)^{\ell}\tp^{(\ell)}(s, v)\pi(v, \overline{U}, t). 
\end{align} 
Recall that $\fmu(U)$ is defined as the vertex $v\in \Nout(U) \setminus U$ that maximizes $\vpi(v, \overline{U}, t)$. Therefore, $\vpi(v, \overline{U}, t)\le \vpi(\fmu(U), \overline{U}, t)$. Substituting this into~\Cref{eqn:upper_sUw} gives:
\begin{align}\label{eqn:upper_sUt}
\pi(s, U, t)
\le \sum_{v\in \Nout(U)\setminus U} \sum_{\ell=0}^\infty  (1-\alpha)^{\ell} \tp^{(\ell)}(s,v) \pi(\fmu(U), \overline{U}, t). 
\end{align} 
Since
$\sum_{v\in \Nout(U)\setminus U} \tp^{(\ell)}(s,v) \le \sum_{v\in V} \tp^{(\ell)}(s,v) = 1$, 
we further obtain: 
\begin{align*}
\pi(s, U, t) \le \sum_{\ell=0}^\infty (1-\alpha)^{\ell}\pi(\fmu(U), \overline{U}, t)=\pi(\fmu(U), \overline{U}, t)/\alpha, 
\end{align*}
which establishes~\Cref{eqn:bound_U}. The proof is completed.
\end{proof}

\begin{lemma}\label{lem:loop}
For any subset of vertices $U$ and any vertices $v, t$ in a directed graph, removing some or all in-edges of $v$ can decrease $\pi(v,t)$ and $\pi(v, \overline{U}, t)$ by at most a factor of $\alpha$. This also holds when $v$ and $t$ are identical.
\end{lemma}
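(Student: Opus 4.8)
The plan is to work directly with the trajectories of $\alpha$-discounted random walks and to compare $G$ with $G' = G \setminus F$, where $F$ is the (arbitrary) set of removed in-edges of $v$. The first ingredient is a monotonicity fact for walk probabilities: for any finite walk $W = (v_0,\dots,v_\ell)$ that is present in $G'$, one has $\Pr_{G'}[W] \ge \Pr_G[W]$. Indeed $\Pr_G[W] = \alpha \prod_{i<\ell} (1-\alpha)/\dout^{G}(v_i)$ and likewise for $G'$, and since $G'\subseteq G$ we have $\dout^{G'}(v_i) \le \dout^{G}(v_i)$ at every step (with no division by zero, since each $v_i$ with $i<\ell$ keeps the out-edge it uses along $W$); deleting edges only redistributes more probability onto the surviving out-edges of a vertex.

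The second ingredient is probabilistic. An $\alpha$-discounted walk started at $v$ returns to $v$ with probability at most $1-\alpha$, since with probability $\alpha$ it stops at its very first step and never leaves $v$. I will turn this into the bound
\begin{align*}
g := \Pr[\text{walk from } v \text{ terminates at } t,\ \text{avoids } U,\ \text{and never revisits } v]\ \ge\ \alpha\,\pi_G(v,\overline{U},t)
\end{align*}
(assuming $v\notin U$) by a last-visit decomposition: split the event ``walk from $v$ terminates at $t$ and avoids $U$'' according to whether the walk ever revisits $v$, and in the ``revisits'' case apply the strong Markov property at the first return time $\tau$ (the walk is still running there, and its future is a fresh walk from $v$); this gives $\pi_G(v,\overline{U},t) = g + \rho\,\pi_G(v,\overline{U},t)$, where $\rho$ is the probability that the walk leaves and later returns to $v$ without meeting $U$ in between. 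Since $\rho \le 1-\alpha$, this yields $g = (1-\rho)\,\pi_G(v,\overline{U},t) \ge \alpha\,\pi_G(v,\overline{U},t)$.

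To conclude, observe that a walk from $v$ that never revisits $v$ uses no in-edge of $v$ whatsoever, hence no edge of $F$; therefore every $G$-walk counted in $g$ is also a walk in $G'$ that terminates at $t$ and avoids $U$. Restricting the sum defining $\pi_{G'}(v,\overline{U},t)$ to exactly these walks and applying the monotonicity fact term by term gives
\begin{align*}
\pi_{G'}(v,\overline{U},t)\ \ge\ \sum_{W\ \text{counted in}\ g}\Pr_{G'}[W]\ \ge\ \sum_{W\ \text{counted in}\ g}\Pr_{G}[W]\ =\ g\ \ge\ \alpha\,\pi_G(v,\overline{U},t);
\end{align*}
taking $U=\emptyset$ gives the claim for $\pi(v,t)$. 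The remaining corner cases are immediate: if $v\in U$ then both sides of the $\overline{U}$-statement are $0$, and if $v=t$ then $\pi_{G'}(v,v)\ge\alpha\ge\alpha\,\pi_G(v,v)$ because stopping at the first step already contributes $\alpha$ while $\pi_G(v,v)\le 1$.

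The part that needs care --- more a subtlety than a genuine obstacle --- is that deleting in-edges of $v$ also deletes out-edges of the in-neighbors of $v$, so one cannot simply couple walks in $G$ and $G'$ step by step. The point is that this deletion works in our favor: the degree-monotonicity of walk probabilities converts it into an increase, and the walks we retain (those that never return to $v$) never traverse a deleted edge, so they survive verbatim in $G'$.
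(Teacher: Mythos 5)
Your proof is correct and follows essentially the same route as the paper's: both decompose according to the last visit to $v$ (your first-return renewal identity $\pi = g + \rho\pi$ with $\rho\le 1-\alpha$ is the same decomposition the paper writes as $\sum_i(1-\alpha)^i\pi^{(i)}(v,v)\,\pi(v,\overline{v_{\mathrm{in}}},t)\le \pi(v,\overline{v_{\mathrm{in}}},t)/\alpha$), and both observe that the surviving tail walk uses no in-edge of $v$ and hence persists after the deletion. If anything, your explicit degree-monotonicity step is slightly more careful than the paper, which asserts $\pi_H(v,\overline{v_{\mathrm{in}}},t)=\pi_{H'}(v,\overline{v_{\mathrm{in}}},t)$ as an equality even though deleting in-edges of $v$ shrinks the out-degrees of $v$'s in-neighbors, so this is really only an inequality in the favorable direction, exactly as you note.
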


\Cref{lem:loop} is used only in \Cref{sec:lowerbound_sparsegraph} and~\Cref{sec:extension_multigraphs_weighted}, where we complement our lower-bound results by incorporating $h$. We present its proof here because it is analogous to the proof of \Cref{lem:either}.

\begin{proof}[Proof of~\Cref{lem:loop}]
Let $\Ain(v)$ denote the edge set of all in-edges of $v$, that is, $\Ain(v)=\{(u,v)\mid u\in \Nin(v)\}$. Then by previous definition, $\vpi(v, \Ain(v), t)$ and $\vpi(v, \overline{\Ain(v)}, t)$ represent the probability that an $\alpha$-discounted random walk starting from $v$ terminates at $t$ with or without passing through any in-edges of $v$, respectively. We also recall from~\Cref{eqn:relation_W_barW} that 
\begin{align}\label{eqn:overline_Ain}
\vpi(v, t)=\vpi(v, \Ain(v), t)+\vpi(v, \overline{\Ain(v)}, t). 
\end{align}

Moreover, as we have defined in the proof of~\Cref{lem:either},  we use $\tp^{(i)}(u, v)$ to denote the probability that a non-terminating random walk starting at $u$ visits $v$ at step $i$. Since passing through any in-edge of $v$ will reach $v$ for sure, then $\sum_{\ell=1}^{\infty}(1-\alpha)^\ell \pi^{(i)}(v,v)$ is an upper bound on the probability that $\alpha$-discounted random walk starting at $v$ passing through $\Ain(v)$ at least once without terminating. We note that $t\notin \Ain(v)$ since $\Ain(v)$ is an edge set. So any paths that pass through $\Ain(v)$ must eventually reach $v$ and never enter $\Ain(v)$ again before terminating at $t$. Therefore, 
\begin{align*}
\vpi(v, \Ain(v), t)
&\le \sum_{i=1}^{\infty}(1-\alpha)^i \pi^{(i)}(v,v)\pi(v, \overline{\Ain(v)},t). 
\end{align*}
Combining this with~\Cref{eqn:overline_Ain} gives $\vpi(v, t)\le \sum_{i=0}^{\infty}(1-\alpha)^i \pi^{(i)}(v,v)\pi(v, \overline{\Ain(v)},t)$. Since $\pi^{(i)}(v,v)\le 1$, we further have
\begin{align*}
\vpi(v, t)
\le \sum_{i=0}^{\infty}(1-\alpha)^i \vpi(v, \overline{\Ain(v)},t)\le \vpi(v, \overline{\Ain(v)},t)/\alpha. 
\end{align*}

Furthermore, let $H$ and $H'$ denote the directed graph before and after removing some or all in-edges of $v$, respectively. We then have $\vpi_H(v, \overline{\Ain(v)},t)=\vpi_{H'}(v, \overline{\Ain(v)}, t)$, and $\vpi_{H'}(v, \overline{\Ain(v)}, t)\le \vpi_{H'}(v, t)$. This gives 
\begin{align*}
\vpi_H(v,t)\le  \vpi_{H}(v, \overline{\Ain(v)},t)/\alpha=\vpi_{H'}(v, \overline{\Ain(v)},t)/\alpha\le \vpi_{H'}(v,t)/\alpha,    
\end{align*}
thus establishing the first part of the claim in \Cref{lem:loop} concerning the decrease of $\vpi(v,t)$.

The proof of the decrease in $\vpi(v, \overline{U}, t)$ is analogous. We have
\begin{align*}
\vpi_H(v, \overline{U}, t)
&\le \sum_{i=0}^{\infty}(1-\alpha)^i \pi^{(i)}_{H}(v,v)\pi_{H}(v, \overline{U \cup \Ain(v)},t)\le \sum_{i=0}^{\infty}(1-\alpha)^i \vpi_{H}(v, \overline{U \cup \Ain(v)},t)\\
&=\vpi_H(v, \overline{U \cup \Ain(v)},t)/\alpha
=\vpi_{H'}(v, \overline{U \cup \Ain(v)},t)/\alpha \le \vpi_{H'}(v, \overline{U}, t)/\alpha, 
\end{align*}
thus completing the proof. 
\end{proof}

\subsection{Measuring Lower-Bound Complexity}
We briefly discuss here how lower-bound complexity is measured in this paper. 
We now define what it means that a vertex or vertex pair is \emph{visited} 
during an algorithm's execution.

First, a vertex pair $\{u, v\}$, whether edge or non-edge, is visited if the algorithm: 
\begin{itemize}
\item invokes one of $\outnbr$, $\innbr$, $\nonoutnbr$, or $\noninnbr$ at either $u$ or $v$ and receives the other; or 
\item invokes $\adj(u, v)$. 
\end{itemize}
A vertex $v$ is visited if 
\begin{itemize}
\item it is in a visited vertex pair $\{u, v\}$; or
\item the algorithm invokes $\indeg$ or $\outdeg$ at $v$; or 
\item the algorithm invokes $\jump$ and receives $v$. In fact, here in our lower bounds, we allow a more general query. As in the adjacency list model~\cite[Chapter 10]{books_Goldreich17}, where vertices are numbered from $1$ to $n$ and are identified by their indices, the algorithm can directly access $v$ by its index $i$. With this operation, we can implement the $\jump$ operations by generating a uniformly random index $i$ and receiving $v$ when querying the $i$-th vertex. 
\end{itemize}
Since each query can visit at most $O(1)$ vertex pairs and vertices, we analyze the minimum number of vertex pairs and vertices that a good algorithm must visit to estimate $\vpi(t)$, and use this as a lower bound on the query and computational complexity of the problem. 

We now return to our lower bound proof, considering the algorithm $A$ from the assumption in~\Cref{thm:lowerbound_withn}. Note that when the random seed $R$ of $A$ is fixed, then the execution of $A$ is completely determined by
the answers it gets from querying $G$. Whether a vertex pair or a vertex is visited by $A$ is determined by the queries made during its execution. Therefore, the probability that $A$ visits a vertex pair $\{u, v\}$ or vertex $v$ in $G$ is taken over the choice of the random seed $R$ used by $A$ on $G$, and we denote these probabilities as $p_{A,G}(u, v)$ and $p_{A,G}(v)$, respectively. Accordingly, the expected number of vertex pairs and vertices in $G$ visited by $A$ is given by: 
\begin{equation*}
T_{A, G} = \sum_{\{u,v\}\in V^2} p_{A, G}(u,v) + \sum_{v \in V} p_{A,G}(v).
\end{equation*}
Therefore, $T_{A,G}$ serves as a lower bound on the expected time that algorithm $A$ spends on $G$ to estimate $\vpi(t)$.

The assumption in~\Cref{thm:lowerbound_withn} thus implies that 
\begin{align}\label{eqn:TAG_bound_min}
&T_{A,G} = O\left(\min\left\{\Teps/\log^{1/2} n,\  r/\vpi(t)\cdot \eps /\log^{3/2} n\right\}\right) \text{ for some } r \in [0,1].
\end{align}
We need to prove \Cref{thm:lowerbound_withn} for every $r\in[0,1]$, so from now on, we fix an arbitrary $r\in[0,1]$ and henceforth have that
\begin{align}\label{eqn:TAG_r}
T_{A,G} = O\left(\min\left\{\Teps/\log^{1/2} n,\  r/\vpi(t)\cdot \eps /\log^{3/2} n\right\}\right). 
\end{align}

Additionally, we have
\begin{align}\label{eqn:TAG_bound}
&T_{A,G}=\Omega(1)\cap O(\eps n/\log^{3/2}n), 
\end{align}
since no valid algorithm can achieve $T_{A,G}=o(1)$, and $\pi(t)\ge \alpha/n=\Omega(1/n)$. The inequality $\pi(t)\ge \alpha/n$ follows from~\Cref{eqn:pagerank_ppr}, along with the fact that $\vpi(t,t) \ge \alpha$, because $\vpi(t,t)$ is the probability that an $\alpha$-discounted random walk starting at $t$ terminates at $t$, which occurs with at least probability $\alpha$ if the walk stops immediately after starting from $t$.

\subsection{Basic Structures in the Graph}\label{subsec:define_notations_lowerbound}
For $\delta=1/\log^{1/5}n$, we are going to select some vertices and edges of total probability $O(\delta)$ and \emph{assume they are unvisited}. Indeed, the probability that any of them is visited by $A$ is $O(\delta)$. First, we will assume that one of the following is unvisited:

\begin{itemize}
\item \textbf{vertex} $\boldsymbol{y}$: a vertex such that $p_{A,G}(y)\le \pf$ and $\pi(y,t)\ge r$, i.e., $y \in \Veps$; 

\item \textbf{edge} $\boldsymbol{(x,y)}$: an edge in $G$ such that $p_{A,G}(x,y)\le \pf$ and $\pi(y,t)\ge r$, i.e., $y \in \Veps$; 
\end{itemize}

Since $T_{A, G} = O(\Teps / \log^{1/2} n)$, where $\Teps = \sum_{v \in \Veps} (1+\din(v))$ and $\Veps = \{v \in V \mid \vpi(v, t) \ge r\}$, at least one of $y$ and $(x,y)$ exists, and possibly both. 
To see this, if for contradiction there is no such vertex or edge, then we have
\begin{align*}
T_{A,G}=\sum_{v\in V}p_{A,G}(v)+\sum_{\{u,v\}\in V^2}\hspace{-3mm}p_{A,G}(u,v)
> \sum_{v\in \Veps}\hspace{-1mm}\left(p_{A,G}(v)+\hspace{-2mm}\sum_{u\in \Nin(v)}\hspace{-2mm}p_{A,G}(u,v)\hspace{-1mm}\right)
>\pf \sum_{v\in \Veps}(1+\din(v))
=\pf \Teps.  
\end{align*}
Since $\pf=1/\log^{1/5}n$, we then have $T_{A,G}=\omega(T_r/\log^{1/2} n)$, leading to the contradiction. If such $y$ exists, we assume $y$ is unvisited; otherwise, we assume that the edge $(x,y)$ is unvisited.

Moreover, we will show that we can identify a large subset of vertices $W \subseteq V$ in $G$, such that the total probability of visiting vertices in $W$ is $O(\pf)$. We will assume that $W$ is unvisited.

\begin{lemma}\label{lem:setW}
For $\pf = 1/\log^{1/5} n$, there exists a \textbf{vertex set} $\boldsymbol{W}\subseteq V\setminus\{y,t\}$ satisfying:  
\begin{enumerate}[label=(\alph*), font=\normalfont]
\item $2\pf n/T_{A, G}\ge |W|\geq \pf n/T_{A, G}$; 
\item $\sum_{w\in W} p_{A, G}(w)\leq 2\pf$; 
\item 
$\forall v \in V$ has fewer than $(1-\eps/2)|W|$ in-edges from $W$ and fewer than $(1-\eps/2)|W|$ out-edges to $W$.
\end{enumerate}

\end{lemma}

\begin{proof}
First, by~\Cref{eqn:TAG_bound}, we note that 
\begin{equation}\label{eqn:prob_upper}
\begin{aligned}
&\pf n/T_{A, G} = \Omega(\pf n/ (\eps n/\log^{3/2}{n}))=\omega((\log^{5/4} n)/\eps).
\end{aligned}
\end{equation}
We prove the existence of the set $W$ by showing that all conditions are satisfied with positive probability when $W$ is chosen randomly. More precisely, each vertex in $V\setminus\{y,t\}$ is selected independently for $W$ with probability $1.5\pf / T_{A, G}$. 

\paragraph{(a)} The expected size of $W$ satisfies:
\begin{align*}
\E\left[|W|\right]= 1.5(n-2)\,\pf /T_{A, G}> 1.4\,\delta n/T_{A, G}=\omega((\log^{5/4} n)/\eps).
\end{align*}
The elements are selected independently, so by the Chernoff bound, the probability that we get $|W|<\pf n/T_{A,G}$ or $|W|>2\pf n/T_{A,G}$ is $1/n^{\omega(1)}$.

\paragraph{(b)} The expected value of $\sum_{w\in W} p_{A, G}(w)$ is 
\[
 \frac{1.5\pf}{T_{A, G}}\sum_{v\in V}p_{A, G}(v)\le \frac{1.5 \pf}{T_{A, G}} T_{A, G}=1.5 \pf. 
\]
By Markov's inequality, the probability $\sum_{w\in W} p_{A, G}(w)\geq 2\pf$ is at most $3/4$.

\paragraph{(c)} 
Consider any vertex $v\in V$. 
We want to show that the probability it has 
less than $\eps \pf n/T_{A,G}$ in- or out-non-neighbors in $W$ is $1/n^{\omega(1)}$.
Then a union bound implies that all $v$ have less than $\eps \pf n/T_{A,G}$ in- or out-non-neighbors in $W$ with probability $1/n^{\omega(1)}$.  This implies (c) when we combine with (a) stating $|W|< 2\pf n/T_{A,G}$.

The arguments for in- and out-neighbors are the same. 
The at least $\eps n$ in-non-neighbors of $v$ are picked independently for $W$, each with probability $1.5\pf / T_{A, G}$. 
The expected number of in-non-neighbors of $v$ is therefore at least
\[\eps n \cdot 1.5 \pf / T_{A, G}=\omega(\log^{5/4} n).\] 

By the Chernoff bound,
the probability of falling down $\eps \pf n/ T_{A, G}$ is $1/n^{\omega(1)}$.

Adding up the error probabilities of (a), (b), and (c), we get that the total error probability is below $3/4+1/n^{\omega(1)}<1$.
\end{proof}
In the remainder of this section, we fix $W$ as the one suggested in \Cref{lem:setW} and we assume it is unvisited. The only other components assumed unvisited are $y$ or $(x,y)$, so we conclude:
\begin{lemma}\label{lem:unvisited_component}
    The total visit probability of the assumed unvisited components, that is, $y$ or $(x,y)$, and $W$, is $O(\pf)$. 
\end{lemma}

\begin{proof}
By definition, $p_{A,G}(y)\le \pf$ if $y$ is assumed unvisited; otherwise, $p_{A,G}(x,y)\le \pf$. By~\Cref{lem:setW}, $\sum_{w\in W}p_{A,G}(w)\le 2\pf$. Therefore, the total probability that $A$ visits the assumed unvisited components is at most $3\pf=O(\pf)$. 
\end{proof}

Now we define
\begin{itemize}
\item {\bf vertex $\boldsymbol y'$}: a vertex $y'\in \Nout(W)\setminus W$ that maximizes $\pi(y', \overline{W}, t)$.
\end{itemize}

Note that we do not assume that $y'$ is unvisited. Also, we note that
$y'$ may be identical to $y$. By~\Cref{lem:either} with $s=y$, we have  
\begin{align}\label{eqn:either_W}
\max\{\pi(y, \overline{W}, t),\, \pi(y', \overline{W}, t)\}=\Omega(r). 
\end{align}

\subsection{Equivalent Graphs}\label{subsec:equ_graphs}
We now consider the execution process of algorithm $A$ with a random seed $R$, denoted by $A_R$. We call a query \emph{bad} if it causes any vertices or edges that we assumed unvisited to be visited; good otherwise. 
We call a random seed $R$ {\em good} for $A$ if $A_R$ makes no bad queries. 
Recall that when we visit an edge or a non-edge, we also visit the end-vertices, so a good query does not involve any edge incident to an assumed unvisited vertex.

By \Cref{lem:unvisited_component}, 
the probability of $A$ visiting assumed unvisited components is $O(\pf)=o(1)$, since $\pf=1/\log^{1/5} n$. 
We have thus established the following lemma:

\begin{lemma}\label{lem:random_seed}
A random seed $R$ is good for $A$ with probability  $1-o(1)$.  
\end{lemma}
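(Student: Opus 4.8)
\textbf{Proof proposal for Lemma~\ref{lem:random_seed}.}

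The plan is to obtain the bound by a direct union bound over the two possible sources of a bad query, using exactly the probability estimates already recorded for the edge $(x,y)$ and for the set $W$. First I would recall the definition: a random seed $R$ is good precisely when $A_R$ makes no query that visits any vertex of $W$ and no query that visits the pair $\{x,y\}$. Hence the complementary event — $R$ is \emph{not} good — is contained in the union of the event ``$A$ visits $\{x,y\}$'' and the event ``$A$ visits some $w\in W$''. By a union bound over the individual vertices of $W$, the latter event has probability at most $\sum_{w\in W} p_{A,G}(w)$, which by part (b) of \Cref{lem:setW} is at most $2\pf$. The former event has probability $p_{A,G}(x,y)$, which by the choice of the edge $(x,y)$ is $O(\eps^2/\log^{3/2} n)$.

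Next I would combine the two bounds: the probability that $R$ is not good is at most $p_{A,G}(x,y) + \sum_{w\in W} p_{A,G}(w) = O(\eps^2/\log^{3/2} n) + 2\pf$. Since $\pf = 1/\log^{1/4} n = o(1)$ and $\eps\in[0,1]$ so that $\eps^2/\log^{3/2} n = o(1)$ as well, the total is $3\pf + o(1) = o(1)$; one could even absorb the $O(\eps^2/\log^{3/2} n)$ term into the $3\pf$ bound for large $n$ since it is of strictly smaller order than $\pf$. Taking complements gives $\Pr_R\{R \text{ good for } A\} \ge 1 - o(1)$, as claimed.

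The only point requiring any care is a bookkeeping one: ensuring that ``$A_R$ makes no bad query'' really is equivalent to the seed lying outside the union of the two visiting events, i.e.\ that a bad query is by definition one causing a vertex of $W$ or the pair $\{x,y\}$ to be visited, and that the quantities $p_{A,G}(\cdot)$ were defined as probabilities \emph{over the seed $R$} of the corresponding visiting event ever occurring during the run. Both are already set up in the preceding text (the definition of ``bad query'' just above the lemma, and the definitions of $p_{A,G}(u,v)$ and $p_{A,G}(v)$ in the subsection on measuring complexity), so no genuine obstacle arises — this lemma is a short assembly of facts already in hand rather than a step with real content. The substantive work is entirely in \Cref{lem:setW} and in the selection of the edge $(x,y)$; Lemma~\ref{lem:random_seed} merely packages their conclusions into the single statement that will be invoked when constructing $G^+$.
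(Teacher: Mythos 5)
Your proposal is correct and matches the paper's argument exactly: a union bound combining $p_{A,G}(x,y)=O(\eps^2/\log^{3/2}n)$ with $\sum_{w\in W}p_{A,G}(w)\le 2\pf$ from \Cref{lem:setW}(b), absorbing the first term to get a total failure probability of at most $3\pf=o(1)$. Nothing is missing.
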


We now define two graphs $G$ and $G'$ to be \emph{equivalent}, 
denoted by $G \equiv G'$
, if they provide the same answers to all good queries. 
Recall that when the random seed $R$ is fixed, the behavior of $A_R$ is uniquely determined by the answers $A$ obtains from its queries.  
So if a random seed $R$ for $A$ is fixed to be a good one and if $G \equiv G'$, the final estimate produced by $A_{R}$ will be exactly the same on both $G$ and $G'$. Therefore
\begin{lemma}\label{lem:seed_equi}
If $G' \equiv G$ and 
$R$ is good for $A$, then $\epi_{A_R(G')}(t)=\epi_{A_R(G)}(t)$.
\end{lemma}
Here, $\epi_{A_R(G)}(t)$ and $\epi_{A_R(G')}(t)$ denote the estimate for $\vpi(t)$ produced by running the algorithm $A$ with random seed $R$ on graph $G$ and $G'$, respectively, as defined in~\Cref{sec:preliminaries}. In combination of~\Cref{lem:random_seed}, we get

\begin{lemma}\label{lem:prob_random_seed}
If $G' \equiv G$, then 
with random seed $R$, $\Pr_R\left\{\epi_{A_R(G')}(t)=\epi_{A_R(G)}(t)\right\}=1-o(1)$, where the probability $\Pr_R$ is taken over the choice of the random seed $R$ used by the algorithm $A$. 
\end{lemma}
 
The following lemma suggests a way to construct an equivalent graph $G'$ from $G$. 

\begin{lemma}\label{lem:change_graph_general}
We have $G' \equiv G$ if $G'$ is obtained from $G$ only by changing the adjacency of an assumed unvisited vertex pairs, and by changing edges incident to assumed unvisited vertices, but preserving the in- and out-degrees of all vertices that are not assumed visited, and preserving the adjacency between all pairs not assumed unvisited. 
\end{lemma}

\begin{proof}
$G'$ may differ from $G$ only in the adjacency of assumed unvisited vertex pair, the in- and out-degrees of the assumed unvisited vertices, and the presence of edges incident to these vertices.
However, any query whose answer involve these differences, and could thus distinguish $G$ and $G'$, would necessarily causes at least one of the vertices or edges that are assumed unvisited to be visited, and hence is not good query.  This implies that the answers to all good queries will have no difference on both $G$ and $G'$, yielding $G' \equiv G$. 
\end{proof}

Recall that the assumed unvisited components are the vertex $y$ or the edge $(x,y)$, and the vertex set $W$. Therefore, \Cref{lem:change_graph_general} implies the following lemma. 

\begin{lemma}\label{lem:change_graph}
Any graph $G'$ obtained from $G$ by modifying only $(x,y)$ (if assumed unvisited), and the edges incident on $W$ and $y$ (if assumed unvisited), while preserving the in- and out-degrees of all vertices outside $W$ (except possibly the assumed unvisited $y$) ensures that $G' \equiv G$.
\end{lemma}

\begin{figure}[t]
\centering
\includegraphics[width=0.99\linewidth]{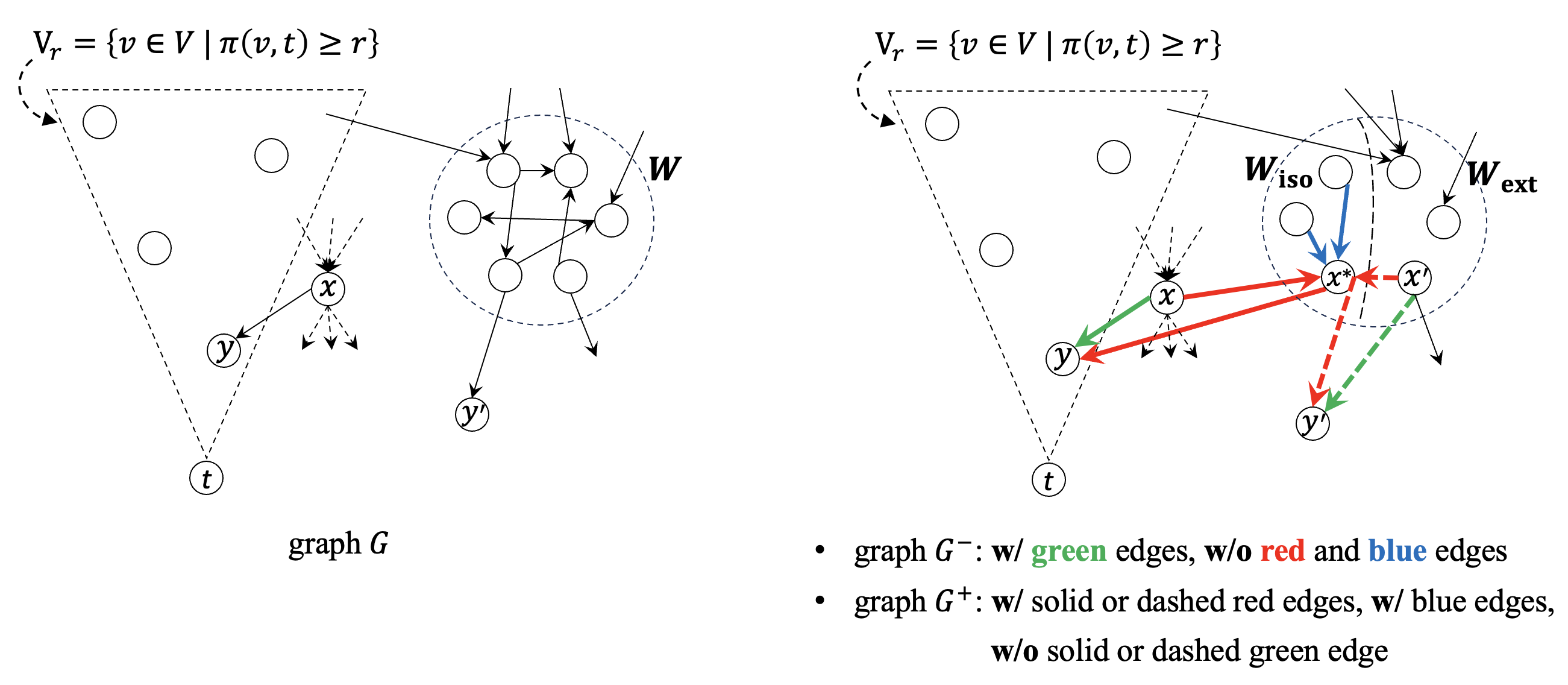}
\caption{Sketch of the constructions of the graphs $G^-$ and $G^+$} \label{fig:lower_bound_withn_constraint}
\end{figure}

\subsection{Constructing $\boldsymbol{G^+}$}\label{subsec:G-}
In this subsection, we are going to construct a graph $G^+ \equiv G$ by modifying edges in $G$ according to~\Cref{lem:change_graph}, such that $\vpi_{G^+}(t)=\omega(\vpi_G(t))$. 
Before describing the construction of $G^+$, we will first construct a graph $G^-$, which will serve as the basis for constructing $G^+$. 
A sketch of the structures of $G^-$ and $G^+$ is provided in~\Cref{fig:lower_bound_withn_constraint}. 
\begin{lemma}
\label{lem:G-}
There exists a graph $G^-\equiv G$, such that
\begin{enumerate}[label=(\alph*), font=\normalfont]
\item $G^-$ differs from $G$ only in the edges that are internal to or adjacent to $W$, while preserving the in- and out-degrees of all vertices outside $W$. 
\item there exists a subset of vertices $\Wisolated \subseteq W$ in $G^-$ containing $\lceil (\eps/2) |W|\rceil$ isolated vertices.
\end{enumerate}
\end{lemma}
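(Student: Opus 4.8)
The plan is to construct $G^-$ explicitly by deleting every edge incident to $W$ and then re-inserting the ``missing'' edges at vertices of $W$ so that (i) the in- and out-degrees of vertices outside $W$ are restored and (ii) a prescribed subset of $W$ ends up with no edges at all. Concretely, I would fix $\Wisolated$ to be an arbitrary subset of $W$ of size $s:=\lceil(\eps/2)|W|\rceil$ (such a subset exists since $\eps\le 1$ forces $s\le\lceil|W|/2\rceil\le|W|$), and write $W':=W\setminus\Wisolated$, so $|W'|=|W|-s$. The crucial preliminary estimate is that every vertex $v\in V$ has at most $|W'|$ out-edges into $W$ and at most $|W'|$ in-edges from $W$: by condition (c) of \Cref{lem:setW} both counts are strictly below $(1-\eps/2)|W|$, hence at most $\lceil(1-\eps/2)|W|\rceil-1$, and since $\lceil(1-\eps/2)|W|\rceil+\lceil(\eps/2)|W|\rceil\le|W|+1$ (a sum of two ceilings is an integer strictly below the sum of the arguments plus $2$, and here those arguments sum to the integer $|W|$), we get $\lceil(1-\eps/2)|W|\rceil-1\le|W|-s=|W'|$.

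Now I would define $G^-$ as follows. Keep every edge of $G$ with both endpoints in $V\setminus W$ (in particular $(x,y)$ is kept, since $W\subseteq V\setminus\{x,y,t\}$); discard every edge incident to $W$. For each $u\notin W$ having $a(u)$ out-edges into $W$ in $G$, add $a(u)$ out-edges from $u$ to $a(u)$ arbitrary distinct vertices of $W'$; this is feasible because $a(u)\le|W'|$, and it creates no self-loop since $u\notin W'$. Symmetrically, for each $u\notin W$ having $b(u)$ in-edges from $W$ in $G$, add $b(u)$ in-edges into $u$ from $b(u)$ arbitrary distinct vertices of $W'$. Add no edge internal to $W$. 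The edge sets added at distinct vertices $u$ are pairwise disjoint, and at each $u$ the chosen neighbours are distinct, so $G^-$ is a simple directed graph.

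Verifying the three requirements is then routine. Claim (a): $G^-$ agrees with $G$ on all edges within $V\setminus W$, so it differs from $G$ only on edges internal to or adjacent to $W$; and for $u\notin W$, its out-degree in $G^-$ equals (its unchanged out-edges to $V\setminus W$) plus $a(u)$, which is its out-degree in $G$, and likewise for in-degrees, so all degrees outside $W$ are preserved. Claim (b): by construction $|\Wisolated|=s=\lceil(\eps/2)|W|\rceil$, and each $w\in\Wisolated$ is incident to no edge of $G^-$ — the kept edges avoid $W$, the re-inserted external edges meet only $W'$ (which is disjoint from $\Wisolated$), and no internal edge of $W$ was added — so $w$ is isolated. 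Finally $G^-\equiv G$ follows from \Cref{lem:change_graph}, since $G^-$ is obtained from $G$ by modifying only the edge $(x,y)$ (in fact not at all) and edges internal to or adjacent to $W$, while leaving every in- and out-degree outside $W$ unchanged.

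The only real content is the counting step bounding $a(u)$ and $b(u)$ by $|W'|$; this is precisely the purpose of the slack factor $\eps/2$ in condition (c) of \Cref{lem:setW}, and without it the re-insertion could be infeasible. Everything else — the feasibility of the re-insertion, the absence of parallel edges and self-loops, and the degree bookkeeping — is elementary.
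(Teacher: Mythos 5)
Your proposal is correct and follows essentially the same route as the paper: delete all edges internal to and incident to $W$, reserve a subset of size $\lceil(\eps/2)|W|\rceil$ to be left isolated, and reattach the external edges only to the complementary part of $W$ (your $W'$ is the paper's $\Wout$), with feasibility guaranteed by condition (c) of \Cref{lem:setW} and equivalence by \Cref{lem:change_graph}. Your ceiling/floor bookkeeping for the bound $a(u),b(u)\le|W'|$ is slightly more explicit than the paper's, but the argument is the same.
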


\begin{proof}
Based on the structure of $G$, we construct $G^-$ by performing the following operations.

First, we remove all edges with both endpoints in $W$, i.e., edges internal to $W$. This does not change the in- or out-degrees of any vertices outside $W$. We then arbitrarily partition $W$ into two subsets, $\Wisolated$ and $\Wout = W \setminus \Wisolated$, with $|\Wisolated| = \lceil (\eps/2) |W|\rceil$ and $|\Wout| = \lfloor (1-\eps/2)|W|\rfloor$.
Next we cut all edges incident to $W$. For each vertex $v\not\in W$, by \Cref{lem:setW} (c), we know that $v$ had fewer than $(1-\eps/2)|W|$ in-neighbors in $W$, and we can now assign $v$ exactly the same number of in-neighbors in $\Wout$. This implies that the total in-degree of $v$ is unchanged. We do exactly the same for the out-neighbors and out-degree of $v$.
As a result of these changes, all nodes in $\Wisolated$ have neither in-edges nor out-edges. We denote the resulting graph as $G^-$. Note that all the above modifications are confined to vertices in $W$, and no vertices are added or deleted. Therefore, $G^-$ has the same vertex set as $G$ and differs from $G$ only in the edges internal to or adjacent to $W$, completing the proof.
\end{proof}

Using $G^-$ as the basis, we are now ready to construct $G^+$. 
We show that: 
\begin{lemma}\label{lem:G^+}
There exists a graph $G^+ \equiv G$, such that 
$\vpi_{G^+}(t)=\omega(\vpi_G(t))$. 
\end{lemma}

\begin{proof}
Recall that in~\Cref{subsec:define_notations_lowerbound}, we defined a vertex $y'\in \Nout(W)\setminus W$ that maximizes $\pi(y', \overline{W}, t)$.
By definition, $y'$ is an out-neighbor of $W$ which is not in $W$, and by~\Cref{lem:G-}, it must also be an out-neighbor of $W$ in $G^-$. We let $(x',y')$ be any edge from $W$ to $y'$ in $G^-$. Then $x'\not\in\Wisolated$.

Let $y''$ be the one of $y$ or $y'$ that maximizes $\vpi_{G^-}(y'',\overline{W},t)$. By~\Cref{lem:G-}, $G^-$ differs from $G$ only in the edges that are internal to or adjacent to $W$, while preserving the in- and out-degrees of all vertices outside $W$. This ensures $\vpi_{G^-}(y'', \overline{W}, t)=\vpi_{G}(y'', \overline{W}, t)$. 
Moreover, by~\Cref{lem:either}, we have $\vpi_{G}(y'', \overline{W}, t) = \Omega(r)$, thus implying $\vpi_{G^-}(y'', t)\ge \vpi_{G^-}(y'', \overline{W}, t)=\vpi_{G}(y'', \overline{W}, t)=\Omega(r)$.
Moreover, we arbitrarily select a vertex $x^*$ in $\Wisolated$. Our goal is to add an edge $(x^*,y'')$. 

If $y''=y$ and $y$ was assumed unvisited, then the degree of $y$ can be modified but still ensuring $G^+ \equiv G$, so we simply add $(x^*,y'')$;

Otherwise, we define $x''=x$ if $y''=y$ and $x''=x'$ if $y''=y'$. We use $x^*$ to subdivide $(x'',y'')$, that is, replacing $(x'',y'')$ by $(x'',x^*)$ and $(x^*,y'')$. We refer to the resulting graph at this stage as $G'$. By~\Cref{lem:edge_subdividing}, the subdivision decreases $\vpi_{G^-}(y'', t)$ by at most a factor $(2-\alpha)/(1-\alpha)$, so we have  $\vpi_{G'}(y'', t) = \Omega(r)$. This also implies $\vpi_{G’}(x^*, t) = \Omega(r)$ since $x^*$ in $G'$ has only one out-edge, connected directly to $y''$.

Based on $G'$, we add edges $(w, x^*)$ for every $w \in \Wisolated \setminus {x^*}$. 
We call the resulting graph $G^+$. 

Since $w \in \Wisolated \setminus {x^*}$ has no incoming edges and $t\notin W$, we must have 
$\vpi_{G^+}(x^*, t) = \vpi_{G’}(x^*, t)$. Also, $(w,x^*)$ is the only edge out of $w$, so
we conclude that 
$\vpi_{G^+}(w, t) =
(1-\alpha)\vpi_{G^+}(x^*, t) =\Omega(r)$.

Furthermore, by \Cref{lem:G-}, we had
$|\Wisolated|\geq (\eps/2)|W|$ and by \Cref{lem:setW} (a), we have $|W|\geq \delta n/T_{A,G}$, so we conclude that 
\begin{align}\label{eqn:pi_G+}
\pi_{G^+}(t)=
\sum_{v\in V}\pi_{G^+}(v,t)/n>\sum_{w\in \Wisolated}\pi_{G^+}(w,t)/n=|\Wisolated|\Omega(r)/n=\Omega(\eps r\delta /T_{A,G}).    
\end{align}
By substituting $T_{A, G} = O((r/\vpi_G(t))\eps/\log^{3/2} n)$ from~\Cref{eqn:TAG_r} and $\pf=1/\log^{1/5} n$, we obtain 
\begin{align*}
\vpi_{G^+}(t) = \Omega( \vpi_G(t) (\log^{13/10} n)/\eps)=\omega(\vpi_G(t)),   \end{align*}
thus finishing the proof. 
\end{proof}

\subsection{Proving~\Cref{thm:lowerbound_withn}}
\label{subsec:pi_G+}
\begin{proof}[Proof of~\Cref{thm:lowerbound_withn}]
By~\Cref{lem:G^+}, we have $G^+ \equiv G$ and
\[\vpi_{G^+}(t)=\omega(\vpi_G(t)). \]
Combining $G^+ \equiv G$ with~\Cref{lem:prob_random_seed} gives 
\begin{align*}
\Pr_R\left\{\epi_{A_R(G)}(t)=\epi_{A_R(G^+)}(t)\right\}\ge 1-o(1), 
\end{align*}
completing the proof. 
\end{proof}

\section{Upper Bounds}\label{sec:upperbound}

This section presents our upper bound for estimating $\vpi(t)$, as formally stated in~\Cref{thm:running_time_adaptive}.

\begin{theorem}\label{thm:running_time_adaptive}
For any vertex $t$ in a directed graph $G$, there exists an algorithm that estimates $\pi(t)$ within a $(1 \pm 1/\log^{1/4}{n})$ multiplicative factor w.p. at least $1 - 1/\log^{1/4}{n}$, and runs in expected time $O(T^* \log{n})$. 
\end{theorem}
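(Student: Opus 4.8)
\textbf{Proof proposal for \Cref{thm:running_time_adaptive}.}

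The plan is to analyze the adaptive variant $A^*$ of \BiPPR, which combines a backward \push phase from $t$ with forward Monte Carlo simulations, and to choose its one free parameter --- the backward residual threshold $\rmax$ --- adaptively so that the total cost automatically matches $\min\{\Teps, r/\vpi(t)\}$ for the best $r$. First I would recall the invariant maintained by the \push/\backpush operation run from $t$ with threshold $\rmax$: it produces reserves $\ep(v,t)$ and residues $\er(v,t)$ with $\vpi(v,t) = \ep(v,t) + \sum_u \er(u,t)\,\vpi(v,u)$ for all $v$, with $\er(u,t) \le \rmax$ everywhere, and the work done is $O(1 + \sum_{v : \vpi(v,t) \ge \alpha\rmax} \din(v)) = O(\Teps)$ where $r = \alpha \rmax$ (since a vertex only gets pushed when its residue exceeds $\rmax$, which forces $\vpi(v,t)\ge\alpha\rmax$, and each push of $v$ costs $\din(v)$; this is exactly the classical \backpush cost bound, see \cite{andersen2008local, wang2024revisiting}). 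Then $\vpi(t) = \frac{1}{n}\sum_v \vpi(v,t) = \frac{1}{n}\sum_v \ep(v,t) + \sum_u \er(u,t)\,\vpi(u,t)$... wait, more carefully: $\vpi(t) = \E_{v\sim\text{unif}}[\ep(v,t)] + \E_{v}[\sum_u \er(u,t)\vpi(v,u)] = \frac1n\sum_v\ep(v,t) + \sum_u \er(u,t)\vpi(u)$. The first term is computed exactly from the reserves; the second term is estimated by sampling source vertices $v$ uniformly, running an $\alpha$-discounted walk, and recording the residue of the endpoint, which is an unbiased estimator with value in $[0,\rmax]$, so its mean is $\le \rmax \cdot \|\er\|_1/(\text{something})$ --- actually the cleaner bookkeeping is that each walk contributes the residue at its terminal vertex, an unbiased estimate of $\sum_u \er(u,t)\vpi(u)$, bounded by $\rmax$; by a Chebyshev/Bernstein argument $\Theta\big(\rmax/(\eps^2 \vpi(t))\big)$ walks suffice for relative error $\eps = 1/\log^{1/4}n$ with the stated failure probability, giving Monte Carlo cost $O(\rmax/(\eps^2\vpi(t))) = O((r/\vpi(t)) \cdot \log^{1/2}n / \alpha)$.

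Next I would address the adaptivity: the algorithm does not know in advance which $r$ (equivalently which $\rmax$) minimizes the total cost $\Teps + r/\vpi(t)$, nor does it know $\vpi(t)$. The standard trick is a geometric ``guess and double'' schedule: run the backward push with successively halved thresholds $\rmax = \alpha^{-1}, \alpha^{-1}/2, \alpha^{-1}/4, \dots$, and for each threshold run enough Monte Carlo walks to get a reliable estimate, stopping as soon as the current estimate $\epi(t)$ exceeds a threshold of the form $c\cdot(\text{Monte Carlo cost budget spent so far})/n$ --- i.e., stop when the number of walks we have been forced to run, $\rmax/(\eps^2\epi(t))$, is comparable to the push work $\Teps$ already invested. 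Since the push work is monotonically increasing and the per-phase walk count $\rmax/(\eps^2\vpi(t))$ is monotonically decreasing as $\rmax$ shrinks, the first phase where they cross is within a constant factor of the optimum $\min_r(\Teps + r/\vpi(t)) = \Theta(\max_r \min\{\Teps, r/\vpi(t)\})$ up to the doubling overhead; the geometric sum of the costs of all earlier (cheaper) phases is dominated by the final phase, so the total is $O(\log n)$ times the optimum. One must be a little careful that $\vpi(t)$ appearing in the stopping rule is only known approximately, but since $\epi(t)$ is within $(1\pm\eps)$ of $\vpi(t)$ w.h.p. this only perturbs the crossing point by a constant factor.

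Then I would assemble the correctness and running-time claims. Correctness: condition on all the Monte Carlo phases up to and including the stopping phase being accurate to within $(1\pm\eps)$; a union bound over the $O(\log n)$ phases (each allotted failure probability $\ll 1/\log^{1/4}n \cdot 1/\log n$, which only changes constants inside the $\Theta(\cdot)$ walk count) keeps the total failure probability below $1/\log^{1/4}n$, and on this good event the final estimate is $\vpi(t)(1\pm\eps) = \vpi(t)(1\pm 1/\log^{1/4}n)$ because both the reserve sum and the walk average are within relative error $\eps$ of their (exact, summing to $\vpi(t)$) expectations. Running time: by the adaptivity argument the work is $O(\log n)\cdot\min_r(\Teps + r/\vpi(t))$, and since $\min_r(\Teps + r/\vpi(t)) = \Theta(\max_r\min\{\Teps, r/\vpi(t)\}) = \Theta(T^*)$ (the sum of two terms, one increasing and one decreasing in $r$, is within a factor $2$ of the max of the min of the two, optimized over $r$), this is $O(T^*\log n)$ as claimed; the extra $\log n$ absorbs both the doubling schedule and the fact that we run $\Theta(\rmax/(\eps^2\vpi(t)))$ rather than $O(1/\vpi(t))$ walks, since $\rmax/\eps^2 = O(r\cdot\sqrt{\log n}/\alpha) = \tilde O(r)$.

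\textbf{Main obstacle.} The delicate part is the adaptive stopping rule interacting with the unknown $\vpi(t)$: we need the number of Monte Carlo walks per phase to be calibrated to $\vpi(t)$ (to control relative error) while simultaneously using the current estimate of $\vpi(t)$ to decide when to stop halving $\rmax$, all without blowing up the failure probability or the $\log n$ factor. Getting the invariant right so that ``push work so far'' and ``walks needed now'' are genuinely monotone in opposite directions --- and that the first crossing is $\Theta$ of the true optimum $T^*$ --- is where the real care is needed; the per-operation analyses of \push and of the Monte Carlo estimator are by now routine (\cite{andersen2008local, lofgren2016personalized, wang2024revisiting}).
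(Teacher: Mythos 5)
Your overall architecture (doubling schedule, bidirectional push + Monte Carlo, stop when the estimated standard deviation is small relative to the estimate, first-crossing argument relating the stopping round to $T^*$) is the same as the paper's, but two steps in your write-up are genuinely gapped.

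First, the push-cost claim. You assert that running \push to completion with threshold $\rmax$ costs $O\bigl(\sum_{v:\vpi(v,t)\ge\alpha\rmax}\din(v)\bigr)=O(\Teps)$ with $r=\alpha\rmax$, calling it "the classical \backpush cost bound." It is not: the classical argument only shows that each push on $v$ adds $\alpha\rmax$ to $\p(v)\le\vpi(v,t)$, so a vertex can be pushed up to $\vpi(v,t)/(\alpha\rmax)$ times (residue recirculates through cycles), and the work can exceed $\Teps$ by a factor as large as $1/(\alpha\rmax)$. The paper closes exactly this gap with its Lemma on push counts: because the threshold $\rpush$ is only ever halved and the pushing \emph{continues} across rounds (never restarts), the invariant $\vpi(v,t)-\p(v)<2\rpush$ limits each vertex to $2/\alpha$ pushes per halving, hence $O(\log n)$ pushes total --- and this is one of the places the final $\log n$ factor comes from. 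Relatedly, the paper controls the schedule by a push \emph{budget} $2^{i-1}$ per round (so the push cost is bounded by construction) and then proves the threshold has fallen far enough (its two-sided bound $T_{2\rpush^{(i)}}\le 2^i\le O(\log n)\,T_{\alpha\rpush^{(i)}}$), rather than fixing the threshold and bounding the cost as you do; without the per-vertex push-count lemma your direction of the argument does not go through.

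Second, the correctness argument. You condition on all $O(\log n)$ phases being $(1\pm\eps)$-accurate and claim the union bound "only changes constants inside the $\Theta(\cdot)$ walk count." Under Chebyshev (which is all that is available with the 2-independent walks the paper uses, and is the natural tool since the per-walk contribution is only bounded by $\rmax$), driving the per-phase failure probability down to $o(1/(\log^{5/4}n))$ multiplies the required number of walks by $\log^{3/4}n$, which breaks the $O(T^*\log n)$ bound; you would need fully independent walks and a Bernstein-type bound to make your union bound affordable. The paper avoids the issue entirely by running three independent estimators per round: $X_1,X_2$ decide termination and $X_3$ is returned, so the returned estimate is independent of the stopping rule and needs only a single Chebyshev application at the stopping round, while the union bound over rounds is applied only to the stopping events, whose probabilities decay geometrically. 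Either you should adopt that decoupling, or you must upgrade your concentration inequality and independence assumptions and redo the walk-count accounting.
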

\noindent Here, we recall from~\Cref{eqn:def_V_T} and~\Cref{eqn:T*} that 
\begin{align*}
T^*=\max_{r\in [0, 1]}\left\{\min\left\{\Teps, r/\vpi(t)\right\}\right\}, \quad \text{where} \quad  \Veps=\{v\in V \mid \vpi(v,t) \ge r\} \quad \text{and} \quad   \Teps=\sum_{v\in \Veps}\left(1+\din(v)\right). 
\end{align*}

Our algorithm is an adaptive variant of the $\BiPR$ algorithm~\cite{lofgren2016personalized}, as introduced below.

\subsection{Review of $\BiPR$ Algorithm}\label{subsec:related}
\newcommand\req[1]{(\ref{#1})}

The $\BiPR$ algorithm is a combination of $\push$~\cite{andersen2008local} and Monte Carlo simulation.

\subsubsection{The $\push$ Operation} \label{subsub:push}
\def\rpush{r_{\mathrm{push}}}
The $\push$ operation was originally proposed in~\cite{andersen2008local} for computing PageRank contributions (i.e., given a target node $t$, calculating $\vpi(v,t)$ for all $v \in V$). It propagates random-walk probability mass backward along in-edges, step by step, starting from the target vertex $t$. Each $\push$ operation maintains two variables for every vertex $v$: a ``residue'' $\r(v)\geq 0$ and a ``reserve'' $\p(v)\geq 0$. Initially, $\r(v)$ and $\p(v)$ are set to $0$ for every $v \in V$, except for $\r(t)$, which is set to $1$. The following invariant is maintained for every $s\in V$. 
\begin{equation}
\pi(s, t)=\p(s) + \sum_{v\in V}\vpi(s, v)\r(v). 
\label{eqn:invariant_ppr} 
\end{equation}
A $\push$ operation is atomic and can be applied to any vertex $v$, following the steps outlined in~\Cref{alg:push}. 
\begin{algorithm}[ht]\label{alg:push}
\DontPrintSemicolon
\caption{$\push(v)$~\cite{andersen2008local}}
\KwIn{vertex $v$}
\KwOut{updated $\r()$ and $\p()$}
$r\gets \r(v)$\;
$\p(v)\gets \p(v)+\alpha r$\; 
$\r(v)\gets 0$\; 
\For{$i$ \textup{from} $1$ \textup{to} $\indeg(v)$}{
    $u\gets \innbr(v,i)$\;
    $\r(u)\gets \r(u)+(1-\alpha)r/\outdeg(u)$\; 
}
\Return $\r()$ and $\p()$ \;
\end{algorithm}

In~\cite{andersen2008local} it is proved that invariant \req{eqn:invariant_ppr} is maintained both initially (trivial) and under any sequence of push operations. 
At any stage, we define
\begin{align}\label{eqn:def_rmax}
r_{\max}=\max_{v\in V}r(v). 
\end{align}
Since $\sum_{v\in V}\pi(s,v)=1$, invariant
\req{eqn:invariant_ppr} implies \cite[Theorem 1]{andersen2008local}:
\begin{align}\label{lem:pusherror}
\p(s) \le \vpi(s,t) \le \p(s)+\rmax. 
\end{align}
Invariant~\req{eqn:invariant_ppr} also implies an invariant on $\pi(t)$ that
\begin{align}\label{eqn:invariant}
\pi(t)=\sum_{s\in V}\pi(s,t)/n=\sum_{s\in V}\p(s)/n + \sum_{v\in V}\vpi(v)\r(v).   
\end{align}
Since $\sum_{v\in V}\pi(v)=1$, it follows
\begin{align*}
\sum_{s\in V}\p(s)/n\leq \pi(t)\leq
\sum_{s\in V}\p(s)/n + \rmax.    
\end{align*}
Thus, by performing $\push$ we can approximate $\pi(t)$ and $\pi(s,t)$ for every $s\in V$ with an additive error of $\rmax$.

One last important feature of $\push$: let $\rpush$ be the threshold such that all $\push$ are applied to vertices $v$ with $r(v)\geq \rpush$. If we continue pushing all such vertices until none remain, then we get $\rmax<\rpush$. 
Additionally, each $\push$ on $v$ adds $\alpha r(v)\geq \alpha \rpush$ to $p(v)$. By~\Cref{lem:pusherror}, $p(v) \leq \pi(v,t)$, so the number of times we can push $v$ is at most $\pi(v,t)/(\alpha\rpush)$.

\subsubsection{Combining $\push$ with Monte Carlo simulations}

The $\BiPR$ algorithm begins by performing $\push$ operations to arrive at some residuals $\r(v)$ and reserves $\p(v)$ satisfying the invariant in~\Cref{eqn:invariant}. Next, using $\r(v)$ and $\p(v)$ as fixed inputs, the algorithm independently simulates $\alpha$-discounted random walks on the graph, each starting from a uniformly sampled source vertex. For each vertex $v \in V$, let $\tpi(v)$ denote the fraction of walks that terminate at $v$. Using $\r(v)$, $\p(v)$, and $\tpi(v)$, the algorithm constructs the following Monte Carlo estimator $\epi(t)$ for $\vpi(t)$.
\begin{align}\label{eqn:bippr_estimate}
\epi(t)=\sum_{v\in V}\p(v)/n + \sum_{v\in V}\tpi(v)\r(v). 
\end{align}
For reference, we provide the pseudocode of $\BiPR$ below. 

\begin{algorithm}[H]\label{alg:bippr}
\DontPrintSemicolon
\caption{$\BiPR (t, \rmax, \nr)$~\cite{lofgren2016personalized}} 
\KwIn{target vertex $t$, threshold $\rmax$, total number of random walks $\nr$}
\KwOut{estimate for $\vpi(t)$}
$\p(),\r()\gets$ arrays with all entries $0$ \;
$\r(t)\gets 1$ \;
\For{\textup{each vertex} $v \in V$ \textup{with} $\r(v)\ge \rmax$}{
        $\push(v)$ $\quad$ \textcolor{gray}{// invoke \Cref{alg:push}}\;
}
$\nr(v) \gets 0$ for each $v\in V$ $\quad$ \textcolor{gray}{// recording the number of random walks terminating at $v$} \; 
\For{$i$ \textup{from} $1$ \textup{to} $\nr$}{
Simulate an $\alpha$-discounted random walk and use $v$ to denote the termination vertex\;
$\nr(v)\gets \nr(v)+1$\;
}
\Return $\epi(t)\gets \sum_{v\in V}\p(v)/n+\sum_{v\in V}(\nr(v)/\nr)\r(v)$ \;
\end{algorithm}

Let $\nr$ denote the total number of random walks simulated in the graph. Then $\tpi(v)$ is the average of $\nr$ independent Bernoulli random variables $\chi_v$, each taking the value $1$ with probability $\vpi(v)$. Since $\E\left[\tpi(v)\right]=\E\left[\chi_v\right]=\vpi(v)$, by invariant~\req{eqn:invariant}, the estimator $\epi(t)$ is unbiased. Moreover, as the $\chi_v$ variables are negatively correlated, the variance of $\epi(t)$ can be upper bounded as
\begin{align}\label{lem:variance_bippr}
\Var\left[\epi(t)\right]\le \sum_{v\in V}\Var\left[\tpi(v)\r(v)\right]\le \sum_{v\in V} \vpi(v)\r(v) (\rmax/\nr)\le \vpi(t) \rmax /\nr, 
\end{align}
where $r_{\max}=\max_{v\in V}r(v)$ as defined in~\Cref{eqn:def_rmax}. If $\rmax/\nr=o(\vpi(t))$
then the \emph{estimator is good} in the sense that
the standard deviation is $o(\vpi(t))$.

\subsection{Our Algorithm}
\label{subsec:adaptive_alg}
This subsection presents our upper-bound algorithm. 
Our algorithm is an adaptive variant of the $\BiPR$ algorithm. It runs in multiple rounds, where the idea for the $i$-th round is to run $\BiPR$ with a time budget proportional to $2^i$. Ideally, this budget is evenly divided between $\push$ operations and Monte Carlo simulations. We want to stop as soon as we think we have a \emph{good} estimator. 

Specifically, recall the Monte Carlo estimator from \Cref{eqn:bippr_estimate}:
\[\epi(t)=\sum_{v\in V}\p(v)/n + \sum_{v\in V}\tpi(v)\r(v). \]
If we base the estimate on $\nr$ random walks, then $\epi(t)$ is good with standard deviation $o(\pi(t))$ if $\rmax/\nr=o(\vpi(t))$. Since the true value of $\vpi(t)$ is unknown, a simple idea is to use the estimate $\epi(t)$ to test if the estimator is good. That is, we will terminate when 
\[\rmax/\nr=o(\epi(t)). \]

The desired benefit of the $\push$ operations is to decrease $\rmax$ since $\rmax$ is used to bound the variance of Monte Carlo simulations in \cref{lem:variance_bippr}. However, individual $\push$ operations may increase $\rmax$, even if we push only from the vertices $v$ with $r(v)=\rmax$. Also, we have the problem that if the next vertex $v$ that we want to push has a large in-degree, then it may be too expensive to push from such a $v$ for the current budget. 

To resolve these issues, the way we will use the pushes is that we have a budget $b$ for pushes that we increase by $2^{i-1}$ in round $i>0$. Every time we have a budget increase, we try to do as many pushes as possible while staying within budget. Since $\rmax$ can go both up and down, we will maintain a variable $\rpush$ which satisfies the following invariants. 
\begin{lemma}\label{lem:residue_upper_lower_bound}
\begin{enumerate}[label=(\alph*), font=\normalfont]
\item[(i)] All pushes done so far, have been on vertices $v$ that had $r(v)\ge \rpush$; 
\item[(ii)] At all times, all vertices $v$ have $\r(v)< 2\rpush/\alpha$. 
\end{enumerate}
\end{lemma}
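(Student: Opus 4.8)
The plan is to prove \Cref{lem:residue_upper_lower_bound} by analyzing the rule the algorithm uses to push within a budget and to update $\rpush$, arguing that both invariants are preserved inductively. Concretely, I would first fix the precise push-and-update rule implicit in the surrounding text: in each round we have a push budget $b$ that grows by $2^{i-1}$, and while the budget is not exhausted we repeatedly pick a vertex $v$ with $\r(v)$ as large as possible (or at least above the current $\rpush$) and push it if its in-degree fits within the remaining budget; whenever we are about to stop for the round — because no vertex has residue $\ge\rpush$, or because the only such vertices are too expensive — we \emph{halve} $\rpush$ (and similarly $\rpush$ is initialized to something like $1$, matching $\r(t)=1$ and $\alpha\le 1$ so that (ii) holds initially). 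With this rule spelled out, the two invariants should follow from a clean induction on the sequence of atomic operations (pushes and $\rpush$-updates).

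For invariant (i), the argument is immediate from the rule: a push is performed only on a vertex $v$ currently having $\r(v)\ge\rpush$, and since $\rpush$ only ever decreases, a vertex that had residue $\ge\rpush$ at push time still satisfies ``it had $r(v)\ge\rpush$'' relative to any later, smaller value of $\rpush$; hence the statement ``all pushes done so far were on vertices that had $r(v)\ge\rpush$'' is maintained both when we push (the new push obeys it by the rule, the old ones still do since $\rpush$ did not increase) and when we halve $\rpush$ (all previous pushes had residue $\ge$ the old $\rpush > $ new $\rpush$). For invariant (ii), the key observation is \Cref{alg:push}: a $\push(v)$ sets $\r(v)\gets 0$ and adds to each in-neighbor $u$ the amount $(1-\alpha)\r(v)/\outdeg(u)\le (1-\alpha)\r(v) < \r(v)$. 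So a push never creates a residue larger than the residue it consumed. Therefore the only way some vertex can come to have residue $\ge 2\rpush/\alpha$ is by accumulation across several pushes into it — but each such push into $u$ comes from a vertex $v$ with $\r(v)\ge\rpush$ at that moment, contributing $(1-\alpha)\r(v)/\outdeg(u)$; I need to bound the total mass a single vertex can hold at any time by $2\rpush/\alpha$. The cleanest route is to note that when $\rpush$ has its current value, we have (by (i), or by the stopping condition that triggered the last halving) pushed \emph{everything} down to threshold $2\rpush$: more precisely, just before the last time $\rpush$ was halved from $2\rpush$ to $\rpush$, all vertices had residue $< 2\rpush$ (that's exactly the ``no vertex has residue $\ge$ old-$\rpush = 2\rpush$'' condition, modulo the expensive-vertex caveat), and then I track how much residue a vertex can re-accumulate.

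The main obstacle I anticipate is precisely that caveat: the round may stop not because all residues dropped below $\rpush$, but because the remaining high-residue vertices have in-degree exceeding the remaining budget, so we halve $\rpush$ \emph{while} there still exist vertices with $\r(v)\ge$ old-$\rpush = 2\rpush$, violating the naive ``all residues $< 2\rpush$'' claim. Resolving this is where the factor $2/\alpha$ (rather than just $2$) in (ii) comes from, and it requires a more careful potential/accounting argument: I would argue that a vertex $v$ with large residue but too-large in-degree to push is an isolated, bounded-in-number obstruction, and bound its residue directly via invariant~\req{eqn:invariant_ppr} — since $\p(v)$ never decreases and $\sum_v\pi(s,v)=1$, one gets $\r(v)$-type quantities controlled by $\pi(v,t)$ and the history of pushes into $v$, each of which was from a residue $\ge\rpush$ source and hence contributes a controlled amount; summing a geometric-type series in $\rpush$ (old $\rpush$, its predecessor $2\rpush$, $4\rpush,\dots$) yields the bound $2\rpush/\alpha$. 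So the heart of the proof is a careful bookkeeping of residue inflow into a single vertex across the doubling schedule of $\rpush$, and I'd present it as: (1) state the rule, (2) dispatch (i) by monotonicity, (3) prove (ii) by the ``push never amplifies residue'' observation plus a geometric sum over the halving history, handling the expensive-vertex vertices separately via the invariant~\req{eqn:invariant_ppr}.
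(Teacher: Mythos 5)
Part (i) of your proposal matches the paper: it is immediate from the pushing rule and the monotonicity of $\rpush$, and the paper dispatches it in one sentence.

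For part (ii) there is a genuine gap. You correctly isolate the right starting point --- at the moment $\rpush$ was last halved from $2\rpush$ to $\rpush$, every vertex had $\r(v)<2\rpush$, because halving is triggered precisely by the condition that no vertex has residue $\ge$ the old threshold --- and you correctly name the relevant tools (invariant \req{eqn:invariant_ppr}, monotonicity of $\p$, and $\sum_w\vpi(v,w)=1$). But the mechanism you then propose, tracking the inflow of residue into a single vertex across the halving history and summing a geometric series in $\rpush, 2\rpush, 4\rpush,\dots$, is not how the bound is obtained and does not obviously close: the inflow a vertex receives during an epoch scales with the number of pushes of its out-neighbors and with \emph{their} residues, so bounding it requires the very invariant you are trying to prove (or the push-count bound of \Cref{lem:time_push_i}, which itself is proved \emph{after} and \emph{using} this lemma), and in any case a geometric sum of this kind does not produce the factor $1/\alpha$. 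The step you are missing is the \emph{lower} bound direction of invariant \req{eqn:invariant_ppr}: since $\vpi(v,v)\ge\alpha$, at every moment $\vpi(v,t)\ge\p(v)+\vpi(v,v)\r(v)\ge\p(v)+\alpha\r(v)$. This is the sole source of the $\alpha$ in the statement. The paper's argument is then a two-snapshot comparison with no inflow accounting at all: applying \req{eqn:invariant_ppr} at the last halving moment with all residues below $2\rpush$ gives $\vpi(v,t)<\p'(v)+2\rpush$; since $\p$ never decreases, $\vpi(v,t)-\p(v)\le\vpi(v,t)-\p'(v)<2\rpush$ at every later moment; combining with $\alpha\r(v)\le\vpi(v,t)-\p(v)$ yields $\r(v)<2\rpush/\alpha$. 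Note also that the ``expensive unpushed vertex'' you flag as the main obstacle is handled for free by this argument, since the bound comes from the PPR invariant rather than from any claim that such vertices were pushed.
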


We initialize as usual with $\r(v)=\p(v)=0$ for all $v$ and set $\r(t)=1$. Additionally, we initialize the push budget $\bpush=1$. 
We also maintain a variable $\rpush \in [0,1]$, which is initialized to $1$. 
Throughout the process, we ensure that all pushes are performed only on vertices $v$ such that $\r(v) \ge \rpush$. Additionally, we will have a vertex $\vpush$ which is the vertex we would like to push as soon as the budget permits. Initially, $\vpush=t$. The pseudocode for the initialization can be found in~\Cref {alg:init}.

\begin{algorithm}[H]\label{alg:init}
\DontPrintSemicolon
\caption{$\textup{\texttt{Push-Init}}(t)$}
$\r(t)=1$;\ $\p(t)=0$\; 
$\r(v)=\p(v)=0$ for each $v \neq t$\; 
$\rpush=1$\; 
$\vpush=t$\; 
$\bpush=1$\; 
\end{algorithm}

When the push-budget is increased, we start a while loop that iterates as long as $\din(\vpush)<\bpush$.
First we push $\vpush$ and subtract $\din(\vpush)+1$ from $\bpush$. Then we start looking for the next $\vpush$. We let $\vpush$ be any vertex $v$ with $r(v)\ge \rpush$. However, if there are no such vertex, we divide $\rpush$ by 2. We continue this process of dividing until we find a vertex $v$ with $\r(v)\ge \rpush$ that we use for $\vpush$ in the next iteration of our while-loop, which first tests if $\din(\vpush)<\bpush$. The pseudocode for this process is given in~\Cref{alg:increase-budget}. To ensure $\rpush$ is decreasing over time, it is important that we do not start pushing from scratch in each round, but rather that we in each round continue the pushing from previous rounds. 

\begin{algorithm}[H]\label{alg:increase-budget}
\DontPrintSemicolon
\caption{$\textup{\texttt{Increase-push-budget}}(b)$}
$\bpush=\bpush+b$ \; 
\While{$\bpush> \din(\vpush)$}{
$\push(\vpush)$ \quad \textcolor{gray}{//invoke~\Cref{alg:push}}\; 
$\bpush=\bpush-(\din(\vpush)+1)$\; 
    \While{\textup{there is no} $v$ \textup{with} $\r(v)\ge \rpush$}{
        $\rpush=\rpush/2$\;
    }
    let $\vpush$ be any vertex with $\r(\vpush)\ge \rpush$\; 
}
\end{algorithm}

After we have exhausted our push budget in round $i$, we do our Monte Carlo simulations, simulating $2^{i}$ random walks, matching the total push budget for the first $i$ rounds. After each simulation, we update the Monte Carlo estimate $\epi(t)$ according to \Cref{eqn:bippr_estimate}. The pseudocode for the Monte Carlo simulations is provided in~\Cref{alg:mc-part}. Since we are only concerned with controlling the variance, 2-independence of the random walks suffices.

More specifically, note that with high probability, each random walk requires $O(\log n)$ choices of a random outgoing edge for a total of $O(\log^2 n)$ random bits. These bits for a single random walk should be completely random and independent of each other. However, when we run $\nr$ random walks, then different random walks only need to be 2-independent of each other, so we end up using only $O(\log^2 n)$ random bits in total.

\begin{algorithm}[H]\label{alg:mc-part}
\DontPrintSemicolon
\caption{$\textup{\texttt{MonteCarlo}}_R(\nr)$}
\KwIn{the number of random walks $\nr$, a random seed $R$ for the $\nr$ random walks which should be 2-independent of each other.}
$X=\sum_{v\in V}p(v)/n$\; 
\For{$j$ \textup{from} $1$ \textup{to} $\nr$} {
Simulate an $\alpha$-discounted walk, and let $v$ be the terminal point of the walk 
\\ \textcolor{gray}{// 2-independence between any two walks suffices to bound the variance} \; 
$X=X+r(v)/\nr$\; 
}
\end{algorithm}

In round $i$, after performing all $\push$ operations, we are in fact going to construct three Monte Carlo estimates $X_\gamma$ for $\gamma\in \{1,2,3\}$. 
The first two are used to determine whether the algorithm should terminate. If we decide to terminate, the third estimate is returned as the output. 
For each $\gamma$, we use an independent random seed $R_\gamma$ specific to $\gamma$, and reuse it over all rounds $i$. 
As a result, the final estimate is independent of the estimates that decide termination. This ensures that the final estimate is unbiased. 
The full procedure of our algorithm is given in \Cref{alg:adaptive}. 
We use $2\rpush/\alpha$ instead of $\rmax$ in the stopping rule since $\rmax$ is not in the program, and we always have $\rmax=\max_{v\in V}r(v) \le 2\rpush/\alpha$ as shown in \Cref{lem:residue_upper_lower_bound}.

\begin{algorithm}[H]\label{alg:adaptive}
\DontPrintSemicolon
\caption{$\adapush(t)$}
\KwIn{target vertex $t$}
\KwOut{estimate $\epi(t)$ of $\pi(t)$}
\textbf{declare global} $\rpush$, $\bpush$, $\vpush$, $\r()$, $\p()$\;
$\textup{\texttt{Push-Init}}(t)$ \quad \textcolor{gray}{//set the push budget to $1$}\;
\For{$i=1, 2, \ldots $}{
\texttt{Increase-push-budget}($2^{i-1}$) \quad \textcolor{gray}{//the total budget over the first $i-1$ rounds is then $2^i$}\;
\For{$\gamma=1,2,3$ \textup{independently}}{
$X_{\gamma}=\texttt{MonteCarlo}_{R_\gamma}(2^{i})$  
}
$\tau=\frac{\rpush \log{n}} {\alpha 2^{i-2}}$\; 
\textbf{if} $\max\{X_1,X_2\}\geq \tau$ \textbf{then} {\bf return} $X_3$ as the estimate $\epi(t)$\;
}
\end{algorithm}

It is worth noting that the maximum number of rounds cannot exceed
\begin{align*}
i^{T}=\big\lceil \log\big({(2n/\alpha^2) \log n} \big)\big\rceil \le 1.1 \log n.    
\end{align*}
This is because after $i^T=\big\lceil \log\big({(2n/\alpha^2) \log n} \big)\big\rceil$ rounds, the time budget $2^{i^T}$ for push becomes larger than $n$, which ensures that the first $\push$ operation, the one from $t$ must have been performed. This ensures that $X_\gamma \ge \p(t)/n \ge \alpha/n$ for each $\gamma$ by~\Cref{eqn:bippr_estimate}, making the stopping rule $\max\{X_1, X_2\}\ge \tau$ trivially satisfied even when $\rpush=1$. The algorithm thus always terminates.

\subsection{Analysis}

In this subsection, we analyze the running time and approximation error of our algorithm. We begin by proving~\Cref{lem:residue_upper_lower_bound} stating that (i) $\r(v)\ge \rpush$ and (ii) $\r(v)<2\rpush /\alpha$ are invariants maintained by push.

\begin{proof}[Proof of~\Cref{lem:residue_upper_lower_bound}]
One can check that invariant (i) in~\Cref{lem:residue_upper_lower_bound} holds trivially by the design of our algorithm. The non-trivial part is to establish invariant (ii) stating for every vertex $v$, at all times, over all rounds,
$r(v)\leq 2\rpush/\alpha$. Let $\r(v)$, $\p(v)$ and $\rpush$ denote their values at the current time, and let $\rpush'=2\rpush$. For any vertex $v$,  let $\p'(v)$, $\r'(v)$ denote the values of $\p(v)$ and $\r(v)$ immediately before we halve $\rpush'$ to the current value $\rpush$. We then have $\r'(v)< \rpush'$, since we halve $\rpush$ only when the residues of all vertices are smaller than $\rpush$ at that time.  
This ensures: 
\begin{align}\label{eqn:upperbound_rlast}
\text{all vertices }v \text{ have: } \r'(v)< 2\rpush.  
\end{align}
Moreover, by the invariant~\eqref{eqn:invariant_ppr}, we have $\pi(v, t)=\p'(v) + \sum_{w\in V}\vpi(v, w)\r'(w)$.  
Combining this with \Cref{eqn:upperbound_rlast} gives that  
\begin{equation}\label{eqn:difference_psave_pi}
\text{all vertices }v \text{ have: } \pi(v, t)< \p'(v) + \sum_{w\in V}\vpi(v, w)(2 \rpush)= \p'(v)+2\rpush,  
\end{equation}
where we also apply the fact that $\sum_{w\in V}\vpi(v, w)=1$.

Note that for all $v$, $\p(v)$ can never decrease throughout the push process. This implies $\p'(v) \le \p(v)$ for all $v$ at all moments. Here, moments are only considered between the atomic $\push$ operations. 
Substituting this into \Cref{eqn:difference_psave_pi} further gives that
\begin{align}\label{eqn:psave}
\text{at all moments all }v \text{ have: } \pi(v,t) - \p(v) \le \pi(v,t) - \p'(v) < 2\rpush. 
\end{align}

Additionally, invariant~\eqref{eqn:invariant} also implies that
\begin{equation}\label{eqn:invariant_lower}
\text{at all moments all vertices }v \text{ have: } \pi(v, t)\ge \p(v) + \vpi(v, v)\r(v)\ge \p(v) + \alpha \r(v), 
\end{equation}
where we apply the fact that $\vpi(v, v)\ge \alpha$ for all $v$. 
Combining this with \Cref{eqn:psave} yields 
\begin{align}\label{eqn:differ_pi_p}
\text{at all moments all vertices }v \text{ have: } \alpha \r(v) \le \pi(v,t) - \p(v)  < 2\rpush, 
\end{align}
and thus $\alpha \r(v) < 2\rpush$ holds for all $v$ at all moments. 
This completes the proof. 
\end{proof}

\subsubsection{Setting up the round analysis}

We now set up the framework for analyzing the approximation error and expected runtime. Recall that the maximum number of rounds that are executed by our adaptive algorithm is upper bounded by $i^T\le 1.1\log{n}$. 
We then consider a variant of our algorithm that executes through the entire $i^T$ rounds, regardless of whether the stopping condition is satisfied before reaching round $i^T$. For any round $i\in [0, i^T]$, we define $\boldsymbol{\rpush^{(i)}}$, $\boldsymbol{X^{(i)}_{\gamma}}$, and $\boldsymbol{\tau^{(i)}}$ as the final value of $\rpush$, $X_\gamma$, and $\tau$ in round $i$.

With this variant, we will reach any round $i\leq i^T$ regardless of what happens in previous rounds. As a result, for each $\gamma \in \{1, 2, 3\}$, we have
\begin{align}\label{eqn:exp_variance_bound}
\E\left[X^{(i)}_{\gamma}\right] = \vpi(t), \text{ and } \Var\left[X^{(i)}_{\gamma}\right] \le (2 \rpush^{(i)}/\alpha) \vpi(t)/2^i=\frac{\vpi(t)\rpush^{(i)}}{\alpha 2^{i-1}}, 
\end{align}
where the variance bound combines \Cref{lem:variance_bippr} and the invariant (ii) in~\Cref{lem:residue_upper_lower_bound}. 

It is worth noting that even though we now pretend that we do all rounds $i\leq i^T$, we do not change the return of the algorithm. Recall that we return at the first round $i$ where $\max\{X^{(i)}_1,X^{(i)}_2\}\geq \tau^{(i)} = \frac{\rpush^{(i)}\log{n}} {\alpha 2^{i-2}}$. Let $i^R$ denote this value of $i$ when we return. The estimate returned is then $X^{(i^R)}_3$. Importantly, we note that $i^R$ only depends on the variables $X^{(\cdot)}_1$ and $X^{(\cdot)}_2$, which are independent of the variables $X^{(\cdot)}_3$. No matter how $X^{(\cdot)}_1$ and $X^{(\cdot)}_2$ are fixed, they fix $i^R$, and our estimator $X^{(i^R)}_3$ always satisfies \cref{eqn:exp_variance_bound}.

We want	to show	that our algorithm is \emph{good} in the following sense. We say the returned estimate is \emph{bad} if the returned estimate $X^{(i^R)}_3$ is not within a factor $(1 \pm 1/\log^{1/4} n)$ of $\vpi(t)$. The algorithm is \emph{good} if this happens with probability at most $1/\log^{1/4} n$ on any given input. We note that this definition of a good algorithm is stricter than the one used in the lower-bound part, where we establish a complexity lower bound for algorithms that return estimates with a constant relative error and a constant success probability. In other words, our upper-bound algorithm produces more accurate estimates while still matching the lower-bound time complexity up to logarithmic factors on most graphs.

We also want to show that it is \emph{instance-optimal} in the sense that the expected time spent up to the return in round $i^R$ is $O\left(T^* \log n\right)$ where $T^*$ is defined in~\Cref{eqn:T*}.

We will prove the goodness and efficiency of our algorithm separately in the following.

\subsubsection{Goodness}

We will first establish the goodness of our algorithm. 
\begin{lemma}\label{lem:goodness}
$\Pr\left\{|X^{(i^R)}_3 - \vpi(t)|\ge \vpi(t)/\log^{1/4} n\right\}\le 1/\log^{1/4} n$. 
\end{lemma}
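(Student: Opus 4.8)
The plan is to bound the bad event by a union bound over the at most $i^T \le 1.1\log n$ rounds, combined with a Chebyshev-type argument in the single round $i^R$ where we return. The key point, already isolated in the paragraph before the lemma, is that $i^R$ depends only on $X^{(\cdot)}_1,X^{(\cdot)}_2$ and is therefore independent of $X^{(\cdot)}_3$; so conditioning on any fixed value of $i^R$, the estimator $X^{(i^R)}_3$ still satisfies \cref{eqn:exp_variance_bound}, i.e.\ it has mean $\vpi(t)$ and variance at most $\vpi(t)\rpush^{(i^R)}/(\alpha 2^{i^R-1})$. The whole argument thus reduces to controlling, for each fixed round $i$, the probability that $X^{(i)}_3$ deviates from $\vpi(t)$ by $\vpi(t)/\log^{1/4}n$ \emph{and} that we actually return in round $i$ (or, more simply, deviates by that much in the round where the stopping rule first fires).

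First I would unpack what the stopping rule guarantees about the variance at the return round. By definition we return in round $i^R$ because $\max\{X^{(i^R)}_1,X^{(i^R)}_2\}\ge \tau^{(i^R)} = \rpush^{(i^R)}(\log n)/(\alpha 2^{i^R-2})$. Rearranging, $\rpush^{(i^R)}/(\alpha 2^{i^R-1}) \le \max\{X^{(i^R)}_1,X^{(i^R)}_2\}/(2\log n)$. Plugging this into the variance bound from \cref{eqn:exp_variance_bound} gives $\Var[X^{(i^R)}_3] \le \vpi(t)\cdot\max\{X^{(i^R)}_1,X^{(i^R)}_2\}/(2\log n)$. Now I need to replace $\max\{X^{(i^R)}_1,X^{(i^R)}_2\}$ by something comparable to $\vpi(t)$: since each $X^{(i)}_\gamma$ has mean $\vpi(t)$, a standard Chebyshev/Chernoff-style bound (using that each $X^{(i)}_\gamma$ is an average of $2^i$ negatively correlated $[0,\rpush^{(i)}\cdot\mathrm{poly}]$-bounded variables, and that at the return round its own variance is small) shows that with probability $1-o(1/\log n)$ per round we have $\max\{X_1,X_2\}=O(\vpi(t))$. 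On that event $\Var[X^{(i^R)}_3] = O(\vpi(t)^2/\log n)$, and Chebyshev gives $\Pr\{|X^{(i^R)}_3-\vpi(t)|\ge \vpi(t)/\log^{1/4}n\} = O(\sqrt{\log n}\cdot \vpi(t)^2/(\log n \cdot \vpi(t)^2)) = O(1/\sqrt{\log n})$, which after a union bound over the $O(\log n)$ rounds still needs care — so I would instead do the Chebyshev bound round-by-round against the deterministic threshold $\tau^{(i)}$ rather than against the random $\max\{X_1,X_2\}$, i.e.\ bound $\Pr\{\text{return in round }i \text{ and } X^{(i)}_3 \text{ bad}\}$ using that returning in round $i$ forces $\rpush^{(i)}/2^{i}$ small, hence forces $\Var[X^{(i)}_3]$ small, uniformly over $i$.

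Concretely, the cleanest route: for each $i$, on the event $\{i^R = i\}$ we have the deterministic-looking bound $\Var[X^{(i)}_3 \mid i^R=i] \le \vpi(t)\cdot\max\{X^{(i)}_1,X^{(i)}_2\}/(2\log n)$, and separately $\max\{X^{(i)}_1,X^{(i)}_2\}$ can be bounded above by, say, $2\vpi(t)$ except on an event of probability $O(2^{-i}\cdot\rpush^{(i)}/(\alpha\vpi(t)))$ which, summed against the stopping threshold, is $o(1/\log n)$ per round. Then Chebyshev with variance $\le \vpi(t)^2/\log n$ gives deviation probability $\le \log^{1/2}n/\log n = 1/\log^{1/2}n$ per round, and the union bound over $1.1\log n$ rounds is too weak. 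To fix the loss I would sharpen: require the deviation to be $\ge \vpi(t)/\log^{1/4}n$, so Chebyshev gives $\le \Var/(\vpi(t)^2/\log^{1/2}n) \le \log^{1/2}n/\log n = 1/\log^{1/2}n$ — still loses. The real fix, which I expect the authors use, is that the return round is a \emph{single} round, so one does not union bound the Chebyshev step over all rounds; one only union-bounds the $o(1/\log n)$-probability "$\max\{X_1,X_2\}$ is not $O(\vpi(t))$" events over all rounds (total $o(1)$, in fact $o(1/\log^{1/4}n)$ with a slightly stronger per-round bound), and then applies Chebyshev once, in round $i^R$, with variance $O(\vpi(t)^2/\log n)$, giving bad probability $O(1/\log^{1/2}n) \le 1/\log^{1/4}n$ for large $n$.

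\textbf{The main obstacle} is precisely this bookkeeping: making sure the $\log$-factor slack in the stopping threshold $\tau^{(i)} = \rpush^{(i)}(\log n)/(\alpha 2^{i-2})$ (note the extra $\log n$ compared to the variance scale $\rpush^{(i)}/(\alpha 2^{i-1})$) is large enough that (a) the union bound over all $O(\log n)$ rounds of the "early termination due to an inflated $X_1$ or $X_2$" events costs only $o(1/\log^{1/4}n)$, and simultaneously (b) conditioned on terminating legitimately in round $i^R$, the variance of $X^{(i^R)}_3$ is at most $\vpi(t)^2/\log^{1/2}n$ (say), so that a single Chebyshev application yields the claimed $1/\log^{1/4}n$ failure probability. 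Getting the exponents of $\log n$ to line up — and handling the edge case where the algorithm is forced to terminate at round $i^T$ with $\rpush^{(i^T)}$ possibly as large as $1$, where one falls back on $X_\gamma \ge \alpha/n$ and the trivial satisfaction of the stopping rule — is the delicate part; everything else is a routine second-moment computation using \cref{lem:variance_bippr}, \cref{lem:residue_upper_lower_bound}(ii), and \cref{eqn:exp_variance_bound}.
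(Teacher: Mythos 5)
Your high-level structure matches the paper's: decompose the failure event into (i) ``the stopping rule fires too early, in a round where the variance of $X_3$ is still too large'' and (ii) ``the return round has small variance but $X^{(i^R)}_3$ still deviates''; handle (ii) with a single application of Chebyshev, using the independence of the $X^{(\cdot)}_3$ from the stopping decision so that \cref{eqn:exp_variance_bound} survives conditioning on $i^R$; and handle (i) with a union bound over rounds. The paper formalizes ``too early'' via a deterministic crossover index $i^*$, the smallest $i$ with $\tau^{(i)}\le 2\vpi(t)$ (equivalently, variance at most $\vpi(t)^2/\log n$); since $\rpush^{(i)}$ is non-increasing, the variance bound is monotone in $i$, so $i^R\ge i^*$ suffices for step (ii) and yields failure probability at most $1/\log^{1/2}n$ there. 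Up to this point your proposal is consistent with the paper's proof.

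However, your treatment of step (i) --- which you yourself flag as ``the main obstacle'' --- has a genuine gap, and the one concrete per-round bound you propose is wrong. You claim that $\max\{X^{(i)}_1,X^{(i)}_2\}\le 2\vpi(t)$ fails with probability $O(2^{-i}\rpush^{(i)}/(\alpha\vpi(t)))$, which you assert is $o(1/\log n)$ per round. For early rounds this quantity is vacuous: with $\vpi(t)=\Theta(1/n)$ and $i$ small, $\rpush^{(i)}/(2^{i}\vpi(t))$ can be of order $n$. The fix, which is what the paper does, is to apply Chebyshev against the threshold $\tau^{(i)}$ itself rather than against $2\vpi(t)$: for $i<i^*$ one has $\tau^{(i)}>2\vpi(t)$, so early return forces $X^{(i)}_\gamma-\vpi(t)>\tau^{(i)}/2$, and since the variance bound equals $\vpi(t)\tau^{(i)}/(2\log n)$, Chebyshev gives $P^{(i)}=2\vpi(t)/(\tau^{(i)}\log n)$ --- the threshold scales with the variance, so the deviation needed for early return is always many standard deviations. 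Crucially, $\tau^{(i-1)}\ge 2\tau^{(i)}$, so the $P^{(i)}$ at least halve as $i$ decreases; anchored at $P^{(i^*-1)}<1/\log n$, the sum over all $i<i^*$ and $\gamma\in\{1,2\}$ telescopes to at most $4/\log n$. Without this geometric decay, a naive union bound of $1/\log n$ per round over $1.1\log n$ rounds gives only $O(1)$, which is useless. This decay argument is the missing ingredient in your write-up.
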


\begin{proof}
Recall that 
\[\tau^{(i)}=
\frac{\rpush^{(i)}\log{n}} {\alpha 2^{i-2}}\]
is our return threshold for round $i$, that is, we return when $\max\{X^{(i)}_1,X^{(i)}_2\}\geq \tau^{(i)}$.
From \Cref{eqn:exp_variance_bound},
for each round $i$, 
we have the variance bound ${(\bar\sigma}^{(i)})^2=(\vpi(t)\rpush^{(i)})/(\alpha 2^{i-1})$. Since
$\rpush^{(i)}$ is non-increasing in $i$, the variance is decreasing in $i$. We can therefore ask how large $i$ should be to yield good estimates. 

For our analysis, we define $i^*$ as the smallest integer value such that
\begin{equation}\label{eq:i*}{(\bar\sigma}^{(i^*)})^2=\frac{\vpi(t)\rpush^{(i^*)}}{\alpha 2^{i^*-1}}\leq \frac{\vpi(t)^2}{\log n}\iff 
\vpi(t)\geq \frac{\rpush^{(i^*)}\log n}{\alpha 2^{i^*-1}}=\tau^{(i^*)}/2.
\end{equation}
We define $\sigma^*=\pi(t)/\sqrt{\log n}$, which is then an upper bound for the standard deviation for any $i\geq i^*$. Then by Chebyshev's inequality, for any $\gamma=1,2,3$ and any $i\geq i^*$,
\[
\Pr\left\{\left| X^{(i)}_\gamma(t)-\vpi(t)\right| \ge \vpi(t)/\log^{1/4}n\right\}\le \frac1{\log^{1/2} n}. \]
In particular it follows that our estimate $X^{(i^R)}_3(t)$ is good if $i^R\geq i^*$.

To finish the proof of goodness,
we need to show that the
probability that $i^R<i^*$ is $O(\log^{-1/4} n)$. This
error event happens if and only if
there is some $i<i^*$ and $\gamma \in \{1, 2\}$ such that
$X^{(i)}_\gamma>\tau^{(i)}=
\frac{\rpush^{(i)}\log{n}} {\alpha 2^i}$. Since $i^*$ was minimal
satisfying \Cref{eq:i*}, $i<i^*$ implies
$\tau^{(i)}>2\vpi(t)$, so
$X^{(i)}_\gamma>\tau^{(i)}$
implies $X^{(i)}_\gamma-\vpi(t) >\tau^{(i)}/2$ and we shall use Chebyshev to bound
the probability of this event.
To this end, we note that
our variance bound can be 
written as 
\begin{align}\label{eqn:var_pi}
{(\bar\sigma}^{(i)})^2=(\vpi(t)\rpush^{(i)})/(\alpha 2^{i-1})=\vpi(t)\tau^{(i)}/(2\log n).
\end{align}
Therefore, 
\[
\Pr\left\{\left| X^{(i)}_\gamma(t)-\vpi(t)\right| \ge \tau^{(i)}/2\right\}\le \frac{\vpi(t)\tau^{(i)}}
{2\log n}\big/(\tau^{(i)}/2)^2=
\frac{2\vpi(t)}
{\tau^{(i)}\log n}\stackrel{def}=P^{(i)}.\]
We have $\tau^{(i^*-1)}>2\vpi(t)$, so 
$P^{(i^*-1)}<1/\log n$.
Moreover, $\tau^{(i-1)}\geq 2\tau^{(i)}$, so for $i<i^*$,
$P^{(i-1)}\leq P^{(i)}/2$.
Thus we conclude that the probability that we get $i^R<i^*$ is bounded by
\[\sum_{i<i^*,\gamma=1,2} P^{(i)}\leq 4/\log n<1/\log^{1/4} n\textnormal,\]
as desired.     
\end{proof}

\subsubsection{Efficiency}
We claim that the expected run time of our algorithm is $O(2^{i^*})$. For this bound it is very important
that we terminate as soon as at least one of $X^{(i)}_1$ or $X^{(i)}_2$ gets larger than $\tau^{(i)}$. Thus, we will prove
\begin{lemma}\label{lem:efficiency_1} 
The expected time of our algorithm is $O(2^{i^*})$. 
\end{lemma}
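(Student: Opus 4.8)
The plan is to bound the expected running time round by round. For each $i\ge 1$ let $\mathrm{cost}_i$ denote the time spent in round $i$, and recall that the algorithm returns in round $i^R\le i^T\le 1.1\log n$; thus the expected running time equals $\sum_{i=1}^{i^T}\E\!\left[\mathrm{cost}_i\cdot\mathbbm{1}\{i\le i^R\}\right]$, and it suffices to show this is $O(2^{i^*})$. The contribution of $\texttt{Increase-push-budget}$ to $\mathrm{cost}_i$ is deterministic and $O(2^i)$: each $\push(v)$ is charged $\din(v)+1$ against the push budget, which never goes negative and whose cumulative value through round $i$ is $1+\sum_{j=1}^{i}2^{j-1}=2^i$, while the remaining bookkeeping (maintaining $\rpush$ and locating the next $\vpush$) can be done in $O(1)$ per push, e.g.\ by bucketing the active vertices by $\lfloor\log_2 r(v)\rfloor$, which also lets each adjustment of $\rpush$ jump directly to the appropriate power of two. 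The Monte Carlo contribution to $\mathrm{cost}_i$ is the simulation of $3\cdot 2^i$ $\alpha$-discounted walks, each of expected length $O(1/\alpha)=O(1)$, so it is $O(2^i)$ in expectation. Hence $\E[\mathrm{cost}_i]=O(2^i)$, and the first $i^*+2$ rounds already contribute $\sum_{i\le i^*+2}O(2^i)=O(2^{i^*})$ via the trivial bound $\Pr[i\le i^R]\le 1$.

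For $i>i^*+2$ we must show that reaching round $i$ is unlikely enough to beat the $O(2^i)$ cost, and the technical core is a decoupling between $\mathrm{cost}_i$ and the event $\{i\le i^R\}$. That event is determined solely by $X^{(1)}_1,X^{(1)}_2,\dots,X^{(i-1)}_1,X^{(i-1)}_2$ (the thresholds $\tau^{(j)}$ are deterministic), i.e.\ by the seeds $R_1$ and $R_2$; whereas $\mathrm{cost}_i$, apart from its deterministic push/bookkeeping part, depends only on the lengths of the $2^i$ walks drawn from each of $R_1,R_2,R_3$. The deterministic part and the seed‑$R_3$ walk lengths are independent of $\{i\le i^R\}$, so they contribute at most $O(2^i)\Pr[i\le i^R]$. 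For the seed‑$R_\gamma$ walk lengths with $\gamma\in\{1,2\}$, let $\gamma'$ be the other index in $\{1,2\}$: since $\{i\le i^R\}$ implies in particular $X^{(i-1)}_{\gamma'}<\tau^{(i-1)}$, an event measurable with respect to $R_{\gamma'}$ alone and hence independent of the seed‑$R_\gamma$ walk lengths, those lengths contribute at most $O(2^i)\Pr[X^{(i-1)}_{\gamma'}<\tau^{(i-1)}]$. Writing $q_i:=\Pr[X^{(i-1)}_1<\tau^{(i-1)}]=\Pr[X^{(i-1)}_2<\tau^{(i-1)}]$ (equal by symmetry) and noting $\Pr[i\le i^R]\le q_i$, we conclude $\E\!\left[\mathrm{cost}_i\cdot\mathbbm{1}\{i\le i^R\}\right]=O(2^i)\,q_i$.

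It remains to bound $q_i$ for $i-1\ge i^*+2$ and sum. Since $\tau^{(j)}=\rpush^{(j)}(\log n)/(\alpha 2^{j-2})$ with $\rpush^{(j)}$ non‑increasing, $\tau^{(j)}$ at least halves from one round to the next, and by the minimality defining $i^*$ in \eqref{eq:i*} we have $\tau^{(i^*)}\le 2\pi(t)$; hence $\tau^{(i-1)}\le 2\pi(t)\big/2^{\,i-1-i^*}\le \pi(t)/2$ once $i-1\ge i^*+2$. Then $\{X^{(i-1)}_1<\tau^{(i-1)}\}\subseteq\{|X^{(i-1)}_1-\pi(t)|>\pi(t)/2\}$, and since $\E[X^{(i-1)}_1]=\pi(t)$ and, by \eqref{eqn:var_pi}, $\Var[X^{(i-1)}_1]\le(\bar\sigma^{(i-1)})^{2}=\pi(t)\tau^{(i-1)}/(2\log n)$, Chebyshev's inequality gives
\[
q_i\le\frac{(\bar\sigma^{(i-1)})^{2}}{(\pi(t)/2)^{2}}=\frac{2\tau^{(i-1)}}{\pi(t)\log n}\le\frac{4}{2^{\,i-1-i^*}\log n}.
\]
Therefore $\sum_{i>i^*+2}O(2^i)\,q_i=\sum_{i>i^*+2}O\!\big(2^{i^*}/\log n\big)$, which is $O(2^{i^*})$ because there are at most $i^T\le 1.1\log n$ such terms; adding the $O(2^{i^*})$ from the first $i^*+2$ rounds proves the claim.

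The step I expect to be the main obstacle is the decoupling in the second paragraph. Because the algorithm deliberately re‑uses the seeds $R_1,R_2,R_3$ across all rounds, the walk lengths that enter $\mathrm{cost}_i$ are genuinely correlated with whether round $i$ is reached, and the crude remedy of bounding each walk length by $O(\log n)$ with high probability would cost an extra $\log n$ factor, giving only $O(2^{i^*}\log n)$. The resolution is that the stopping test controlling $\{i\le i^R\}$ still has access to an independent copy $X^{(i-1)}_{\gamma'}$ after conditioning away the seed‑$R_\gamma$ walks, which is what keeps the expected walk length at $O(1)$ and yields the clean $O(2^{i^*})$ bound; verifying that the push‑plus‑bookkeeping cost of a round is $O(2^i)$ deterministically is a more routine secondary point.
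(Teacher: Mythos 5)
Your proof is correct and follows essentially the same strategy as the paper: pay the full $O(2^i)$ cost for rounds up to $i^* + 2$, and for later rounds use Chebyshev on the event that the stopping test fails, exploiting $\tau^{(i)} \le \pi(t)/2$ there. Two small differences are worth noting. First, the paper uses the independence of $X^{(i)}_1$ and $X^{(i)}_2$ to bound the probability of reaching round $i+1$ by the \emph{product} $(Q^{(i)})^2$, which decays geometrically so the tail sum is dominated by its first term; you use only one of the two estimates, getting a per-round contribution of $O(2^{i^*}/\log n)$ that you then multiply by the $O(\log n)$ bound on the number of rounds. Both closings are valid. Second, your decoupling of the (random) walk-simulation cost of round $i$ from the event of reaching round $i$ — conditioning on the test failure of the \emph{other} seed $R_{\gamma'}$ so that the seed-$R_\gamma$ walk lengths remain independent — addresses a point the paper elides, since the paper simply treats the cost of round $i+1$ as $O(2^i)$ without discussing the correlation induced by reusing $R_1,R_2,R_3$ across rounds; your treatment is the more careful one here.
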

\begin{proof}
The expected time of our algorithm is bounded by 
\[\sum_{i=1}^{i^T}2^i \Pr\{\max\{X^{(i)}_1, X^{(i)}_2\}< \tau^{(i)}\}.\] 
If we get to round $i+1$, the time
we spend is $O(2^i)$.  However, this does not happen if
$X^{(i)}_1$ or $X^{(i)}_2$ gets
larger than $\tau^{(i)}$. Note that we could also have terminated earlier if $X^{(i')}_1$ or $X^{(i')}_2$ was already larger than $\tau^{(i')}$ in previous round $i'<i+1$. But this is fine since we focus on runtime upper bound.

Let $Q^{(i)}$ be any bound on the probability that $X^{(i)}_\gamma<\tau^{(i)}$. Since $X^{(i)}_1$ and  $X^{(i)}_2$ are independent of each other, the expected cost of round $i+1$ within a constant factor is bounded by 
\begin{align}\label{eqn:runtime_i}
\prod_{\gamma=1,2}\Pr\{X^{(i)}_\gamma<\tau^{(i)}\}\,2^i\leq 
(Q^{(i)})^2 \,2^i.
\end{align}
Recall that $i^*$ is the smallest integer value such that $\tau^{(i^*)} \le 2\vpi(t)$. Moreover, 
from the definition of $\tau^{(i)}$, we have $\tau^{(i+1)}\leq \tau^{(i)}/2$.
In particular, we have that $\tau^{(i^*+2)}\leq \vpi(t)/2$.

We will focus on $i\geq i^*+2$.
Then 
$X^{(i)}_\gamma<\tau^{(i)}$ implies $X^{(i)}_\gamma<\vpi(t)/2$, or equivalently, $\vpi(t)-X^{(i)}_\gamma>\vpi(t)/2$.
Thus, by Chebyshev's inequality, 
\begin{align*}
\Pr\{X_{\gamma}^{(i)}<\tau^{(i)}\}\le \Pr\{X^{(i)}_\gamma<\vpi(t)/2\}
=\Pr\left\{\vpi(t)-X^{(i)}_\gamma>\vpi(t)/2\right\}
< \frac{{(\bar\sigma}^{(i)})^2}{(\vpi(t)/2)^2}. 
\end{align*}
Recall from~\Cref{eqn:var_pi} that ${(\bar\sigma}^{(i)})^2=\vpi(t)\tau^{(i)}/(2\log n)$. We then have: 
\begin{align*}
\Pr\{X_{\gamma}^{(i)}<\tau^{(i)}\}\le \frac{2\tau^{(i)}}{(\log n)\vpi(t)}\stackrel{def}=Q^{(i)}. 
\end{align*}
Since $\tau^{(i+1)}\le \tau^{(i)}/2$, we get $Q^{(i+1)}\le Q^{(i)}/2$, and then $(Q^{(i+1)})^2\le (Q^{(i)})^2/4$, so the expected cost for round $i+1$ in~\Cref{eqn:runtime_i} is halved when $i$ is incremented.
We conclude that the cost
for $i\geq i^*+2$ is 
dominated by the case $i=i^*+2$. 

Now $Q^{(i^*+2)}\le 1/(\log{n})$ since $\tau^{(i^*+2)} \le \vpi(t)/2$,
so we conclude that the expected cost for rounds $i+1$ 
for $i\geq i^*+2$ is $O(2^{i^*}/(\log^2 n))$. For
earlier rounds $i\leq i^*+2$, we just pay the full
cost of $O(2^i)$, so we conclude that the total cost is
$O(2^{i*})$.
\end{proof}

\subsubsection{Instance-optimality}
We will now show that our algorithm is instance optimal in the sense that $2^{i^*}=O((\log{n})T^*)$. To prove this bound, we first need to establish some lemmas.

First, we emphasize that our algorithm may push the same vertex many times, whereas our lower bound only accounts for each vertex being pushed once. However, \Cref{lem:time_push_i} shows that the number of times our algorithm can push the same vertex is bounded by $O(\log{n})$.

\begin{lemma}\label{lem:time_push_i}
A vertex is pushed at most $(3/\alpha)\log n$ times in total.
\end{lemma}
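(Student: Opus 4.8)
The plan is to bound, within a single round, how many times any fixed vertex $v$ can be pushed, using the two facts that each push adds at least $\alpha\rpush$ to $\p(v)$ and that $\p(v)\le\pi(v,t)\le 1$, together with how $\rpush$ can change during the round. Concretely, when $v$ is pushed in some round it must satisfy $\r(v)\ge\rpush$ at that moment (invariant (i) of \Cref{lem:residue_upper_lower_bound}), so the push sets $\p(v)\mathrel{+}=\alpha\r(v)\ge\alpha\rpush$. Thus if $v$ is pushed $k$ times in a round while $\rpush$ stays at least some value $\rpush_{\min}$, then $\alpha\rpush_{\min}k\le \p(v)\le\pi(v,t)\le 1$, giving $k\le 1/(\alpha\rpush_{\min})$. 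The remaining issue is that $\rpush$ can be halved several times within the round, so $\rpush_{\min}$ could be much smaller than the $\rpush$ at the start of the round.

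To control this, first I would observe that at the very start of round $i$, by invariant (ii) we have $\r(v)<2\rpush^{(i-1)}/\alpha = 4\rpush^{(i)}_{\text{start}}/\alpha$ for every $v$ — wait, more carefully: a round begins only after all residues dropped below the previous $\rpush$, and $\rpush$ is then at most that previous value; combined with the general bound $\rpush\ge \alpha\r(v)/2$ we get that when we enter round $i$ no vertex has residue $\ge \rpush$ for the current $\rpush$, so the first thing \texttt{Increase-push-budget} does is halve $\rpush$ until some vertex qualifies. The key quantitative point is that $\rpush$ can never go below $\Theta(\alpha/n)$ as long as we are still making pushes, because $\pi(v,t)\ge\alpha/n$ is false in general — instead I would use that $\rpush$ decreases below a threshold $\rpush^*$ only after every vertex with residue $\ge\rpush^*$ has been pushed, and the total ``reserve mass'' $\sum_v\p(v)$ is at most $\sum_v\pi(v,t)\le n$, so the total number of pushes ever performed with $\r(v)\ge\rpush^*$ is at most $n/(\alpha\rpush^*)$; but this counts across all rounds, not one round.

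So the cleanest route for a per-round bound is different: within one round, let $\rpush_0$ be the value of $\rpush$ at the first push of the round and $\rpush_1$ its value at the last. Every push of $v$ in this round occurs while $\rpush\in[\rpush_1,\rpush_0]$ and sets $\p(v)$ up by $\ge\alpha\rpush\ge\alpha\rpush_1$; since $\p(v)\le 1$ the number of pushes of $v$ in the round is at most $1/(\alpha\rpush_1)$. Now I need $\rpush_1\ge 1/(3n)$ or similar. Here is where I would invoke that the round's push budget is $2^i\le 2\cdot i^T$-bounded, i.e.\ $\bpush$ in round $i$ is at most $2^{i}\le (2n/\alpha^2)\log n$, and each halving of $\rpush$ below $2^{-(i-1)}$ — hmm. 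Actually the right statement: while $\rpush$ is being halved inside the while-loop it is because \emph{no} vertex has residue $\ge\rpush$; since every residue satisfies $\r(v)\ge\alpha\rpush_{\text{after halving-once}}$ would be violated... Let me instead just use: the loop halves $\rpush$ only finitely; and the total number of pushes in round $i$ is at most $\bpush$ itself (each push consumes $\din(\vpush)+1\ge 1$ from $\bpush$), which is $2^i\le 1.1\cdot 2n\log n/\alpha^2$; combining with $\p(v)\le\pi(v,t)$ and each push of $v$ adding $\ge\alpha\rpush$, and the fact that once $\rpush<\alpha/(2n)$ we have $\r(v)<2\rpush/\alpha<1/n$ for all $v$ so no residue $\ge\rpush$ can be... this still needs care.

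The step I expect to be the main obstacle is exactly this: ruling out that $\rpush$ gets halved so many times in a single round that the $1/(\alpha\rpush)$ bound explodes. I would handle it by proving a lower bound $\rpush^{(i)}\ge \alpha^2/(4\cdot 2^i)$ or similar on the value of $\rpush$ at the end of round $i$ — intuitively because halving $\rpush$ to $\rpush'$ requires that all the previously-pushed mass was spread thin, and the push budget $2^i$ limits how much pushing (hence how much residue redistribution) can have happened. Once such a bound $\rpush\ge \Omega(\alpha^2/2^i)$ is in hand, combined with $\alpha\rpush\cdot(\#\text{pushes of }v)\le \p(v)\le 1$, one might get $\#$ pushes $\le O(2^i/\alpha^3)$ which is \emph{too weak}. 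So the actual argument the authors likely use must be local to the round: since a round starts with all residues $<\rpush^{(i-1)}=2\rpush^{(i)}_{\text{initial}}$ and invariant (ii) gives $\r(v)<2\rpush/\alpha$ throughout, the push of $v$ adds $\alpha\r(v)$ where $\r(v)$ could be as small as $\rpush$ but the residue $\r(v)$ at a push is at least $\rpush^{(i)}_{\text{final}}$ and at most $2\rpush^{(i)}_{\text{initial}}/\alpha$; meanwhile between two consecutive pushes of the same $v$, $\r(v)$ must be rebuilt from $0$ to $\ge\rpush$, requiring inflow $\ge\rpush$, and the total inflow into $v$ over the round is bounded by the total mass pushed, which telescopes. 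I would make this precise and I expect the bound $(3/\alpha)\log n$ to come from $\p(v)$ rising by $\ge\alpha\rpush^{(i)}$ each time while $\p(v)\le\pi(v,t)\le 1$ and $\rpush^{(i)}\ge 1/(e^{3\log n/\alpha}\cdot\alpha)$... honestly, I would first try the hypothesis that $\rpush^{(i)}\ge 2^{-i}\alpha$ roughly (since $\rpush$ halves at most once per ``budget doubling''), giving $\#$ pushes of $v$ in round $i$ at most $1/(\alpha\rpush^{(i)})\le 2^i/\alpha^2$; then strengthen by noting this also must be $\le 2^i$ total and reconcile to get the clean $O(\log n/\alpha)$ — and I would flag the reconciliation of the per-round-budget accounting with the $\rpush$-halving count as the crux to nail down carefully.
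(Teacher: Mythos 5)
There is a genuine gap. Your central accounting step is $\alpha\rpush_{\min}\cdot k\le \p(v)\le \vpi(v,t)\le 1$, giving $k\le 1/(\alpha\rpush_{\min})$ pushes of $v$; as you yourself observe, this explodes when $\rpush$ is small (it can be as small as $\Theta(\alpha^2/(n\log n))$), and none of the repair attempts you sketch (a lower bound $\rpush=\Omega(\alpha^2/2^i)$, an ``inflow must rebuild $\r(v)$ from $0$ to $\rpush$'' telescoping, ``$\rpush$ halves at most once per budget doubling'') is carried through — you explicitly flag the reconciliation as unresolved. The missing idea is that the paper does \emph{not} bound the total growth of $\p(v)$ by $1$; it bounds the \emph{remaining} growth. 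By \Cref{eqn:differ_pi_p} (established inside the proof of \Cref{lem:residue_upper_lower_bound}), at every moment and for every $v$ one has $0\le\vpi(v,t)-\p(v)<2\rpush$ for the \emph{current} value of $\rpush$. Since each push of $v$ adds at least $\alpha\rpush$ to $\p(v)$ and $\p(v)$ can never exceed $\vpi(v,t)$, the number of pushes of $v$ performed while $\rpush$ sits at any fixed level is at most $2\rpush/(\alpha\rpush)=2/\alpha$, a constant independent of how small $\rpush$ has become. This is the step that converts your $1/(\alpha\rpush_{\min})$ into something usable.

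The second ingredient, which you gesture at but misstate, is the count of halvings: $\rpush$ is halved at most $\log(2n\log n/\alpha^2)=O(\log n)$ times over the entire execution, because once $\rpush<\alpha^2/(2n\log n)$ the threshold $\tau=\rpush(\log n)/(\alpha 2^{i-2})$ drops below $\alpha/n\le\p(t)/n\le X_\gamma$ (the push of $t$ has certainly occurred by then) and the algorithm terminates. Multiplying $2/\alpha$ pushes per level by $O(\log n)$ levels gives the stated $(3/\alpha)\log n$ bound (in fact over the whole run, hence a fortiori per round). Your proposal contains the raw ingredients ($\p(v)$ grows by $\ge\alpha\rpush$ per push; $\rpush$ only decreases by halving) but without the invariant $\vpi(v,t)-\p(v)<2\rpush$ the per-level count is not a constant and the argument does not close.
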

\begin{proof}
For any vertex $v$, a $\push$ operation increases $\p(v)$ by $\alpha \r(v)$. By invariant (i) in~\Cref{lem:residue_upper_lower_bound}, we only push from $v$ when $\r(v) \ge \rpush$. Therefore, every time we push from $v$, $\p(v)$ increases by at least $\alpha \rpush$.
Additionally, we recall from \Cref{eqn:differ_pi_p} that
\begin{align}\label{eqn:upperbound_underestimate}
\text{at all moments all vertices }v \text{ have: } 0\le \vpi(v,t)-\p(v)< 2\rpush. 
\end{align}
This implies that at any moment for any $v$, the total future increment of $\p(v)$ is at most $2\rpush$. 
As a consequence, after each update to $\rpush$, the number of pushes from $v$ before $\rpush$ is halved cannot exceed $2\rpush / (\alpha \rpush) = 2/\alpha$; otherwise, $\p(v)$ would exceed $\vpi(v,t)$, violating the property that $\vpi(v,t)-\p(v)\ge 0$ as shown above.

Moreover, we recall that $\r(t)$ is initialized as $1$, so when $\rpush$ fall below $1$, at least one $\push$ on $t$ has been performed, making $\p(t) \ge \alpha$, and thus, $X_\gamma \ge \p(t)/n \ge \alpha/n$ for any $\gamma \in \{1,2,3\}$. This implies that $\rpush$ can be halved at most $\log\left(\frac{2n \log n}{\alpha^2}\right)$ times. After that, $\rpush$ would fall below $\frac{\alpha^2}{2n \log n}$, making the stopping rule $\max\{X_1,X_2\}\geq \tau = \frac{\rpush \log{n}} {\alpha 2^{i-2}}$ trivially satisfied for any $i\ge 1$. 

Recall that we do not start pushing from scratch in each round, but rather that we in each round continue the pushing from previous rounds. 
Consequently, each vertex is pushed at most $\frac{2}{\alpha}\cdot \log\left(\frac{2n \log n}{\alpha^2}\right)\le (3/\alpha) \log{n}$ total times in all rounds before the stopping condition is met and \adapush terminates.
\end{proof}

Next, in~\Cref{lem:time_push_rmax} we are going to establish a connection between $2^i$ and $T_{\rpush^{(i)}}$, where $\rpush^{(i)}$ denotes the value of $\rpush$ at the end of round $i$, as mentioned before. We also recall from~\Cref{eqn:def_V_T} that 
$\Teps=\sum_{v\in \Veps}\left(1+\din(v)\right)$, where $\Veps=\{v\in V \mid \vpi(v,t) \ge r\}$.  
For ease of presentation, we define $\Din(U)=\sum_{v\in U}(\din(u)+1)$. Then by definition, 
\begin{align}\label{eqn:def_T_D_V}
\Teps=\Din(\Veps). 
\end{align}
We will now prove
\begin{lemma}\label{lem:time_push_rmax}
For any $i \ge 1$, $T_{2\rpush^{(i)}}\leq 2^i\leq ((3/\alpha) \log n)T_{\alpha \rpush^{(i)}}$.
\end{lemma}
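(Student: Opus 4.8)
\textbf{Proof proposal for \Cref{lem:time_push_rmax}.}

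The plan is to prove the two inequalities separately, both by relating the push budget $2^i$ spent through round $i$ to the cost of pushing vertices whose PPR value to $t$ is comparable to the current threshold $\rpush^{(i)}$. For the lower bound $T_{2\rpush^{(i)}}\le 2^i$, I would argue that every vertex $v$ with $\pi(v,t)\ge 2\rpush^{(i)}$ must have been pushed by the end of round $i$. Indeed, suppose some such $v$ had never been pushed; then $\p(v)=0$, and by invariant (ii) of \Cref{lem:residue_upper_lower_bound} combined with the reasoning behind \Cref{eqn:upperbound_underestimate} (namely $\pi(v,t)-\p(v)<2\rpush$ at all moments, applied at the end of round $i$ where the current threshold is $\rpush^{(i)}$), we would get $\pi(v,t)<2\rpush^{(i)}$, a contradiction. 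So each $v\in V_{2\rpush^{(i)}}$ was pushed at least once, and a push on $v$ costs $\din(v)+1$ from the budget. Since the total budget consumed is at most $2^i$ (the budget increments sum to $2^i$ over the first $i$ rounds) and distinct pushes consume disjoint portions of the budget, $\sum_{v\in V_{2\rpush^{(i)}}}(\din(v)+1)=\Din(V_{2\rpush^{(i)}})=T_{2\rpush^{(i)}}\le 2^i$, using \Cref{eqn:def_T_D_V}.

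For the upper bound $2^i\le ((3/\alpha)\log n)\,T_{\alpha\rpush^{(i)}}$, I would track \emph{where} the budget goes. Every unit of budget consumed in rounds $1,\dots,i$ is spent on a push of some vertex $v$, and by invariant (i) such a push happens only when $\r(v)\ge \rpush\ge \rpush^{(i)}$ (since $\rpush$ is non-increasing in the round index). At the moment of that push, invariant~\eqref{eqn:invariant_lower} gives $\pi(v,t)\ge \p(v)+\alpha\r(v)\ge \alpha\rpush^{(i)}$, so $v\in V_{\alpha\rpush^{(i)}}$. Hence all budget is spent pushing vertices in $V_{\alpha\rpush^{(i)}}$. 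Each push of such a vertex $v$ costs $\din(v)+1$, and by \Cref{lem:time_push_i} a vertex is pushed at most $(3/\alpha)\log n$ times per round, hence at most $i\cdot(3/\alpha)\log n$ times over $i$ rounds --- but actually I should be more careful here: I want a bound in terms of $T_{\alpha\rpush^{(i)}}=\Din(V_{\alpha\rpush^{(i)}})$ without an extra factor of $i$. The cleaner route is to not sum the budget directly but to observe that the budget actually spent is at most the budget available, and more importantly that the \emph{while-loop} structure in \Cref{alg:increase-budget} only pushes $\vpush$ when $\din(\vpush)<\bpush$; so the total budget consumed up to the end of round $i$ is at most $2^i$, but it is also at least... hmm. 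Let me instead bound $2^i$ from above: the total budget \emph{granted} over the first $i$ rounds is exactly $2^i$ (geometric sum $\sum_{j=1}^{i}2^{j-1}<2^i$, plus the initial $1$), and I claim all of it except at most the cost of one pending push is consumed. Actually the simplest correct statement: the budget consumed is at most $\sum_{v\in V_{\alpha\rpush^{(i)}}}(\text{number of times }v\text{ is pushed in rounds }1..i)\cdot(\din(v)+1)$, and each count is at most $(3/\alpha)\log n$ \emph{per round} --- but since the same bound $(3/\alpha)\log n$ from \Cref{lem:time_push_i}'s proof is actually proved as a \emph{global} bound (it bounds pushes before $\rpush$ halves and bounds the number of halvings), I would cite that $v$ is pushed at most $(3/\alpha)\log n$ times \emph{total}. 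That gives budget consumed $\le ((3/\alpha)\log n)\sum_{v\in V_{\alpha\rpush^{(i)}}}(\din(v)+1)=((3/\alpha)\log n)\,T_{\alpha\rpush^{(i)}}$. Finally I relate $2^i$ to the budget consumed: the budget cannot accumulate indefinitely unused, because the while-loop runs until $\bpush\le\din(\vpush)$, so at the end of round $i$ at most $\din(\vpush)\le n$ units are left unspent; if $2^i$ exceeds the consumed budget by more than $n$ this is absorbed into the bound (or handled by noting $T^*\ge 1$ and the relevant regime). I would phrase this so that $2^i\le ((3/\alpha)\log n)\,T_{\alpha\rpush^{(i)}}$ holds cleanly, perhaps by noting the trivial base cases where $\rpush^{(i)}=1$ and $t\in V_{\alpha\rpush^{(i)}}$ already contributes $\din(t)+1\ge 1$.

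\textbf{Main obstacle.} The delicate point is the upper bound: I must avoid an spurious factor of $i\approx\log n$ coming from "at most $(3/\alpha)\log n$ pushes per round over $i$ rounds," and I must correctly account for budget that is granted but not yet spent (the pending $\vpush$ with large in-degree). The resolution is to use the \emph{global} push-count bound of $(3/\alpha)\log n$ (valid because $\rpush$ halves at most $O(\log n)$ times total and $v$ is pushed $O(1/\alpha)$ times between halvings, exactly as in the proof of \Cref{lem:time_push_i}), together with the observation from \Cref{alg:increase-budget} that any leftover budget at the end of a round is at most $\din(\vpush)+1$, which is dominated by $T_{\alpha\rpush^{(i)}}$ since the vertex that would be pushed next also satisfies $\r(\vpush)\ge\rpush^{(i)}$ and hence lies in $V_{\alpha\rpush^{(i)}}$. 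The lower bound, by contrast, is essentially immediate from the invariants.
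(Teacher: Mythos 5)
Your proposal is correct and follows essentially the same route as the paper: the lower bound via the observation that every vertex of $V_{2\rpush^{(i)}}$ must already have been pushed (else $\p(v)=0$ and \Cref{eqn:differ_pi_p} would force $\pi(v,t)<2\rpush^{(i)}$) so their total in-degree cost is at most the budget $2^i$, and the upper bound by noting that all budget is charged to vertices that at some moment had $\r(v)\ge\rpush^{(i)}$, hence lie in $V_{\alpha\rpush^{(i)}}$ by \Cref{eqn:invariant_lower}, each pushed at most $(3/\alpha)\log n$ times. Your explicit accounting of the leftover budget via the pending $\vpush$ is exactly what the paper handles implicitly by defining the set $V_{\mathrm{push}+}^{(i)}$ of vertices that were \emph{ready} to be pushed, so the two arguments coincide.
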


\begin{proof}
We define $V_{\text{push}}^{(i)}$ as the set of vertices $v$ that are pushed by the end of round $i$. 
Then $\Din(V_{\text{push}}^{(i)})$ is a lower bound on the time cost of push by round $i$, since some vertices may be pushed more than once. Recall that the push budget for the first $i$ rounds is $2^{i}$ in total. Therefore, we have $\Din(V_{\text{push}}^{(i)})\le 2^{i}$. Now note that if a vertex $v$ has $\p(v) = 0$ at the end of round $i$, then $v$ has never been pushed. By~\Cref{eqn:differ_pi_p}, we have $\vpi(v,t) <\p(v)+2\rpush^{(i)}$ for all $v$, so all the vertices $v$ with $\p(v)=0$ at the end of round $i$ satisfy $\vpi(v,t) < 2\rpush^{(i)}$, implying that the set of pushed vertices $V_{\text{push}}^{(i)}$ is a superset of $V_{2\rpush^{(i)}}=\{v \mid \vpi(v,t) \ge 2\rpush^{(i)}\}$. Thus
we have
$\Din(V_{2\rpush^{(i)}}) \le \Din(V_{\text{push}}^{(i)}) \leq 2^i$.
Combining with~\Cref{eqn:def_T_D_V} that $T_{2 \rpush^{(i)}}=\Din(V_{ 2\rpush^{(i)}})$ establishes the first bound. 

For the other bound, let $V_{\mathrm{push}+}^{(i)}$ denote the set of vertices $v$ such that the condition $\r(v) \ge \rpush^{(i)}$ was satisfied at some point in round $i$. These are vertices that were ready to be pushed if we had the budget, but may not be actually pushed due to the budget constraint. We note that each $\push$ operation on a vertex $v$ costs $(\din(v) + 1)$ time, and each vertex is pushed at most $(3/\alpha) \log n$ times in round $i$, as shown in~\Cref{lem:time_push_i}. We then have 
\begin{align}\label{eqn:mid_D}
(3/\alpha) \log{n} \cdot \Din(V_{\mathrm{push}+}^{(i)}) \ge 2^i. 
\end{align} 

Furthermore, for all  $v\in V_{\mathrm{push}+}^{(i)}$, we have $\vpi(v,t)\ge \alpha \rpush^{(i)}$ since $\r(v) \ge \rpush^{(i)}$ was satisfied at some point in round $i$ by the definition of $V_{\mathrm{push}+}^{(i)}$. So the set $V_{\mathrm{push}+}^{(i)}$ is a subset of $V_{\alpha \rpush^{(i)}}=\{v \mid \vpi(v,t) \ge \alpha \rpush^{(i)}\}$. 
This implies $\Din(V_{ \alpha \rpush^{(i)}})\ge \Din(V_{\mathrm{push}+}^{(i)})$. It follows by combining with~\Cref{eqn:mid_D} that
\begin{align*}
(3/\alpha) \log{n} \cdot \Din(V_{ \alpha \rpush^{(i)}}) \ge (3/\alpha) \log{n} \cdot \Din(V_{\mathrm{push}+}^{(i)}) \ge 2^i. 
\end{align*} 
Recall from~\Cref{eqn:def_T_D_V} that $T_{\alpha \rpush^{(i)}}=\Din(V_{ \alpha \rpush^{(i)}})$. The proof is then complete. 
\end{proof}

We are now ready to prove
\begin{lemma}\label{lem:tie_i_T}
$2^{i^*}=O((\log{n})T^*)$. 
\end{lemma}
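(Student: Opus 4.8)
The plan is to pin down $2^{i^*}$ by combining the minimality of $i^*$ with \Cref{lem:time_push_rmax}, using the right choice of threshold $r$ in the definition of $T^*$. Recall from \eqref{eq:i*} that $i^*$ is the smallest integer with $\vpi(t)\ge \tau^{(i^*)}/2$, where $\tau^{(i)}=\rpush^{(i)}(\log n)/(\alpha 2^{i-2})$, so that $\tau^{(i)}/2=\rpush^{(i)}(\log n)/(\alpha 2^{i-1})$. First I would record that $i^*\ge 1$ once $n$ is large: evaluating the condition at $i=0$ (where $\rpush^{(0)}=1$) would require $\vpi(t)\ge 2(\log n)/\alpha>1\ge\vpi(t)$, a contradiction. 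I would also peel off the boundary case $i^*=1$, since then \Cref{lem:time_push_rmax} (which needs a round index $\ge 1$) cannot be invoked at $i^*-1=0$; here $2^{i^*}=2$, and one has $T^*\ge 1$ because taking $r=\vpi(t)$ in \eqref{eqn:T*} and averaging \eqref{eqn:pagerank_ppr} over $v$ shows that some vertex $v$ has $\vpi(v,t)\ge\vpi(t)$, so $T_{\vpi(t)}\ge 1$ and hence $\min\{T_{\vpi(t)},\,\vpi(t)/\vpi(t)\}=1$; thus $2^{i^*}\le 2T^*=O((\log n)T^*)$.

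For the main case $i^*\ge 2$, set $\rho=\rpush^{(i^*-1)}$. By minimality of $i^*$, the stopping condition fails at round $i^*-1$, i.e.\ $\vpi(t)<\tau^{(i^*-1)}/2=\rho(\log n)/(\alpha 2^{i^*-2})$, which rearranges to $2^{i^*}<4\rho(\log n)/(\alpha\vpi(t))$. Independently, \Cref{lem:time_push_rmax} applied at round $i^*-1\ge 1$ gives $2^{i^*-1}\le ((3/\alpha)\log n)\,T_{\alpha\rho}$, hence $2^{i^*}\le (6(\log n)/\alpha)\,T_{\alpha\rho}$. Since $\alpha\rho\in[0,1]$, instantiating \eqref{eqn:T*} with threshold $r=\alpha\rho$ yields $T^*\ge\min\{T_{\alpha\rho},\,\alpha\rho/\vpi(t)\}$. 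Now a two-case split on which term attains this minimum finishes the bound: if it is $T_{\alpha\rho}$, the second inequality gives $2^{i^*}\le(6(\log n)/\alpha)T^*$; if it is $\alpha\rho/\vpi(t)$, the first inequality gives $2^{i^*}<(4(\log n)/\alpha^2)\cdot(\alpha\rho/\vpi(t))\le(4(\log n)/\alpha^2)T^*$. In either case $2^{i^*}\le (6(\log n)/\alpha^2)\,T^*=O((\log n)T^*)$, since $\alpha$ is a constant.

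The argument is essentially mechanical given \Cref{lem:time_push_rmax} and the invariants of \Cref{lem:residue_upper_lower_bound}. The only delicate points are bookkeeping: keeping the powers of $2$ straight when turning ``the stopping test fails at round $i^*-1$'' into a numeric upper bound on $2^{i^*}$, and ensuring the round index fed to \Cref{lem:time_push_rmax} is at least $1$ --- which is exactly why the small cases $i^*\in\{0,1\}$ are handled separately using the trivial bound $T^*\ge 1$. I do not anticipate any genuinely hard step here.
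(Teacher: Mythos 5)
Your proof is correct, and it rests on exactly the same two ingredients as the paper's: the minimality of $i^*$ from \Cref{eq:i*} (giving $2^{i^*} < 4\rpush^{(i^*-1)}(\log n)/(\alpha\vpi(t))$) and \Cref{lem:time_push_rmax} at round $i^*-1$ (giving $2^{i^*}\le (6/\alpha)(\log n)T_{\alpha\rpush^{(i^*-1)}}$). The organization differs, though, in a way worth noting. The paper argues by contradiction through the maximizer $r^*$ of \Cref{eqn:T*}: it assumes $2^{i^*-1}\ge(4/\alpha^2)(\log n)T^*$, uses $T_{r'}\le T^*$ for $r'>r^*$ together with the monotonicity of $T_r$ to deduce $\alpha\rpush^{(i^*-1)}\le 2r^*$, and then feeds this back through $r^*/\vpi(t)=T^*$. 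You instead instantiate the definition of $T^*$ directly at $r=\alpha\rpush^{(i^*-1)}$, getting $T^*\ge\min\{T_{\alpha\rpush^{(i^*-1)}},\,\alpha\rpush^{(i^*-1)}/\vpi(t)\}$, and close with a two-case split on which term attains the minimum. Your route is a bit more robust: it avoids invoking the exact identity $r^*/\vpi(t)=T^*$ and the monotonicity bookkeeping around the discrete jumps of $T_r$, which the paper has to flag explicitly ("we may have $T_{r^*}>T^*$"). Your separate treatment of the boundary cases $i^*\in\{0,1\}$ (via $T^*\ge 1$) is also a small point of care that the paper glosses over; both versions yield the same $O((\log n)T^*)$ bound.
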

\begin{proof}
We define $r^*$ as a maximizing value of $r$ in \cref{eqn:T*}. This implies $r^*/\vpi(t)=T^*$. Then $T_{r^*}\geq T^*$  while $T_{r'}\leq T^*$ for any $r'>r^*$. Note that we may have $T_{r^*}> T^*$ because $T_r$ decreases with $r$ in discrete steps. We assume for contradiction that $2^{i^*-1}\ge (4/\alpha^2) (\log{n})T^{*}$, then by~\Cref{lem:time_push_rmax}, 
\begin{align*}
(3/\alpha)(\log{n})T_{\alpha \rpush^{(i^*-1)}}\ge 2^{i^*-1}\ge (4/\alpha^2)(\log{n})T^{*} \implies T_{\alpha \rpush^{(i^*-1)}}\ge T^{*} \ge T_{2r^*} \implies \alpha \rpush^{(i^*-1)} \le 2r^*. 
\end{align*}
Substituting into $r^*/\vpi(t)=T^*$ gives
\begin{align*}
T^*=\frac{r^*}{\vpi(t)}\ge \frac{(\alpha/2)\rpush^{(i^*-1)}}{\vpi(t)}. 
\end{align*}
Recall from~\Cref{eq:i*} that $\rpush^{(i^*-1)}>  \frac{(\alpha 2^{i^*-2}) \vpi(t)}{\log n}$. We then have
\begin{align*}
 T^* >\frac{\alpha^2 2^{i^*-3}}{\log{n}}=\frac{(\alpha^2/4)2^{i^*-1}}{\log{n}}, 
\end{align*}
which contracts the assumption that $2^{i^*-1}\ge (4/\alpha^2) (\log{n})T^{*}$. Therefore, we conclude that $2^{i^*-1}< (4/\alpha^2) (\log{n})T^{*}$, implying $2^{i^*}=O((\log{n})T^{*})$, completing the proof.  
\end{proof}

\subsubsection{Proving \Cref{thm:running_time_adaptive}}
With the lemmas we proved above, we are able to establish the $O(T^* \log{n})$ upper bound on the running time of $\adapush$. 

\begin{proof}[Proof of~\Cref{thm:running_time_adaptive}]
To establish the upper bound of $O(T^* \log{n})$, recall that in~\Cref{lem:goodness} we proved that the returned estimate is good. 
\Cref{lem:efficiency_1} establishes that the expected runtime of \adapush is $O(2^{i^*})$, and \Cref{lem:tie_i_T} further shows that $2^{i^*}=O(T^* \log{n})$. Combining these results gives the $O(T^* \log{n})$ runtime, and further gives~\Cref{thm:running_time_adaptive}. 
\end{proof}

\section{Instance-Smart Algorithm for Mostly-Degree-$\boldsymbol{n}$ Graphs}\label{sec:counterexample}

In this section, we show that the $\BiPR$-algorithm is not instance-optimal for all graphs. We demonstrate this by defining a simple class of graphs, called \emph{mostly-degree-$n$ graphs}, on which an instance-smart algorithm can perform much better, while still being good on all graphs.

The mostly-degree-$n$ graphs are defined as follows. A graph with $n$ vertices is considered a mostly-degree-$n$ graph if all but $o(n/\log n)$ of its vertices have both in-degrees and out-degrees equal to $n$. This implies that all vertices in a mostly-degree-$n$ graph have in- and out-degrees at least $n - o(n/\log n)$. As a result, performing just one $\push$ operation costs $\Omega(n)$ time. On the other hand, if no $\push$ is performed, $\Omega(n)$ random walks are required to move from a uniformly random source vertex to the target vertex $t$ with constant probability at least once. Therefore, the $\BiPR$ approach (and its variants) requires $\Omega(n)$ time for mostly-degree-$n$ graphs.

In contrast, we show that there exists an instance-smart algorithm that is good for all graphs, and for mostly-degree-$n$ graphs it runs in $O(\log n)$ time to estimate $\vpi(t)$. 

Specifically, we will first show that, for a mostly-degree-$n$ graph, the value of $\pi(t)$ for any target $t$ is $(1 \pm o(1))/n$, as detailed in~\Cref{lem:pi_degree_n}. We will then demonstrate that $O(\log n)$ time is sufficient to test whether a graph is a mostly-degree-$n$ graph, with the probability of a false positive being $o(1)$, as described in~\Cref{lem:test_degree_n}. As a result, the instance-smart algorithm first runs a test on the input graph. If the test indicates that the graph is a mostly-degree-$n$ graph, the algorithm returns $1/n$ as an approximation for $\pi(t)$. Otherwise, it invokes any good algorithm (e.g., our adaptive algorithm) to estimate $\pi(t)$.

\begin{lemma}\label{lem:pi_degree_n}
For an arbitrary vertex $t$ in a graph with more than $n-n/\log{n}$ vertices of degree $n$, $\pi(t)=(1\pm 1/\log^{1/4}n)/n$.   
\end{lemma}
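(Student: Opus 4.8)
The goal is to show that in a mostly-degree-$n$ graph, every target $t$ has $\pi(t) = (1 \pm 1/\log^{1/4} n)/n$. The plan is to compare the given graph $G$ against the complete graph $K_n$ (with all self-loops), in which by symmetry $\pi_{K_n}(v) = 1/n$ for every vertex, and argue that the $o(n/\log n)$ ``irregular'' vertices can only perturb $\pi(t)$ by a lower-order amount. I would work directly with the random-walk definition in~\Cref{eqn:pagerank_ppr}: $\pi(t) = \frac 1n \sum_{v \in V} \pi(v,t)$, where $\pi(v,t)$ is the probability that an $\alpha$-discounted walk from $v$ terminates at $t$.

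\textbf{Key steps.} First, let $B \subseteq V$ be the set of ``bad'' vertices whose in- or out-degree differs from $n$; by hypothesis $|B| = o(n/\log n)$, and every vertex (good or bad) has in- and out-degree at least $n - |B| = n(1 - o(1/\log n))$. Second, I would estimate $\pi(v,t)$ for a fixed source $v$ by conditioning on whether the $\alpha$-discounted walk from $v$ ever passes through a bad vertex. If the walk never touches $B$, then at each step it is at a vertex of out-degree $n$ (all out-neighbors present, including a self-loop to... actually out-degree exactly $n$ means all vertices are out-neighbors) and moves to a uniformly random vertex of $V$; hence, conditioned on surviving $k$ steps and avoiding $B$, the walk is at $t$ with probability exactly $1/n$, and it terminates at the current vertex with probability $\alpha$. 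Summing the geometric series, the contribution to $\pi(v,t)$ from bad-vertex-avoiding walks is $\frac 1n \cdot \Pr[\text{walk from } v \text{ avoids } B] \le \frac 1n$, and it is also at least $\frac 1n(1 - \Pr[\text{hit } B])$. Third, I would bound $\Pr[\text{walk from }v\text{ hits }B]$: at each step the walk is at a uniformly random vertex among at least $n - |B|$ choices (for a good current vertex), so the per-step probability of landing in $B$ is at most $|B|/(n - |B|) = o(1/\log n)$; since the expected walk length is $1/\alpha = O(1)$, a union bound over steps (or a direct geometric computation) gives $\Pr[\text{hit }B] = o(1/\log n)$ for any good source $v$. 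For the $|B|$ bad sources, $\pi(v,t) \le 1$ trivially. Finally, assembling:
\begin{align*}
n\,\pi(t) = \sum_{v \in V} \pi(v,t) \le \sum_{v} \frac 1n + o\!\left(\frac{1}{\log n}\right)\!\cdot n \cdot 1 + |B|\cdot 1 = 1 + o(1/\log n),
\end{align*}
and similarly $n\,\pi(t) \ge \sum_{v \notin B} \frac 1n (1 - o(1/\log n)) = 1 - o(1/\log n)$. Since $o(1/\log n)$ is comfortably below $1/\log^{1/4} n$, this yields $\pi(t) = (1 \pm 1/\log^{1/4} n)/n$.

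\textbf{Main obstacle.} The delicate point is handling the bad vertices \emph{as intermediate waypoints} rather than just as sources: a walk that enters $B$ can, in principle, be steered toward $t$ with much higher probability than $1/n$ (e.g., a bad vertex could have $t$ as its only out-neighbor), so I cannot simply say $\pi(v,t) \approx 1/n$ for all $v$. The resolution is the asymmetry in the two bounds: for the \emph{upper} bound I discard the bad-hitting walks entirely (bounding their contribution by $|B| + n\cdot\Pr[\text{hit }B]$ total, which is $o(n/\log n)$), and for the \emph{lower} bound I only count the bad-avoiding walks, which already give $1/n$ per good source up to the $\Pr[\text{hit }B]$ loss. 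One must be slightly careful that once a walk is at a bad vertex its subsequent out-degree may be larger than $n-|B|$ but could also be as small as $n - |B|$; however this only matters for bounding the \emph{return} probability to $B$, and since we are already throwing away all bad-hitting mass in both directions, no further care is needed. A minor check: a vertex of out-degree exactly $n$ has every vertex of $V$ as an out-neighbor (possibly with a self-loop), so the uniform-next-step claim is exact, not approximate — I would state this explicitly. I also need $\Pr[\text{walk from }v\text{ avoids }B] \to 1$ uniformly over good $v$, which follows since the bound $o(1/\log n)$ on the per-step hitting probability did not depend on $v$.
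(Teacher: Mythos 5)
Your lower bound is essentially sound, but the upper bound has a genuine quantitative gap in the final assembly. You bound the total contribution of walks that touch the bad set $B$ by $n\cdot o(1/\log n) + |B|$ (good sources that hit $B$, each bounded by its hitting probability times a conditional termination probability of $1$, plus bad sources bounded by $1$ each), and then assert this equals $o(1/\log n)$. It does not: it is $o(n/\log n)$, which can be far larger than $1$ (the definition allows, say, $|B| = n/\log^2 n$), so the conclusion $n\pi(t)\le 1 + o(1/\log n)$ does not follow, and the upper bound you actually obtain, $\pi(t)\le 1/n + o(1/\log n)$, is vacuous. The root cause is that bounding by $1$ the probability that a bad-hitting walk terminates at $t$ is far too lossy; you need such walks to reach $t$ with probability $O(1/n)$ as well. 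You in fact state the tool that delivers this --- \emph{every} vertex, bad or good, has out-degree at least $n - o(n/\log n)$, so no vertex can ``steer'' the walk onto $t$ with probability much above $1/n$ --- but you never apply it on the bad-hitting branch, and instead dismiss the waypoint issue by discarding mass that is too large to discard.

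Once you use that degree lower bound everywhere, the hit/avoid decomposition becomes unnecessary: every one-step transition probability is at most $1/(n - o(n/\log n)) = (1+o(1/\log n))/n$, hence $\Pr[W_k = t]\le (1+o(1/\log n))/n$ for every source and every $k\ge 1$ unconditionally, which immediately gives the upper bound. This is essentially the paper's route: it shows $\pi^{(1)}(u,v)$ lies between $1/n$ and $(1+o(1/\log n))/n$ for every edge, deduces $\sum_{v}\pi^{(1)}(v,t) = 1 \pm o(1/\log n)$ from $t$ having at least $n - o(n/\log n)$ in-neighbors, iterates via Chapman--Kolmogorov to control $\sum_v \pi^{(i)}(v,t)$ up to $(1\pm 1/\log n)^{i}$, and sums the geometric series, whose upper and lower versions differ by a multiplicative $1+O(1/\log n) = 1 + o(1/\log^{1/4} n)$. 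Your lower bound is a special case of this argument. A secondary loose end: you should address the case $t\in B$, since your ``avoid $B$'' event then excludes the immediate-termination mass $\alpha$ from the source $v=t$; working with $B\setminus\{t\}$ repairs this.
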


\begin{proof}
Let $\tp^{(i)}(u, v)$ be the probability that a non-terminating random walk starting at $u$ visits $v$ at step $i$. By definition, we have
\begin{align}\label{eqn:lhop}
\vpi(t)=\sum_{i=0}^{\infty}\alpha (1-\alpha)^i \sum_{u\in V}\vpi^{(i)}(u,t)/n. 
\end{align}
In the graph with more than $n/\log{n}$ vertices of degree $n$, 
all vertices have both in- and out-degrees in the range $[(1-1/\log{n})n, n]$, we then have
\begin{align}\label{eqn:onehop}
\vpi^{(1)}(u,v)=\frac{1}{\dout(u)}\in \left[\frac{1}{n}, \frac{1}{(1-1/\log{n})n}\right] 
\text{ for }\forall (u,v)\in E.
\end{align}
Notice that the in-degree of the target vertex $t$ is also in the range $[(1-1/\log{n})n, n]$. This implies
\begin{align}\label{eqn:onehop_sum}
(1-1/\log{n}) \le \sum_{v\in V}\vpi^{(1)}(v,t)\le \frac{1}{1-1/\log{n}}. 
\end{align}
Moreover, by definition we have
\begin{align*}
\pi^{(i)}(s,v)=\sum_{u\in V}\pi^{(j)}(s,u)\pi^{(i-j)}(u,v)
\end{align*}
for any $i\ge 0$, $j\in [0, i]$ and $s, v\in V$. 
Combining this with~\Cref{eqn:onehop} and~\Cref{eqn:onehop_sum} gives 
\begin{align*}
(1-1/\log n)^2 \le \sum_{v\in V}\pi^{(2)}(v,t)\le \left(\frac{1}{1-1/\log n}\right)^2. 
\end{align*}
By repeating the above and substituting into~\Cref{eqn:lhop}, we finally have
\begin{align*}
(\alpha/n)\sum_{i=0}^{\infty} (1-\alpha)^i (1-1/\log n)^i < \pi(t)\le (\alpha/n) \sum_{i=0}^{\infty} (1-\alpha)^i /(1-1/\log n)^i. 
\end{align*}
Therefore, the upper and lower bounds on the value of $\vpi(t)$ differ by at most a factor of
\begin{align*}
\frac{1-(1-\alpha)(1-1/\log{n})}{1-(1-\alpha)/(1-1/\log{n})}-1
=\frac{(1-\alpha)(2/\log{n}-1/\log^2{n})}{\alpha-1/\log{n}}\le \frac{4(1-\alpha)/\log{n}}{\alpha}
=o(1/\log^{1/4} n),
\end{align*}
thus completing the proof.
\end{proof}

Next, we will show that there exists an algorithm to test whether a graph is mostly-degree-$n$, such that for a graph $G$ that is not mostly-degree-$n$, the algorithm will correctly detect this with constant probability. The algorithm can only make a mistake if the input is not a mostly-degree-$n$ graph, but the test yields a (false) positive.

\begin{lemma}\label{lem:test_degree_n}
$O(\log n)$ running time is sufficient to test whether a graph has more than $n-n/\log{n}$ vertices of degree $n$, with the false-positive probability at most $1/e$.
\end{lemma}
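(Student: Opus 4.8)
## Proof Proposal for Lemma~\ref{lem:test_degree_n}

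The plan is to design a sampling-based tester that probes the in- and out-degrees of a small set of randomly chosen vertices, plus some of their neighbors, and declares the graph to be mostly-degree-$n$ only if everything it sees looks consistent with that property. The key observation is that a mostly-degree-$n$ graph has at most $o(n/\log n)$ ``deficient'' vertices (those with in- or out-degree below $n$), so a uniformly random vertex is deficient with probability $o(1/\log n)$. Conversely, to be a false positive, a graph that is \emph{not} mostly-degree-$n$ must have at least $\Omega(n/\log n)$ deficient vertices; I want the test to catch at least one of them with probability $1-o(1)$.

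First I would have the tester call $\jump()$ to obtain $\Theta(\log^{5/4} n)$ independent uniformly random vertices $v_1,\dots,v_k$ with $k = \Theta(\log^{5/4} n)$, and for each $v_j$ query $\indeg(v_j)$ and $\outdeg(v_j)$, each in unit time. If any sampled vertex has in- or out-degree strictly less than $n$, the tester outputs ``not mostly-degree-$n$.'' Second, I would note the subtlety that a deficient vertex could be deficient only in, say, its in-degree while having out-degree $n$, or could be a vertex that is ``almost'' full---but the definition only requires both degrees to equal $n$, so checking both $\indeg$ and $\outdeg$ on each sample suffices to detect any deficiency at a sampled vertex. The total running time is $O(\log^{5/4} n)$ as required, since each of the $k = \Theta(\log^{5/4} n)$ samples costs $O(1)$ queries.

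For the analysis of the false-positive probability: suppose $G$ is not mostly-degree-$n$, so the number of deficient vertices is $d \ge c\, n/\log n$ for the relevant constant $c$ hidden in the $o(\cdot)$ threshold of the definition. Then each $\jump()$ returns a deficient vertex with probability $d/n \ge c/\log n$, and the probability that none of the $k$ independent samples is deficient is at most $(1 - c/\log n)^k \le \exp(-ck/\log n)$. Choosing $k = C \log^{5/4} n$ for a large enough constant $C$ makes this at most $\exp(-\Theta(\log^{1/4} n)) = o(1/\log^{1/4} n)$, which establishes the claimed bound. (Conversely, there are no false negatives in the sense that matters: if every sampled vertex genuinely has both degrees equal to $n$ the tester says ``mostly-degree-$n$,'' and on a true mostly-degree-$n$ input this is the correct conclusion for the purposes of Lemma~\ref{lem:pi_degree_n}, since the instance-smart algorithm then returns $1/n$ which is within $(1\pm 1/\log^{1/4}n)/n$ of $\pi(t)$.)

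I expect the only real subtlety---not a hard obstacle---is pinning down exactly what the definition of mostly-degree-$n$ forces about the count of deficient vertices, namely that ``not mostly-degree-$n$'' means the deficient count is \emph{not} $o(n/\log n)$, so there is a constant $c$ and an infinite sequence of $n$ along which the count is at least $c n/\log n$; the bound $\exp(-\Theta(k/\log n))$ then has to be driven below $1/\log^{1/4} n$, which is why $k$ slightly super-$\log n$ (here $\log^{5/4} n$) is exactly the right budget. One should also be slightly careful that the asymptotic statement is about the family of graphs rather than a single graph, but this is the standard reading and does not affect the argument.
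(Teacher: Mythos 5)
Your proposal is correct and matches the paper's proof essentially verbatim: sample $\Theta(\log^{5/4} n)$ uniformly random vertices via $\jump()$, check both degrees of each, answer positive only if all equal $n$, and bound the false-positive probability by $(1-1/\log n)^{\Theta(\log^{5/4} n)} \le e^{-\Theta(\log^{1/4} n)} < 1/\log^{1/4} n$. The extra care you take about interpreting the $o(n/\log n)$ threshold is reasonable but not needed beyond what the paper already does.
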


\begin{proof}
For our test, we independently sample $\log n$ vertices with replacement from the graph. The test returns a positive result if all sampled vertices have in- and out-degrees equal to $n$, and a negative result otherwise. If the number of vertices with degrees below $n$ is no less than $n/\log n$, then the probability of obtaining a positive result is at most $(1 - 1/\log n)^{\log n} \le 1/e$.    
\end{proof}

Recall that, as discussed in the beginning of this section, $\adapush$ requires $\Omega(n)$ time on mostly-degree-$n$ graphs. 
Together with~\Cref{lem:test_degree_n} shows that $\adapush$ is not instance-optimal on a mostly-degree-$n$ graph.

\section{Extensions of Lower Bounds} \label{sec:lowerbound_sparsegraph}

This section complements our lower bound results presented in~\Cref{sec:instance_lowerbound_with_n_constraints} by incorporating $h$ vertices with in- or out-degree above $(1-\eps)n$, as formulated in~\Cref{thm:lowerbound_with_sparse} below.

\begin{theorem}\label{thm:lowerbound_with_sparse}
Consider any directed graph $G$ where at most $h$ vertices have in- or out-degree above $(1-\eps)n$ for some $\eps\in[0,1]$. For any $r\in [0,1]$, suppose there exists an algorithm $A$ that estimates $\pi(t)$ in expected time $O\left(\min\left\{\Teps/((h+1)\log^{1/2} n),\  r/\vpi(t)\cdot \eps /((h+1)^2\log^{3/2} n)\right\}\right)$. 
We can then construct a graph $G^+$, such that 
\begin{align*}
\pi_{G^+}(t)=\omega(\pi_{G}(t)), \quad \text{and}\quad \Pr_R\left\{\epi_{A_R(G^+)}(t)=\epi_{A_R(G)}(t)\right\}\ge 1-o(1), 
\end{align*}
where the probability $\Pr_R$ is taken over the choice of the random seed $R$ used by the algorithm $A$. 
\end{theorem}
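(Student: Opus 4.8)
The plan is to mimic the proof of~\Cref{thm:lowerbound_withn} almost verbatim, spending the extra factor $1/(h^2+1)$ in the hypothesis as a ``budget'' for the set $H$ of the (at most $h$) vertices of in- or out-degree above $(1-\eps)n$. Let $T_{A,G}$ denote the expected number of visited vertices and pairs, as before; the hypothesis now gives $T_{A,G}=O\big(\min\{\Teps,r/\vpi(t)\}\,\eps^2/((h^2+1)\log^{3/2}n)\big)$ for some $r\in[0,1]$. First I would reselect the rarely-visited edge $(x,y)$: since $T_{A,G}=O(\eps^2\Teps/((h^2+1)\log^{3/2}n))$ and $\Teps=\sum_{v\in\Veps}\din(v)$, there is an edge $(x,y)$ with $y\in\Veps$ and $p_{A,G}(x,y)=O(\eps^2/((h^2+1)\log^{3/2}n))$. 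Next I would re-prove~\Cref{lem:setW}, drawing $W$ at random (with the same inclusion probability) from $V\setminus(\{x,y,t\}\cup H)$ instead of $V\setminus\{x,y,t\}$ --- which costs nothing since $|H|\le h=\polylog n\ll n$ --- and claiming condition (c) (no vertex has $\ge(1-\eps/2)|W|$ in- or out-neighbors in $W$) only for $v\in V\setminus H$; the Chernoff argument is unchanged because it only used the degree bound $(1-\eps)n$, which holds outside $H$. This still yields $|W|\ge\pf n/T_{A,G}$ and $\sum_{w\in W}p_{A,G}(w)\le 2\pf$ with $\pf=1/\log^{1/4}n$, and hence~\Cref{lem:random_seed} still gives that a random seed is good with probability $1-o(1)$ (using $p_{A,G}(x,y)=o(1)$ and $2\pf=o(1)$).

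Then I would build $G^-$ as in~\Cref{lem:G-}: cut all edges internal to or incident to $W$, split $W=\Wisolated\cup\Wout$ with $|\Wisolated|=\lceil(\eps/2)|W|\rceil$, and reassign into $W$ the in- and out-edges of each $v\notin W$. For low-degree $v\notin H$, condition (c) lets us place all of $v$'s (formerly fewer than $(1-\eps/2)|W|$) edges to $W$ inside $\Wout$ (of size $\lfloor(1-\eps/2)|W|\rfloor$), exactly as before. For $v\in H$ we place as many of $v$'s edges to $W$ as fit in $\Wout$ and spill the remaining at most $|W|-|\Wout|=|\Wisolated|$ of them into $\Wisolated$. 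The key counting point is that the graph is simple, so each $v\in H$ contributes at most one edge in each direction to any single vertex of $\Wisolated$; hence in $G^-$ every $w\in\Wisolated$ has in-degree at most $h$ and out-degree at most $h$, all of its edges incident to $H$, and $\Wisolated$ has no edges to $V\setminus W\setminus H$. All degrees outside $W$ are preserved, so $G^-\equiv G$ by~\Cref{lem:change_graph}. Picking $y'\in\Nout(W)\setminus W$ maximizing $\pi(y',\overline W,t)$ and invoking~\Cref{eqn:either_W} gives $\max\{\pi_G(y,\overline W,t),\pi_G(y',\overline W,t)\}=\Omega(r)$; let $y''$ be the maximizer, so $\pi_{G^-}(y'',t)\ge\pi_{G^-}(y'',\overline W,t)=\pi_G(y'',\overline W,t)=\Omega(r)$, and let $(x'',y'')$ be the edge $(x,y)$ if $y''=y$, and otherwise an edge into $y'$ from $\Wout$ (which exists in $G^-$, since $y'$ still receives all its formerly-$W$ in-edges, at least one of which lands in $\Wout$); either way $x''\notin\Wisolated$.

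To build $G^+$, take any $x^*\in\Wisolated$, subdivide $(x'',y'')$ through $x^*$ (replace it by $(x'',x^*)$ and $(x^*,y'')$), and add the edge $(w,x^*)$ for every $w\in\Wisolated\setminus\{x^*\}$. Every new edge is incident to $\Wisolated\subseteq W$, and the only removed edge is $(x'',y'')$, which is either $(x,y)$ or incident to $W$; all degrees outside $W$ are unchanged, so $G^+\equiv G$ and the good-seed machinery (\Cref{lem:seed_equi,lem:prob_random_seed}) applies unchanged, giving $\Pr_R\{\epi_{A_R(G^+)}(t)=\epi_{A_R(G)}(t)\}=1-o(1)$. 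The quantitative payoff is that in $G^+$ both $x^*$ and every $w\in\Wisolated\setminus\{x^*\}$ have out-degree at most $h+1$ (their $\le h$ original out-edges into $H$, plus one new out-edge: to $y''$ for $x^*$, to $x^*$ for $w$). Granting that the PPR values propagate as in the $h=0$ case up to these degree factors --- i.e.\ $\pi_{G^+}(y'',t)=\Omega(r)$, hence $\pi_{G^+}(x^*,t)\ge\frac{1-\alpha}{h+1}\pi_{G^+}(y'',t)=\Omega(r/(h+1))$, hence $\pi_{G^+}(w,t)\ge\frac{1-\alpha}{h+1}\pi_{G^+}(x^*,t)=\Omega(r/(h+1)^2)$ --- we get
\begin{align*}
\pi_{G^+}(t)\ge\frac1n\sum_{w\in\Wisolated}\pi_{G^+}(w,t)=\Omega\!\left(\frac{|\Wisolated|\,r}{(h+1)^2\,n}\right)=\Omega\!\left(\frac{\eps\,\pf\,r}{(h+1)^2\,T_{A,G}}\right),
\end{align*}
and substituting $T_{A,G}=O\big((r/\pi_G(t))\,\eps^2/((h^2+1)\log^{3/2}n)\big)$ and $\pf=1/\log^{1/4}n$, the factors $(h+1)^2$ and $(h^2+1)$ cancel up to a constant, yielding $\pi_{G^+}(t)=\Omega(\pi_G(t)\log^{5/4}n/\eps)=\omega(\pi_G(t))$, and the theorem follows exactly as in~\Cref{subsec:pi_G+}.

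The main obstacle is the parenthetical ``granting'' step: re-establishing the chain of PPR inequalities when $\Wisolated$ is no longer a set of isolated vertices. In the $h=0$ proof, after subdivision $x^*$ has out-degree exactly $1$ (so~\Cref{lem:edge_subdividing} directly controls $\pi(x^*,t)$ via $\pi(y'',t)$), and the vertices of $\Wisolated\setminus\{x^*\}$ have no in-edges (so the added edges $(w,x^*)$ are never traversed by a walk from $x^*$, making $\pi_{G^+}(x^*,t)=\pi_{G'}(x^*,t)$). Neither holds here: $x^*$ keeps up to $h$ out-edges into $H$, and the vertices of $\Wisolated$ keep in-edges from $H$, so a walk from $x^*$ can re-enter $\Wisolated$ through $H$ and the modifications are no longer invisible. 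The intended fix is (i) to view the subdivision as a degree-$1$ subdivision (where~\Cref{lem:edge_subdividing} applies) followed by adding the $\le h$ extra out-edges to $x^*$, each of which dilutes the out-edge to $y''$ by only an $O(1/(h+1))$ factor, so the loss in $\pi(y'',t)$ and in the $x^*\!\to\!y''$ transition is $O(h+1)$; and (ii) for the last step, to lower-bound $\pi_{G^+}(x^*,t)$ by the probability of a walk from $x^*$ to $t$ that \emph{avoids} $\Wisolated\setminus\{x^*\}$ --- such walks never use the edges $(w,x^*)$, so this quantity equals its value in $G'$ --- and then to absorb the extra in-edges of $x^*$ at a constant-factor cost via~\Cref{lem:loop}. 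Care is also needed when $y''=y$, since we then subdivide the edge $(x,y)$ itself and must verify that walks avoiding $W$ are not destroyed; this is why we prefer $x''\in\Wout$ whenever possible, and the residual case is handled by applying~\Cref{lem:edge_subdividing} to the full PPR $\pi_{G^-}(y,t)=\Omega(r)$ rather than to the $\overline W$-restricted version.
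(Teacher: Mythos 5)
Your proposal takes essentially the same route as the paper's proof: restrict condition (c) of \Cref{lem:setW} to vertices outside the high-degree set $H$, observe that simplicity of the graph forces every vertex of $\Wisolated$ to retain at most $h$ in- and $h$ out-edges in $G^-$ (all incident to $H$), and pay the resulting $(h+1)^2$ dilution, $\pi_{G^+}(w,t)=\Omega(r/(h+1)^2)$, out of the extra $1/(h^2+1)$ factor in the hypothesis; the objects $(x,y)$, $W$, $\Wisolated$, $y'$, $(x'',y'')$, $x^*$ and the final arithmetic all match the paper's. Two loose ends are worth fixing. First, you must additionally require $x\notin H$ when selecting the edge $(x,y)$, as the paper does: since $x^*$ keeps in-edges from $H$ in $G^-$, the edge $(x'',x^*)$ could already exist when $x''=x\in H$, and the subdivision would then create a parallel edge or perturb the out-degree of $x''\notin W$, breaking $G^+\equiv G$. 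Second, your fallback for the $y''=y$ case---applying \Cref{lem:edge_subdividing} to the full PPR $\pi_{G^-}(y,t)$---does not work as stated, because the added edges $(w,x^*)$ change the out-degrees of the vertices of $\Wisolated$ and hence re-weight every walk from $y$ that passes through $W$, possibly many times, with no constant-factor control; the correct move, which the paper uses and which your point (ii) already contains in spirit, is to work throughout with the $W$-avoiding probability $\pi(y'',\overline W,t)$ (unaffected by any modification incident to $W$) and to absorb the deletion of the in-edge $(x,y)$ of $y$ via \Cref{lem:loop}.
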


Let $\hset$ denote the set of vertices in $G$ whose in- or out-degrees exceed $(1-\eps)n$, so $h\ge |\hset|$. 
Recall from~\Cref{eqn:def_V_T} and~\Cref{eqn:T*} that 
\begin{align*}
T^*=\max_{r\in [0, 1]}\left\{\min\left\{\Teps, r/\vpi(t)\right\}\right\}, \quad \text{where} \quad  \Veps=\{v\in V \mid \vpi(v,t) \ge r\} \quad \text{and} \quad   \Teps=\sum_{v\in \Veps}\left(1+\din(v)\right). 
\end{align*}
Therefore, \Cref{thm:lowerbound_with_sparse} establishes a lower bound of $\Omega(T^* (\eps/(h+1)^2)/\log^{3/2}n)$.

\begin{proof}
The proof is an extension of~\Cref{thm:lowerbound_withn}. 

We need to prove the theorem for any given $r\in[0,1]$, 
so we fix an arbitrary $r\in[0,1]$ in what follows. 
\Cref{thm:lowerbound_with_sparse} then assumes that
\begin{align}\label{eqn:TAG_range}
T_{A,G} = O\left(\min\left\{\frac{\Teps}{(h+1)\log^{1/2} n},\ \frac{r \eps}{\vpi(t)(h+1)^2\log^{3/2} n} \right\}\right). 
\end{align}
We also trivially have $T_{A,G}=\Omega(1)$. Additionally, since $\pi(t)\ge \alpha/n$, it follows from~\Cref{eqn:TAG_range} that $T_{A,G} =O((\eps/(h+1)^2) n/\log^{3/2}n)$. Combining these gives 
\begin{align}\label{eqn:TAG_range_2}
T_{A,G} 
=\Omega(1)\cap O\left(\frac{n \eps}{(h+1)^2 \log^{3/2}n}\right). 
\end{align}

Analogous to our earlier construction, for $\pf=1/\log^{1/5}n$, we are going to select some vertices and edges of total probability $O(\pf)$ and {\em assume they are unvisited}. First, we assume that one of the following is unvisited:  
\begin{itemize}
\item {\bf vertex $\boldsymbol{y}$}: a vertex such that $p_{A,G}(y)\le \pf$ and $\pi(y, t)\ge r$, i.e., $y\in \Veps$; 
\item {\bf edge $\boldsymbol{(x,y)}$}: an edge in $G$ such that $p_{A,G}(x,y)=O(\pf)$, $x \in \Nin(y)\setminus \hset$, and $\pi(y, t)\ge r$, i.e., $y\in \Veps$. 
\end{itemize}
Compared with the earlier definition of $(x,y)$ in~\Cref{sec:instance_lowerbound_with_n_constraints}, we now additionally require that $x\notin \hset$.  

Since $\Veps=\{v\in V \mid \pi(v,t)\ge r\}$ and $\Teps=\sum_{v\in \Veps}(1+\din(v))$, at least one of $y$ and $(x,y)$ exists, and possibly both. If, for contradiction, there is no such vertex or edge, then we have 
\begin{align*}
T_{A,G}
&=\sum_{v\in V}p_{A,G}(v)+\sum_{\{u,v\}\in V^2} p_{A,G}(u,v)
>\sum_{v\in \Veps}\left(p_{A,G}(v)+\sum_{u\in \Nin(v) \setminus \hset}p_{A,G}(u,v)\right)\\
&>\pf\sum_{v\in \Veps}(1 +\max\{0,\din(v)-h\}) 
>\delta\sum_{v\in \Veps}(1+\din(v))/
(h+1)=\frac\delta{h+1}\, \Teps.
\end{align*}
Since $\pf=1/\log^{1/5}n$, we have $T_{A,G}>\frac{\Teps}{(h+1)\log^{1/5}n} =\omega\left(\frac{\Teps}{(h+1)\log^{1/2}n}\right)$, contradicting with~\Cref{eqn:TAG_range}. This proves the existence of at least one of $y$ and $(x,y)$. 

If such $y$ exists, we assume $y$ is unvisited; otherwise, we assume that the $(x,y)$ is unvisited.

Next, we select a large subset of vertices $W\subseteq V$ whose visiting probability is $O(\pf)$. We will assume that $W$ is unvisited.  
\begin{lemma}\label{lem:setW_sparse}
For $\pf=1/\log^{1/5}n$, there exists a \textbf{vertex set} $\boldsymbol{W} \subseteq V \setminus \{y,t\}$ satisfying
\begin{enumerate}[label=(\alph*), font=\normalfont]
\item $2\pf n/T_{A, G}\ge |W|\geq \pf n/T_{A, G}$; 
\item $\sum_{w\in W} p_{A, G}(w)\leq 2\pf$; 
\item $\forall v \in V \setminus \hset$ has fewer than $(1-\eps/2)|W|$ in-edges from $W$ and fewer than $(1-\eps/2)|W|$ out-edges to $W$. 
\end{enumerate}
\end{lemma}
Compared with the earlier definition of $W$ in~\Cref{lem:setW}, conditions (a) and (b) remain unchanged, while in (c) we restrict our attention to vertices $v \in V \setminus \hset$, since each vertex in $\hset$ has unbounded in- or out-degree and may therefore connect to every vertex in $W$. 

\begin{proof}[Proof of~\Cref{lem:setW_sparse}]
The proof is mostly the same as that of~\Cref{lem:setW}. We select each vertex in $V\setminus\{y,t\}$ independently for $W$ with probability $1.5 \pf / T_{A, G}$. We will show that the conditions (a), (b), (c) listed above are satisfied with positive probability. Notably, by \Cref{eqn:TAG_range_2}, we have 
\begin{align*}
\pf n /T_{A,G} = \Omega\left(\frac{\pf n(h+1)^2 \log^{3/2}{n}}{n \eps }\right)=\omega\left(\frac{(h+1)^2 \log^{5/4} n}{\eps}\right). 
\end{align*}

\paragraph{(a)} The expected size of $W$ satisfies:
\begin{align*}
&\E\left[|W|\right]= (n-2)\cdot 1.5\,\pf /T_{A, G}>1.4\,\delta n/T_{A, G} =\omega((\log^{5/4} n)(h+1)^2/\eps).
\end{align*}
When selecting elements independently, by the Chernoff bound, the probability of getting $|W|<\delta n/T_{A,G}$ or $|W|>2\delta n/T_{A,G}$ is $1/n^{\omega(1)}$. 

\paragraph{(b)} 
The expected value of $\sum_{w\in W} p_{A, G}(w)$ is 
\[
\frac{1.5\pf}{T_{A, G}}\sum_{v\in V}p_{A, G}(v)\le \frac{1.5 \pf}{T_{A, G}} T_{A, G}=1.5 \pf. 
\]
By Markov's inequality, the probability $\sum_{w\in W} p_{A, G}(w)\geq 2\pf$ is at most $3/4$.

\paragraph{(c)} Consider all vertices $v\in V\setminus \hset$. We want to show that the probability it has fewer than $\eps \pf n/T_{A,G}$ in- or out-non-neighbors in $W$ is $1/n^{\omega(1)}$. Then a union bound implies that all $v$ in $V \setminus \hset$ have fewer than $\eps \pf n/T_{A,G}$ in- or out-non-neighbors in $W$ with probability $1-1/n^{\omega(1)}$.
This implies (c) when we combine with (a) stating that $|W|\le 2\pf n/T_{A,G}$.

The arguments for in- and out-neighbors are the same. For any vertex $v\in V\setminus \hset$, the at least $\eps n$ in-non-neighbors are picked independently for $W$, each with probability $1.5 \pf / T_{A, G}$. The expected number of in-non-neighbors of $v$ in $W$ is therefore at least
\begin{align*}
1.5 \eps n \pf / T_{A, G}=\omega\left((h+1)^2 \log^{5/4}n\right).
\end{align*}
By the Chernoff bound, the probability that the number of in-non-neighbors of $v$ falls below $\eps \pf n/T_{A, G}$ is $1/n^{\omega(1)}$.

Adding up the error probabilities of (a), (b), and (c), we get that the total error probability is below $3/4+1/n^{\omega(1)}<1$.
\end{proof}

We now fix $W$ as the one suggested in~\Cref{lem:setW_sparse}, and we assume it is unvisited. We now define
\begin{itemize}
    \item {\bf vertex} $\boldsymbol{y'}$: a vertex $y' \in \Nout(W)\setminus W$ that maximizes $\pi(y', \overline{W}, t)$. 
\end{itemize}
As in our earlier construction, we do not assume that $y'$ is unvisited. Also, we note that $y'$ may be identical to $y$. By~\Cref{lem:either} with $s=y$, we again have  
\begin{align}\label{eqn:either_W-h}
\max\{\pi(y, \overline{W}, t),\, \pi(y', \overline{W}, t)\}=\Omega(r). 
\end{align}

\paragraph{Constructing the graph $\boldsymbol{G^-}$.}
We are going to construct a graph $G^- \equiv G$. 
Using $G$ as a basis, we cut all edges internal to $W$, and arbitrarily partition $W$ into $|\Wisolated| = \lceil (\eps/2) |W|\rceil$ and $|\Wout| = \lfloor (1-\eps/2)|W|\rfloor$. 
We reassign edges that originally connect vertices in $V \setminus (W \cup \hset)$ and $W$ to ensure that these edges are adjacent only to $\Wout$ after the modification. The resulting graph is referred to as $G^-$. We note that: 
\begin{enumerate}[label=(\alph*), font=\normalfont]
\item $G^-$ differs from $G$ only in the edges that are internal to or adjacent to $W$, while preserving the in- and out-degrees of all vertices outside $W$. By~\Cref{lem:change_graph}, $G^- \equiv G$. 
\item the subset of vertices $\Wisolated \subseteq W$ in $G^-$ contains $\lceil (\eps/2) |W|\rceil$ vertices, each having at most $h$ in-edges and  $h$ out-edges, with connections only to vertices in $\hset \setminus W$.
\end{enumerate}
Here, (b) holds because we reassign all edges in $G$ that connect vertices in $V \setminus (W \cup \hset)$ to be incident on $\Wout$. As a result, each $v \in \Wisolated$ can only connect to vertices in $\hset$, and $h \ge |\hset|$. 

\paragraph{Constructing the graph $\boldsymbol{G^+}$.}
We now use $G^-$ as the basis to construct a graph $G^+$, ensuring that $G^+ \equiv G^- \equiv G$. Recall that $y' \in \Nout(W)\setminus W$ is chosen to maximize $\vpi(y', \overline{W}, t)$ in $G$. When constructing $G^-$, we never remove any adjacency edges of $W$; instead, we only reassign those edges whose endpoints are not in $\hset$ to be incident on $\Wout$. Therefore, $y' \in \Nout(W)\setminus W$ still has an in-neighbor in $W$ in $G^-$. We then arbitrarily select one such in-neighbor, denoted by $x'$. Note that $x'$ may belong to $\Wisolated$ if $y' \in \hset$.

Let $y''$ be the one of $y$ and $y'$ that maximizes $\vpi_{G^-}(y'',\overline W,t)$. By~\Cref{lem:either}, we have $\vpi_{G}(y'', \overline{W}, t) = \Omega(r)$, thus implying $\vpi_{G^-}(y'', t)\ge \vpi_{G^-}(y'', \overline{W}, t)=\vpi_{G}(y'', \overline{W}, t)=\Omega(r)$.
Moreover, we arbitrarily select a vertex $x^*$ in $\Wisolated$. Our goal is to add an edge $(x^*,y'')$. Note that when $y''\in \hset$, $(x^*,y'')$ may already exist in $G^-$. Below we only discuss the case that $(x^*,y'')$ does not exist. 

If $y''=y$ and $y$ was assumed unvisited (then the in-degree of $y$ is unknown), then we simply add $(x^*,y'')$; 

Otherwise, we define $x''=x$ if $y''=y$ and $x''=x'$ if $y''=y'$.
We note that $x\notin \hset$ by the definition of $(x,y)$, and vertices such as $x^*$ in $\Wisolated$ in $G^-$ can only connect to vertices $\hset$. So $(x, x^*)$ does not exist in $G^-$. 
Moreover, all internal edges of $W$ have been removed, so $(x', x^*)$ does not exist in $G^-$ either, given that $x', x^* \in W$. As a result, $(x'', x^*)$ does not exist in $G^-$. 
So we cut $(x'',y'')$ and instead add the edges $(x'', x^*)$ and $(x^*,y'')$. This does not change the in- or out-degrees of either $x''$ or $y''$, since neither $(x'', x^*)$ nor $(x^*,y'')$ exists in $G^-$. 
We denote the resulting graph as $G'$. All vertices outside $W$ in $G'$ have the same in- and out-degrees as those in $G^-$ (except possibly the assumed unvisited $y$). Therefore, we have $G^+\equiv G^- \equiv G$. 
We also note that adding $(x^*,y'')$ (if it does not exists in $G^-$)  can only make $\vpi_{G'}(y'',\overline{W}, t)\ge \vpi_{G^-}(y'',\overline{W}, t)$. In addition, by~\Cref{lem:loop}, cutting $(x'', y'')$ still ensures that $\vpi_{G'}(y'',\overline{W}, t)\ge \alpha \vpi_{G^-}(y'', \overline{W}, t)=\Omega(r)$. Combining these, we obtain $\vpi_{G'}(y'',\overline{W}, t)=\Omega(r)$. 

Finally, we add edges $(w,x^*)$ for every $w\in \Wisolated \setminus \{x^*\}$, and we call the resulting graph $G^+$. 
Every vertex $w\in \Wisolated$ has a path of length at most two to $y''$,
and the degrees in $\Wisolated$ are at most $|\hset|+1\leq h+1$, so 
\[\vpi_{G^+}(w, t) \ge ((1-\alpha)/(h+1))^2\vpi_{G'}(y'',\overline{W},  t) =\Omega(r/(h^2+1)).\]

Since $|\Wisolated|\ge (\eps/2)|W|$ and $|W|\ge \pf n/T_{A,G}$, we conclude that 
\begin{align*}
\pi_{G^+}(t)=\sum_{v\in V}\pi_{G^+}(v,t)/n\ge\sum_{w\in \Wisolated}\pi_{G^+}(w,t)/n=|\Wisolated|\Omega(r/(h+1)^2)/n=\Omega\left( \frac{\eps r \pf}{(h+1)^2 T_{A,G}}\right). 
\end{align*}

By substituting $T_{A, G} = \frac{r \eps}{\vpi_{G}(t)(h+1)^2\log^{3/2} n}$
from~\Cref{eqn:TAG_range} and $\pf=1/\log^{1/5} n$, we obtain 
\begin{align*}
\vpi_{G^+}(t) = \Omega( \vpi_G(t) (\log^{13/10} n)/\eps)=\omega(\vpi_G(t)),   \end{align*}
thus finishing the proof. 
\end{proof}

\section{Extensions to Multigraphs and Weighted Graphs}
\label{sec:extension_multigraphs_weighted}

This section extends our results to multigraphs and weighted graphs. 
Perhaps a bit counter-intuitively, we shall see that the bidirectional $\BiPR$ is instance-optimal on all multigraphs, but it is not instance-optimal on all simple weighted graphs. 

Note that on multigraphs and weighted graphs, we consider only standard queries, including the degree queries $\indeg$ and $\outdeg$, the neighbor queries $\innbr$ and $\outnbr$, and either the $\jump$ query (for upper bounds) or the vertex direct-access query (for lower bounds). 
We do not consider additional queries such as the adjacency query $\adj$, or non-adjacency queries $\noninnbr$ and $\nonoutnbr$.

\subsection{Multigraphs}
We first consider multigraphs, in which parallel edges between two vertices are allowed. In this setting, the degree queries $\indeg(v)$ and $\outdeg(v)$ return the in- and out-degrees of a vertex $v$, denoted by $\din(v)$ and $\dout(v)$, respectively. These correspond to the total number of incoming and outgoing edges incident to $v$, counting parallel edges with multiplicity.

Each parallel edge is stored separately in the adjacency list, without merging duplicates. Consequently, neighbor queries $\innbr(u,i)$ and $\outnbr(u,i)$ may return the same vertex $v$ for different indices $i$, due to the presence of parallel edges. We do not support queries that return the number of parallel edges between two vertices. 

We also do not support non-adjacency queries as mentioned above. 
If we supported these additional queries, then we would have the same problems as for simple graphs, for if we know that a vertex has in/out-degree $n$ and we know that the list of its non-in/out-neighbors is empty, then we know that the vertex has a single edge from/to every other vertex in the graph. 
In contrast, under the standard adjacency-list model, finding a vertex with degree $n$ on multigraphs is no longer informative: this does not imply that the vertex is adjacent to all other vertices, since the degree could instead come from $n$ parallel edges to a single vertex.

As a result, in a multigraph there may be multiple independent probabilities $p_{A,G}(u,v)$, one for each parallel edge between $u$ and $v$. Visiting any one of these parallel edges causes both $u$ and $v$ to be considered visited. Accordingly, $T_{A,G}$ is defined as the expected number of visited vertices and (multi-)edges in $G$, that is,
\begin{align*}
T_{A,G}=\sum_{(u,v)\in E} p_{A,G}(u,v)+\sum_{v\in V} p_{A,G}(v), 
\end{align*}
where each parallel edge is counted separately.

Regarding the upper bound, the runtime of the $\adapush$ (i.e., the adaptive version of $\BiPR$ descibed in~\Cref{alg:adaptive}) remains $O(T^* \log{n})$ on multigraphs, the same as that established in~\Cref{thm:running_time_adaptive}. Here, recall from~\Cref{eqn:def_V_T} and~\Cref{eqn:T*} that 
\begin{align*}
T^*=\max_{r\in [0, 1]}\left\{\min\left\{\Teps, r/\vpi(t)\right\}\right\}, \quad \text{where} \quad  \Veps=\{v\in V \mid \vpi(v,t) \ge r\} \quad \text{and} \quad   \Teps=\sum_{v\in \Veps}\left(1+\din(v)\right). 
\end{align*}

The following theorem establishes the instance-optimality of $\adapush$ on multigraphs by establishing matching lower bounds (up to polylogarithmic factors). 

\begin{theorem}\label{thm:complexity_multigraphs}
Consider any directed multigraph $G$ of order $n$. For any $r\in [0,1]$, suppose there exists an algorithm $A$ that estimates $\pi(t)$ in expected time 
\begin{align*}
T_{A,G} = O\left(\min\left\{\Teps,\  r/\vpi(t)\right\}/\log^{1/2}n \right) \text{ for some } r \in [0,1].
\end{align*}
We can then construct a graph $G^+$, such that 
\begin{align*}
\pi_{G^+}(t)=\omega(\pi_{G}(t)), \quad \text{and}\quad \Pr_R\left\{\epi_{A_R(G^+)}(t)=\epi_{A_R(G)}(t)\right\}\ge 1-o(1), 
\end{align*}
where the probability $\Pr_R$ is taken over the choice of the random seed $R$ used by the algorithm $A$. 
\end{theorem}

Since we need to prove~\Cref{thm:complexity_multigraphs} for every $r\in [0,1]$, from now on we fix an arbitrary $r\in [0,1]$. \Cref{thm:complexity_multigraphs} thus implies that
\begin{align}\label{eqn:TAG_assumption_multigraph}
T_{A,G}=O\left(\min\left\{\Teps , r/\pi(t)  \right\}/\log^{1/2}n \right). 
\end{align}
We also have
\begin{align*}
T_{A,G}=\Omega(1)\cap O(n/\log^{1/2}n), 
\end{align*}
by combining the trivial bound $\Omega(1)$ and the fact that $\pi(t)\ge \alpha/n$. 

\begin{proof}[Proof of~\Cref{thm:complexity_multigraphs}]
The proof largely follows that of~\Cref{thm:lowerbound_withn}. Let $\pf=1/\log^{1/5}n$. 
We will assume that one of the following is unvisited: 
\begin{itemize}
    \item \textbf{vertex} $\boldsymbol{y}$: a vertex such that $p_{A,G}(y)\le \pf$ and $\pi(y,t)\ge r$, i.e., $y\in \Veps$; 
    \item \textbf{vertex} $\boldsymbol{(x,y)}$: an edge in $G$ such that $p_{A,G}(x,y)\le \pf$ and $\pi(y,t)\ge r$, i.e., $y\in \Veps$. 
\end{itemize} 
If $y$ exists, we assume that $y$ is unvisited. Otherwise,
we assume $(x,y)$ is unvisited. 
The proof of showing at least one of $y$ and $(x,y)$ exists is the same as the one in~\Cref{subsec:define_notations_lowerbound}. 

Moreover, we will select a subset of vertices $W \subseteq V$, and assume $W$ is unvisited. 
The definition of $W$ is slightly different from that in~\Cref{lem:setW}: 
when parallel edges are allowed, we remove the upper bound on the number in- and out-edges between each $v\in V$ and $W$, as required in (c) of~\Cref{lem:setW}. Instead, we only require that
\begin{lemma}\label{lem:setW_multigraphs}
For $\pf=1/\log^{1/5}n$, 
there exists a vertex set $W \subseteq V\setminus\{y,t\}$ satisfying:  
\begin{enumerate}[label=(\alph*), font=\normalfont]
\item $|W|\geq \pf n/T_{A, G}$; 
\item $\sum_{w\in W} p_{A, G}(w)\leq 2\pf$; 
\end{enumerate}
\end{lemma}

\Cref{lem:setW_multigraphs} follows by the same proof as parts (a) and (b) of~\Cref{lem:setW}, with the only difference being that $\eps$ is set to a constant, for example $\eps = 1/4$.

Same as before, we fix $W$ as the one suggested in~\Cref{lem:setW_multigraphs}, and assume it is unvisited. We also define
\begin{itemize}
    \item {\bf vertex} $\boldsymbol{y'}$: a vertex $y' \in \Nout(W)\setminus W$ that maximizes $\pi(y', \overline{W}, t)$. 
\end{itemize}
We do not assume that $y'$ is unvisited. By~\Cref{lem:either} with $s=y$, we again have  
\begin{align}\label{eqn:either_W-h_multi}
\max\{\pi(y, \overline{W}, t),\, \pi(y', \overline{W}, t)\}=\Omega(r). 
\end{align}

Now we are going to construct a graph $G^- \equiv G$. 
Since we now allow parallel edges, we can construct $G^-$ with $\Theta(|W|)$ isolated vertices without requiring an upper bound on the number of in- and out-edges between each $v\in V$ and $W$. 
\begin{lemma}
\label{lem:G-_multigraphs}
There exists a graph $G^-\equiv G$, such that
\begin{enumerate}[label=(\alph*), font=\normalfont]
\item $G^-$ differs from $G$ only in the edges that are internal to or adjacent to $W$, while preserving the in- and out-degrees of all vertices outside $W$. 
\item there exists a subset of vertices $\Wisolated \subseteq W$ in $G^-$ containing $\Theta(|W|)$ isolated vertices.
\end{enumerate}
\end{lemma}
\begin{proof}
We use the structure of $G$ as a basis. We first remove all internal edges within $W$, and arbitrarily partition $W$ into two subsets, $\Wisolated$ and $\Wout=W\setminus \Wisolated$, ensuring that $|\Wisolated|=\Theta(|W|)$ and $|\Wout|\ge 1$. For each $v\notin W$, we cut all of its incident edges to $W$, and add the same number of (parallel) edges to arbitrary vertices in $\Wout$. As a result, as long as $|\Wout| \ge 1$, we can always ensure that the in- and out-degree of each $v\in V \setminus W$ remain unchanged. 
\end{proof}
Using $G^-$ as the basis, we are able to construct $G^+ \equiv G$ with $\vpi_{G^+}(t)=\omega(\vpi_G(t))$. 
The construction mostly follows from~\Cref{lem:G^+}. The only difference lies in the size of $\Wisolated$. Specifically, recall that in~\Cref{eqn:pi_G+}, we have 
\begin{align*}
\pi_{G^+}(t)=|\Wisolated| \Omega(r)/n. 
\end{align*}
We now have $|\Wisolated|=\Omega(|W|)$ by~\Cref{lem:G-_multigraphs},  $|W|\ge \pf n /T_{A,G}$ by~\Cref{lem:setW_multigraphs}, $T_{A,G}=O\left(\frac{r/\pi_{G}(t)}{\log^{1/2}n}\right)$ by~\Cref{eqn:TAG_assumption_multigraph}, and $\pf=1/\log^{1/5}n$. This yields 
\begin{align*}
\pi_{G^+}(t)=\Omega( \pi_G(t) \log^{3/10}n)=\omega(\pi_G(t)). 
\end{align*}
Meanwhile, we have $G^+ \equiv G$. By~\Cref{lem:prob_random_seed}, we have $\Pr_R\left\{\epi_{A_R(G)}(t)=\epi_{A_R(G^+)}(t)\right\}\ge 1-o(1)$. 
This completes the proof. 
\end{proof}

\subsection{Weighted Graphs}
Above we show that $\adapush$ is instance-optimal for all multigraphs. One might think that the situation for simple weighted graphs should be similar in that the weight can be viewed as representing multiplicity, but it turns out that $\adapush$ is not instance-optimal on all simple weighted graphs.

\subsubsection{Counterexample on weighted graphs.}
We first present a counterexample graph on which an instance-smart algorithm outperforms $\adapush$ by more than a polylogarithmic factor in running time.

\begin{figure}[H]
\centering
\includegraphics[width=0.65\linewidth]{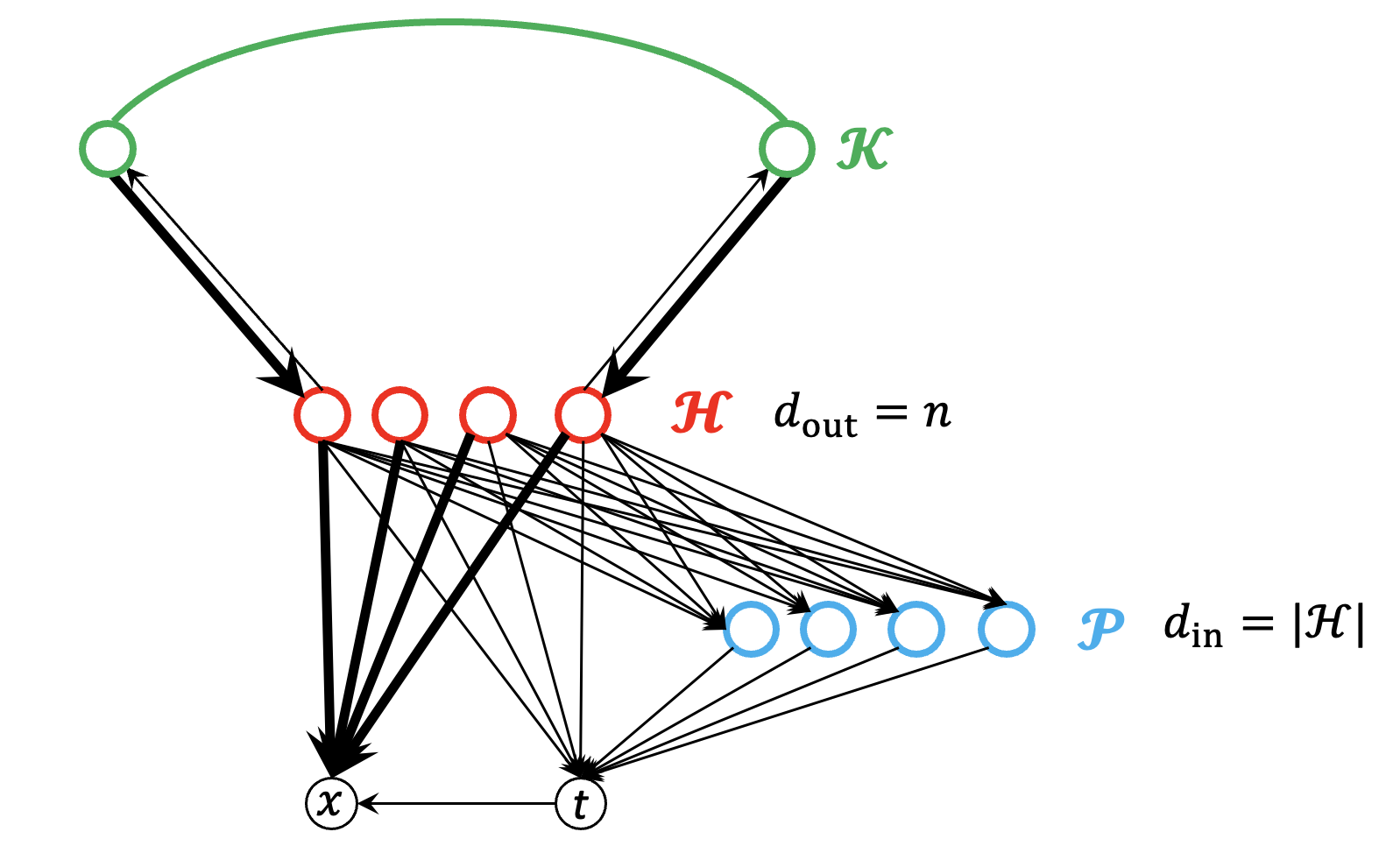}
\caption{Counterexample on weighted graphs} \label{fig:counterexample_weighted}
\end{figure}

\Cref{fig:counterexample_weighted} illustrates the counterexample. The graph consists of a target vertex $t$, a vertex $x$, a set $\mathcal{P}$ of predecessors of $t$, and a set $\mathcal{H}$ of $n$-out-degree vertices. All vertices outside $\mathcal{H}$ have a single outgoing edge to $\mathcal{H}$, so vertices in $\mathcal{H}$ accumulate large PageRank scores. We set $|\mathcal{H}| = |\mathcal{P}|$.

Each vertex in $\mathcal{P}$ has incoming edges only from vertices in $\mathcal{H}$, so its in-degree is $|\mathcal{H}|$. Each such vertex has a single outgoing edge to $t$, and $t$ has a single outgoing edge to $x$.

Each vertex in $\mathcal{H}$ has out-degree $n$. Among its outgoing edges, the edge to $x$ has weight $1 - 1/n^3$, while all other outgoing edges have negligible weight.

The vertex $x$ receives incoming edges only from $t$ and $H$, and has no outgoing edges.
All vertices outside $\mathcal{H}\cup \mathcal{P} \cup\{x,t\}$ have outgoing edges only to $\mathcal{H}$.

An instance-smart algorithm can traverse from $t$ to $x$, and then traverse every in-neighbor of $x$ (i.e., all vertices in  $\mathcal{H}$). The algorithm will observe that every vertex in $\mathcal{H}$ has out-degree $n$. 
Note that the number of vertices $n$ in the graph is known to the algorithm. 
So finding a vertex with out-degree $n$ means that such a vertex connects to all vertices in the graph. Additionally, when traversing in-neighbors of $x$, the instance-smart algorithm can notice that $x$ receives almost all incoming weight from $\mathcal{H}$. The algorithm can then perform $\push$ from $t$ to its in-neighbors. After traversing all in-neighbors of $t$, the algorithm will know that $t$ has incoming edges only from $\mathcal{H}$ and $\mathcal{P}$. Then the algorithm can traverse all vertices in $\mathcal{P}$. The algorithm will find that each vertex in $\mathcal{P}$ has in-degree $|\mathcal{H}|$, indicating that vertices in $\mathcal{P}$ receive incoming edges from every vertex in $\mathcal{H}$ and only from $\mathcal{H}$. Since the algorithm has already noticed that $x$ receives almost all incoming weight from $\mathcal{H}$, it follows that both $t$ and the vertices in $\mathcal{P}$ receive negligible weight from $\mathcal{H}$. Consequently, $\pi(t)$ is essentially proportional to $|\mathcal{P}|$.

In contrast, $\adapush$ cannot find $x$, but only performs $\push$ from $t$ to its in-neighbors. So the algorithm must verify that no edge from $\mathcal{H}$ to $\mathcal{P}$ carries a large weight. This incurs an additional $|\mathcal{H}|$ factor in the running time compared to the instance-smart algorithm. By setting $|\mathcal{H}|$ larger than any polylogarithmic factor, we rule out the possibility that $\adapush$ is instance-optimal (up to polylogarithmic factors) on such graphs.

\subsubsection{Instance-optimality}
While $\adapush$ is not instance-optimal (within logarithmic factors) for all simple weighted graphs, we do get an instance-optimality for weighted simple graphs that is a bit stronger than for simple unweighted graphs.

Recall that, for simple unweighted graphs, we proved that $\adapush$ was instance-optimal within a factor $(h+1)^2/\eps$, under the assumption that $G$ has at most $h$ vertices whose in- or out-degree exceeds $(1-\eps)n$. 
For the simple weighted graphs, we do not have any limits on the maximum in-degree, and instead assume only that $G$ has at most $h$ vertices whose {\em out-degree} exceeds $(1-\eps)n$. Under this assumption, we will prove that $\adapush$ is instance-optimal within a factor $(h+1)/\eps$.

Again, we consider only the standard adjacency-list model for weighted graphs, and do not allow additional queries such as adjacency queries or non-neighbor queries. The degree queries $\indeg(v)$ and $\outdeg(v)$ still return the in-degree and out-degree of a vertex $v$, corresponding to the number of incoming and outgoing edges incident to $v$, respectively. 

For any edge $(u,v)$, it has an associated weight $p(u,v)\le 1$, representing the probability that a walk leaving $u$ follows this edge.
For any vertex $u$, the weights $p(u)$ of all outgoing edges from $u$ sum to 1. 
The out-neighbor query $\outnbr(u,i)$ returns the $i$-th out-neighbor $v$ of $u$, along with the weight $p(u,v)$. Similarly, the in-neighbor query $\innbr(v,i)$ returns the $i$-th in-neighbor $u$ of $v$, along with the weight $p(u,v)$.

The $\adapush$ algorithm can be easily adapted to this setting. In the Monte Carlo simulations, we use the edge probabilities $p(u,v)$ when leaving $u$, instead of the uniform probability $1/\dout(u)$. In the $\push(v)$ operation, we make a corresponding modification by increasing $\r(u)$ for each in-neighbor $u$ of $v$ by $(1-\alpha)p(u,v)\r(v)$. The overall running time remains $O(T^* \log{n})$, where 
\begin{align*}
T^*=\max_{r\in [0, 1]}\left\{\min\left\{\Teps, r/\vpi(t)\right\}\right\}, \quad \text{where} \quad  \Veps=\{v\in V \mid \vpi(v,t) \ge r\} \quad \text{and} \quad   \Teps=\sum_{v\in \Veps}\left(1+\din(v)\right), 
\end{align*}
as defined in~\Cref{eqn:def_V_T} and~\Cref{eqn:T*}.

The following theorem establishes a lower bound of $\Omega(T^* ((\eps/(h+1))/\log^{3/2}n))$ for weighted graphs where at most $h$ vertices have out-degree above $(1-\eps)n$.  
Together with the $O(T^*)$ runtime of $\adapush$, this implies that $\adapush$ is instance-optimal (up to logarithmic factors) if the maximum out-degree of the graph is $O(n/\polylog{n})$. 

\begin{theorem}\label{thm:complexity_weightedgraphs}
Consider any directed weighted graph $G$ where at most $h$ vertices have out-degree above $(1-\eps)n$ for some $\eps \in [0,1]$. For any $r\in [0,1]$, suppose there exists an algorithm $A$ that estimates $\pi(t)$ in expected time 
\begin{align*}
T_{A,G} = O\left(\min\left\{\frac{\Teps}{(h+1)\log^{1/2} n},\  \frac{r \eps}{\vpi(t) \log^{3/2} n}\right\}\right).
\end{align*}
We can then construct a graph $G^+$, such that 
\begin{align*}
\pi_{G^+}(t)=\omega(\pi_{G}(t)), \quad \text{and}\quad \Pr_R\left\{\epi_{A_R(G^+)}(t)=\epi_{A_R(G)}(t)\right\}\ge 1-o(1), 
\end{align*}
where the probability $\Pr_R$ is taken over the choice of the random seed $R$ used by the algorithm $A$. 
\end{theorem}

\begin{proof}
The proof mostly follows from~\Cref{thm:lowerbound_with_sparse}. Since we need to prove the theorem for any given $r\in [0,1]$, we fix an arbitrary $r\in [0,1]$ in what follows. Then~\Cref{thm:complexity_weightedgraphs} assumes that 
\begin{align}\label{eqn:TAG_assumption_weighted}
T_{A,G} = O\left(\min\left\{\frac{\Teps}{(h+1)\log^{1/2} n},\  \frac{r \eps}{\vpi(t) \log^{3/2} n}\right\}\right).
\end{align}
We also have 
\begin{align}\label{eqn:TAG_range_weighted}
T_{A,G} = \Omega(1) \cap O\left(\frac{n \eps}{\log^{3/2}n}\right), 
\end{align}
by applying the trivial bound $\Omega(1)$ and the fact that $\pi(t)\ge \alpha/n$. 

We define $\hset$ as the set of vertices in $G$ whose out-degree exceeds $(1-\eps)n$, which implies that $h \ge |\hset|$. In addition, we define $\pf = 1/\log^{1/5} n$. Analogous to our earlier construction, we are going to select some vertices and edges of total probability $O(\pf)$ and assume they are unvisited. 

First, we assume that one of the following is unvisited. 
\begin{itemize}
\item {\bf vertex $\boldsymbol{y}$}: a vertex such that $p_{A,G}(y)\le \pf$ and $\pi(y, t)\ge r$, i.e., $y\in \Veps$; 
\item {\bf edge $\boldsymbol{(x,y)}$}: an edge in $G$ such that $p_{A,G}(x,y)=O(\pf)$, $x \in \Nin(y)\setminus \hset$, and $\pi(y, t)\ge r$, i.e., $y\in \Veps$. 
\end{itemize}
At least one of $y$ and $(x,y)$ exists, and the proof follows exactly from the one in~\Cref{thm:lowerbound_with_sparse}. If such $y$ exists, we assume $y$ is unvisited; otherwise, we assume that $(x,y)$ is unvisited. 

Next, we select a large subset of vertices $W\subseteq V$ whose visiting probability is $O(\pf)$. We will assume that $W$ is unvisited. 

\begin{lemma}\label{lem:setW_weightedgraphs}
For $\pf=1/\log^{1/5}n$, 
there exists a vertex set $W \subseteq V\setminus\{y,t\}$ satisfying:  
\begin{enumerate}[label=(\alph*), font=\normalfont]
\item $|W|\geq \pf n/T_{A, G}$; 
\item $\sum_{w\in W} p_{A, G}(w)\leq 2\pf$; 
\item 
No vertex in $V \setminus \hset$ has outgoing edges to all vertices in $W$. 
\end{enumerate}
\end{lemma}

Recall that in the definition of $W$ for simple unweighted graphs, we require that every vertex outside $\hset$ has at most $(1-\eps)n$ incident edges to $W$. In contrast, we now require only that, for every vertex outside $\hset$, there exists at least one vertex in $W$ that has no incoming edge from it. 

\begin{proof}
The proof is analogous to that of~\Cref{lem:setW_sparse}. We note that, by~\Cref{eqn:TAG_range_weighted} and the setting that $\delta=1/\log^{1/5}n$, we have 
\begin{align}\label{eqn:expsize_W_weighted}
\pf n / T_{A,G}=\Omega\left(\frac{\delta n \log^{3/2}n}{n\eps }\right)=\omega\left(\frac{\log^{5/4}n}{\eps}\right). 
\end{align}
We select each vertex in $V \setminus \{y,t\}$ independently for $W$ with probability $(1+1/4)\pf /T_{A,G}$. 

\paragraph{(a)} The expected size of $W$ satisfies
\begin{align*}
\E \left[|W|\right]=(n-2)(1+1/4)\pf /T_{A,G}>(1+1/5)\pf n /T_{A,G}=\omega\left((\log^{5/4}n)/\eps\right).
\end{align*}
When selecting vertices independently, by the Chernoff bound, the probability of getting $|W|<\pf n / T_{A,G}$ is $1/n^{\omega(1)}$. 

\paragraph{(b)} The expected value of $\sum_{w\in W}p_{A,G}(w)$ is
\begin{align*}
\frac{(1+1/4)\pf}{T_{A, G}}\sum_{v\in V}p_{A, G}(v)\le \frac{(1+1/4) \pf}{T_{A, G}} T_{A, G}=(1+1/4) \pf. 
\end{align*}
By Markov's inequality, the probability $\sum_{w\in W} p_{A, G}(w)\geq 2\pf$ is at most $5/8$.

\paragraph{(c)} Recall that all vertices outside $\hset$ have out-degree less than $(1-\eps)n$. Consider any vertex $u \in V \setminus \hset$. The probability that all vertices in $W$ are out-neighbors of $u$ is equal to the probability that all non-neighbors of $u$ are not included in $W$, which is at most 
\begin{align*}
(1-(5/4)\pf/T_{A,G})^{n-\dout(u)}. 
\end{align*}
Since $\dout(u)\le (1-\eps)n$ and $\pf / T_{A,G}=\omega\left(\frac{\log^{5/4}n}{\eps n}\right)$ by~\Cref{eqn:expsize_W_weighted}, the probability is at most $o(1/n)$.  
Then by the union bound, for every $v\in V \setminus \hset$, the probability that at least one vertex in $W$ is not its out-neighbor is at least $1-o(1/n)\ge 1/10$. 

Adding up the error probabilities of (a), (b), and (c), we get that the total error probability is below $5/8+1/10+1/n^{\omega(1)}<1$.
\end{proof}

Same as before, we fix $W$ as the one suggested in~\Cref{lem:setW_weightedgraphs}, and we assume it is unvisited. Then we define
\begin{itemize}
    \item {\bf vertex} $\boldsymbol{y'}$: a vertex $y' \in \Nout(W)\setminus W$ that maximizes $\pi(y', \overline{W}, t)$. 
\end{itemize}
As in our earlier construction, we do not assume that $y'$ is unvisited. Also, we note that $y'$ may be identical to $y$. By~\Cref{lem:either} with $s=y$, we again have  
\begin{align}\label{eqn:either_W-h-weighted}
\max\{\pi(y, \overline{W}, t),\, \pi(y', \overline{W}, t)\}=\Omega(r). 
\end{align}

In the following, we are going to construct equivalent graphs $G^-$ and $G^+$, with $G^+ \equiv G^- \equiv G$, by modifying only $(x,y)$ (if assumed unvisited), and the edges incident on $W$ and $y$ (if assumed unvisited), while preserving the in- and out-degrees of all vertices outside $W$ (except possibly the assumed unvisited $y$), as suggested in~\Cref{lem:change_graph}. Note that on weighted graphs, modifying an edge (or non-edge) includes not only changing its adjacency but also adjusting its edge weight.

\paragraph{Constructing the graph $\boldsymbol{G^-}$.}
We will use $G$ as the basis to construct a graph $G^-$ with the following properties.
\begin{lemma}
\label{lem:G-weighted}
There exists a graph $G^-\equiv G$, such that
\begin{enumerate}[label=(\alph*), font=\normalfont]
\item $G^-$ differs from $G$ only in edges that are internal to or incident on $W$, while preserving the in- and out-degrees of all vertices outside $W$.
\item every vertex in $W$ has a self-loop in $G^-$ with edge weight $1 - 1/n^2$.
\item there exists a vertex $x^* \in W$ that has no incoming edges from vertices outside $\hset \cup W$.
\end{enumerate}
\end{lemma}

\begin{proof}
Using $G^-$ as the basis, we cut all edges internal to $W$. 
Moreover, we arbitrarily select a vertex $x^*$ in $\Wisolated$. For every vertex $u\in V\setminus (W\cup \hset)$, by~\Cref{lem:setW_weightedgraphs} (c), we know that $u$ has at most $|W|-1$ out-going edges to $W$. We cut all these outgoing edges and re-assign them to vertices in $W$, ensuring that their endpoints are not $x^*$. 

For each outgoing edge $(w,v)$ with $w \in W$, we retain all edges but reassign their weights as follows: we add a self-loop at $w$ and assign it a dominant weight $p(w,w) = 1 - 1/n^2$. The remaining weight $1/n^2$ is distributed evenly among the other outgoing edges of $w$. We denote the resulting graph by $G^-$, which satisfies all the properties in~\Cref{lem:G-weighted}. 
\end{proof}

\paragraph{Constructing the graph $\boldsymbol{G^+}$.}
We now use $G^-$ as the basis to construct a graph $G^+$ that is equivalent to $G$ and satisfies $\pi_{G^+}(t)=\omega(\pi_{G}(t))$. 

Recall that $y' \in \Nout(W)\setminus W$ is chosen to maximize $\vpi_{G}(y', \overline{W}, t)$ in $G$. Let $y''$ be whichever of $y$ and $y'$ maximizes $\vpi_{G^-}(y'',\overline{W},t)$. By~\Cref{lem:G-weighted} (a),  the graph $G^-$ coincides with $G$ on vertices outside W and on edges not incident to $W$. Therefore, $\pi_{G^-}(y'', \overline{W}, t)=\pi_{G}(y'',\overline{W},t)$. It then follows from~\Cref{lem:either} that $\vpi_{G^-}(y'', \overline{W}, t)=\vpi_{G}(y'', \overline{W}, t)=\Omega(r)$. 

As shown in~\Cref{lem:G-weighted} (c), there exists a vertex $x^* \in W$ that has no incoming edges from vertices in $V\setminus (\hset \cup W)$. Our goal is to add an edge $(x^*, y'')$ and assign it a weight 
\begin{align}\label{eqn:weight_x*y''}
p(x^*, y'')\ge 1-1/n^2, 
\end{align}
so that it dominates the total weight of the outgoing edges of $x^*$. Meanwhile, we need to preserve the in- and out-degree of $y''$ if $y''$ is not assumed unvisited. 

The edge $(x^*, y'')$ may already exist when $y''\in \hset$. In this case, since $x^* \in W$, $(x^*, y'')$ and $(x^*, x^*)$ are both unvisited, then we can modify their edge weights. We swap the edge weights between $p(x^*, x^*)$ and $p(x^*, y)$, ensuring that $p(x^*, y)=1-1/n^2$ and the total weight of the outgoing edges of $x^*$ still equals $1$. 

When $(x^*, y'')$ doesn't exist, we consider the following two cases. 
\begin{itemize}
    \item If $y''$ is assumed unvisited (then the in- and out-degree of $y$ is unknown), then we directly add the edge $(x^*, y'')$ and set its edge weight $p(x^*, y)=1-1/n^2$.
    To ensure that the total weight of the outgoing edges of $x^*$ remains $1$, we cut the self-loop $(x^*, x^*)$, noting that $p_G(x^*, x^*)=1-1/n^2$ by~\Cref{lem:G-weighted} (b).  
    \item Otherwise, in addition of cutting $(x^*,x^*)$, and adding the edge $(x^*, y'')$ with edge weight $1-1/n^2$, we also need to preserve the in-degree of $y''$. 
    \begin{itemize}
        \item If $y''=y$, we add $(x,x^*)$, set the edge weight $p(x,x^*)=p(x,y)$, and cut $(x,y)$. By this means, the in- and out-degree of both $x$ and $y$ remain unchanged, and $p(x)=p(y)=1$. 
        
        \item Otherwise (i.e., $y''=y'$), we arbitrarily select an in-neighbor $x'$ of $y'$ from $W$. Since $x'\in W$, $x'$ is unvisited, then we do not need to preserve the in- and out-degree of $x'$, but only preserve those of $y'$. So we add the weight of $(x',y')$ to the weight of $(x',x')$, and then we cut $(x',y')$. 
    \end{itemize}
\end{itemize}

We denote the resulting graph at this current stage by $G'$. Recall that $\vpi_{G^-}(y'', \overline{W}, t)=\Omega(r)$. In the transition from $G^-$ to $G'$, the edge $(x,y)$ is the only edge not incident to $W$ that may be modified. 
By~\Cref{lem:loop}, cutting $(x,y)$ ensures that $\vpi_{G'}(y'', \overline{W}, t)\ge \alpha \vpi_{G^-}(y'', \overline{W}, t)=\Omega(r)$.  
 
Finally, for every vertex $w \in W \setminus \{x^*\}$, we add the edge $(w,x^*)$, assign it the weight of the self-loop at $w$, and then we cut the self-loop. By~\Cref{lem:G-weighted}, the self-loop at each $w\in W \setminus \{x^*\}$ has weight $1-1/n^2$ in $G^-$, and this weight may be larger for $w=x'$, as described above. As a result, 
\begin{align}\label{eqn:weight_wx*}
p(w,x^*)\ge 1-1/n^2, \text{ for every vertex }w \in W \setminus \{x^*\}. 
\end{align}
We denote the resulting graph as $G^+$. 

In the transition from $G'$ to $G^+$, all modification are on edges incident on $W$, so $\vpi_{G^+}(y'', \overline{W}, t)=\vpi_{G'}(y'', \overline{W}, t)=\Omega(r)$. For each $w\in W \setminus \{x^*\}$, it at least has a two-hop path to $y''$ via $x^*$, and we have $p(w,x^*)\ge 1-1/n^2$ by~\Cref{eqn:weight_wx*} and $p(x^*,y)\ge 1-1/n^2$ by~\Cref{eqn:weight_x*y''}. Therefore, 
\begin{align*}
\pi_{G^+}(w,t)\ge (1-\alpha)^2 p(w, x^*) p(x^*, y'')\pi_{G^+}(y'',t)=\Omega(r). 
\end{align*}
It follows
\begin{align*}
\pi_{G^+}(t)=
\sum_{v\in V}\pi_{G^+}(v,t)/n>\sum_{w\in W}\pi_{G^+}(w,t)/n=|W|\Omega(r)/n.   
\end{align*}
Substituting $|W|\ge \pf n /T_{A,G}$ from~\Cref{lem:setW_weightedgraphs}, $T_{A,G} =\left(\frac{r \eps}{\vpi_{G}(t) \log^{3/2} n}\right)$ from~\Cref{eqn:TAG_assumption_weighted}, and $\pf=1/\log^{1/5}n$ yields 
\begin{align*}
\pi_{G^+}(t)=\Omega(r\delta /T_{A,G})=\Omega((\pi_{G}(t)/\eps)\log^{13/10}n)=\omega(\pi_{G}(t)). 
\end{align*}
Meanwhile, we have $G^+ \equiv G$. By~\Cref{lem:prob_random_seed}, we have $\Pr_R\left\{\epi_{A_R(G)}(t)=\epi_{A_R(G^+)}(t)\right\}\ge 1-o(1)$. 
This completes the proof. 
\end{proof}

\bibliographystyle{alphaurl}
\bibliography{paper}

@article{instance_optimal_Roughgarden19,
  author       = {Tim Roughgarden},
  title        = {Beyond worst-case analysis},
  journal      = {Commun. {ACM}},
  volume       = {62},
  number       = {3},
  pages        = {88--96},
  year         = {2019},
  url          = {https://doi.org/10.1145/3232535},
  doi          = {10.1145/3232535},
  timestamp    = {Thu, 09 May 2019 13:10:55 +0200},
  biburl       = {https://dblp.org/rec/journals/cacm/Roughgarden19.bib},
  bibsource    = {dblp computer science bibliography, https://dblp.org}
}

@inproceedings{FOCS_best_HaeuplerHRTT24,
  author       = {Bernhard Haeupler and
                  Richard Hlad{\'{\i}}k and
                  V{\'{a}}clav Rozhon and
                  Robert E. Tarjan and
                  Jakub Tetek},
  title        = {Universal Optimality of Dijkstra Via Beyond-Worst-Case Heaps},
  booktitle    = {65th {IEEE} Annual Symposium on Foundations of Computer Science, {FOCS}
                  2024, Chicago, IL, USA, October 27-30, 2024},
  pages        = {2099--2130},
  publisher    = {{IEEE}},
  year         = {2024},
  url          = {https://doi.org/10.1109/FOCS61266.2024.00125},
  doi          = {10.1109/FOCS61266.2024.00125},
  timestamp    = {Tue, 08 Jul 2025 16:39:50 +0200},
  biburl       = {https://dblp.org/rec/conf/focs/HaeuplerHRTT24.bib},
  bibsource    = {dblp computer science bibliography, https://dblp.org}
}

@inproceedings{soda_Thorup0W026,
  author       = {Mikkel Thorup and
                  Hanzhi Wang and
                  Zhewei Wei and
                  Mingji Yang},
  editor       = {Kasper Green Larsen and
                  Barna Saha},
  title        = {PageRank Centrality in Directed Graphs with Bounded In-Degree},
  booktitle    = {Proceedings of the 2026 Annual {ACM-SIAM} Symposium on Discrete Algorithms,
                  {SODA} 2026, Vancouver, BC, Canada, January 11-14, 2026},
  pages        = {3820--3840},
  publisher    = {{SIAM}},
  year         = {2026},
  url          = {https://doi.org/10.1137/1.9781611978971.140},
  doi          = {10.1137/1.9781611978971.140},
  timestamp    = {Wed, 25 Feb 2026 08:28:15 +0100},
  biburl       = {https://dblp.org/rec/conf/soda/Thorup0W026.bib},
  bibsource    = {dblp computer science bibliography, https://dblp.org}
}

@inproceedings{wtf_GuptaGLSWZ13,
  author       = {Pankaj Gupta and
                  Ashish Goel and
                  Jimmy Lin and
                  Aneesh Sharma and
                  Dong Wang and
                  Reza Zadeh},
  editor       = {Daniel Schwabe and
                  Virg{\'{\i}}lio A. F. Almeida and
                  Hartmut Glaser and
                  Ricardo Baeza{-}Yates and
                  Sue B. Moon},
  title        = {{WTF:} the who to follow service at Twitter},
  booktitle    = {22nd International World Wide Web Conference, {WWW} '13, Rio de Janeiro,
                  Brazil, May 13-17, 2013},
  pages        = {505--514},
  publisher    = {International World Wide Web Conferences Steering Committee / {ACM}},
  year         = {2013},
  url          = {https://doi.org/10.1145/2488388.2488433},
  doi          = {10.1145/2488388.2488433},
  timestamp    = {Thu, 30 Jan 2025 09:01:37 +0100},
  biburl       = {https://dblp.org/rec/conf/www/GuptaGLSWZ13.bib},
  bibsource    = {dblp computer science bibliography, https://dblp.org}
}

@book{books_Goldreich17,
  author       = {Oded Goldreich},
  title        = {Introduction to Property Testing},
  publisher    = {Cambridge University Press},
  year         = {2017},
  url          = {http://www.cambridge.org/us/catalogue/catalogue.asp?isbn=9781107194052},
  doi          = {10.1017/9781108135252},
  isbn         = {978-1-107-19405-2},
  timestamp    = {Wed, 10 Jul 2019 18:07:32 +0200},
  biburl       = {https://dblp.org/rec/books/cu/Goldreich17.bib},
  bibsource    = {dblp computer science bibliography, https://dblp.org}
}

@article{PODS_ssppr,
  author       = {Xinpeng Jiang and
                  Haoyu Liu and
                  Siqiang Luo and
                  Xiaokui Xiao},
  title        = {Tighter Lower Bounds for Single Source Personalized PageRank},
  journal      = {CoRR},
  volume       = {abs/2507.14462},
  year         = {2025},
  url          = {https://doi.org/10.48550/arXiv.2507.14462},
  doi          = {10.48550/ARXIV.2507.14462},
  eprinttype    = {arXiv},
  eprint       = {2507.14462},
  timestamp    = {Thu, 23 Oct 2025 23:00:50 +0200},
  biburl       = {https://dblp.org/rec/journals/corr/abs-2507-14462.bib},
  bibsource    = {dblp computer science bibliography, https://dblp.org}
}

@misc{bertram2026undirected,
      title={Personalized PageRank Estimation in Undirected Graphs}, 
      author={Christian Bertram and Mads Vestergaard Jensen},
      year={2026},
      eprint={2602.10843},
      archivePrefix={arXiv},
      primaryClass={cs.DS},
      url={https://arxiv.org/abs/2602.10843}, 
}

@article{LiuL24_hppr,
  author       = {Haoyu Liu and
                  Siqiang Luo},
  title        = {{BIRD:} Efficient Approximation of Bidirectional Hidden Personalized
                  PageRank},
  journal      = {Proc. {VLDB} Endow.},
  volume       = {17},
  number       = {9},
  pages        = {2255--2268},
  year         = {2024},
  url          = {https://www.vldb.org/pvldb/vol17/p2255-liu.pdf},
  doi          = {10.14778/3665844.3665855},
  timestamp    = {Thu, 23 Oct 2025 23:00:51 +0200},
  biburl       = {https://dblp.org/rec/journals/pvldb/LiuL24.bib},
  bibsource    = {dblp computer science bibliography, https://dblp.org}
}

@inproceedings{kwok_yang_ITCS26,
  author       = {Tsz Chiu Kwok and
                  Zhewei Wei and
                  Mingji Yang},
  editor       = {Shubhangi Saraf},
  title        = {On Solving Asymmetric Diagonally Dominant Linear Systems in Sublinear
                  Time},
  booktitle    = {17th Innovations in Theoretical Computer Science Conference, {ITCS}
                  2026, Bocconi University, Milan, Italy, January 27-30, 2026},
  series       = {LIPIcs},
  volume       = {362},
  pages        = {89:1--89:25},
  publisher    = {Schloss Dagstuhl - Leibniz-Zentrum f{\"{u}}r Informatik},
  year         = {2026},
  url          = {https://doi.org/10.4230/LIPIcs.ITCS.2026.89},
  doi          = {10.4230/LIPICS.ITCS.2026.89},
  timestamp    = {Wed, 25 Feb 2026 08:26:37 +0100},
  biburl       = {https://dblp.org/rec/conf/innovations/KwokW026.bib},
  bibsource    = {dblp computer science bibliography, https://dblp.org}
}

@inproceedings{sigmod_LiaoLDW22,
  author       = {Meihao Liao and
                  Rong{-}Hua Li and
                  Qiangqiang Dai and
                  Guoren Wang},
  editor       = {Zachary G. Ives and
                  Angela Bonifati and
                  Amr El Abbadi},
  title        = {Efficient Personalized PageRank Computation: {A} Spanning Forests
                  Sampling Based Approach},
  booktitle    = {{SIGMOD} '22: International Conference on Management of Data, Philadelphia,
                  PA, USA, June 12 - 17, 2022},
  pages        = {2048--2061},
  publisher    = {{ACM}},
  year         = {2022},
  url          = {https://doi.org/10.1145/3514221.3526140},
  doi          = {10.1145/3514221.3526140},
  timestamp    = {Sun, 02 Nov 2025 21:27:21 +0100},
  biburl       = {https://dblp.org/rec/conf/sigmod/LiaoLDW22.bib},
  bibsource    = {dblp computer science bibliography, https://dblp.org}
}

@article{siamcomp_DagumKLR00,
  author       = {Paul Dagum and
                  Richard M. Karp and
                  Michael Luby and
                  Sheldon M. Ross},
  title        = {An Optimal Algorithm for Monte Carlo Estimation},
  journal      = {{SIAM} J. Comput.},
  volume       = {29},
  number       = {5},
  pages        = {1484--1496},
  year         = {2000},
  url          = {https://doi.org/10.1137/S0097539797315306},
  doi          = {10.1137/S0097539797315306},
  timestamp    = {Sat, 27 May 2017 14:22:58 +0200},
  biburl       = {https://dblp.org/rec/journals/siamcomp/DagumKLR00.bib},
  bibsource    = {dblp computer science bibliography, https://dblp.org}
}

@inproceedings{vldb_GyongyiGP04,
  author       = {Zolt{\'{a}}n Gy{\"{o}}ngyi and
                  Hector Garcia{-}Molina and
                  Jan O. Pedersen},
  editor       = {Mario A. Nascimento and
                  M. Tamer {\"{O}}zsu and
                  Donald Kossmann and
                  Ren{\'{e}}e J. Miller and
                  Jos{\'{e}} A. Blakeley and
                  K. Bernhard Schiefer},
  title        = {Combating Web Spam with TrustRank},
  booktitle    = {(e)Proceedings of the Thirtieth International Conference on Very Large
                  Data Bases, {VLDB} 2004, Toronto, Canada, August 31 - September 3
                  2004},
  pages        = {576--587},
  publisher    = {Morgan Kaufmann},
  year         = {2004},
  url          = {http://www.vldb.org/conf/2004/RS15P3.PDF},
  doi          = {10.1016/B978-012088469-8.50052-8},
  timestamp    = {Thu, 08 Oct 2020 09:19:26 +0200},
  biburl       = {https://dblp.org/rec/conf/vldb/GyongyiGP04.bib},
  bibsource    = {dblp computer science bibliography, https://dblp.org}
}

@article{instnce-optimal_AfshaniBC17,
  author       = {Peyman Afshani and
                  J{\'{e}}r{\'{e}}my Barbay and
                  Timothy M. Chan},
  title        = {Instance-Optimal Geometric Algorithms},
  journal      = {J. {ACM}},
  volume       = {64},
  number       = {1},
  pages        = {3:1--3:38},
  year         = {2017},
  url          = {https://doi.org/10.1145/3046673},
  doi          = {10.1145/3046673},
  timestamp    = {Sun, 02 Oct 2022 15:39:52 +0200},
  biburl       = {https://dblp.org/rec/journals/jacm/AfshaniBC17.bib},
  bibsource    = {dblp computer science bibliography, https://dblp.org}
}

@inproceedings{sosa_bidirectional_HaeuplerHRTT25,
  author       = {Bernhard Haeupler and
                  Richard Hlad{\'{\i}}k and
                  V{\'{a}}clav Rozhon and
                  Robert E. Tarjan and
                  Jakub Tetek},
  editor       = {Ioana Oriana Bercea and
                  Rasmus Pagh},
  title        = {Bidirectional Dijkstra's Algorithm is Instance-Optimal},
  booktitle    = {2025 Symposium on Simplicity in Algorithms, {SOSA} 2025, New Orleans,
                  LA, USA, January 13-15, 2025},
  pages        = {202--215},
  publisher    = {{SIAM}},
  year         = {2025},
  url          = {https://doi.org/10.1137/1.9781611978315.16},
  doi          = {10.1137/1.9781611978315.16},
  timestamp    = {Mon, 10 Mar 2025 16:06:56 +0100},
  biburl       = {https://dblp.org/rec/conf/sosa/HaeuplerHRTT25.bib},
  bibsource    = {dblp computer science bibliography, https://dblp.org}
}

@inproceedings{stoc_ValiantV16_instance_optimal,
  author       = {Gregory Valiant and
                  Paul Valiant},
  editor       = {Daniel Wichs and
                  Yishay Mansour},
  title        = {Instance optimal learning of discrete distributions},
  booktitle    = {Proceedings of the 48th Annual {ACM} {SIGACT} Symposium on Theory
                  of Computing, {STOC} 2016, Cambridge, MA, USA, June 18-21, 2016},
  pages        = {142--155},
  publisher    = {{ACM}},
  year         = {2016},
  url          = {https://doi.org/10.1145/2897518.2897641},
  doi          = {10.1145/2897518.2897641},
  timestamp    = {Tue, 06 Nov 2018 11:07:06 +0100},
  biburl       = {https://dblp.org/rec/conf/stoc/ValiantV16.bib},
  bibsource    = {dblp computer science bibliography, https://dblp.org}
}

@inproceedings{focs_NarayananRTT24_thorup,
  author       = {Shyam Narayanan and
                  V{\'{a}}clav Rozhon and
                  Jakub Tetek and
                  Mikkel Thorup},
  title        = {Instance-Optimality in I/O-Efficient Sampling and Sequential Estimation},
  booktitle    = {65th {IEEE} Annual Symposium on Foundations of Computer Science, {FOCS}
                  2024, Chicago, IL, USA, October 27-30, 2024},
  pages        = {658--688},
  publisher    = {{IEEE}},
  year         = {2024},
  url          = {https://doi.org/10.1109/FOCS61266.2024.00048},
  doi          = {10.1109/FOCS61266.2024.00048},
  timestamp    = {Tue, 08 Jul 2025 16:39:50 +0200},
  biburl       = {https://dblp.org/rec/conf/focs/NarayananRTT24.bib},
  bibsource    = {dblp computer science bibliography, https://dblp.org}
}

@inproceedings{focs_BlancCW25_clement,
  author       = {Guy Blanc and
                  Cl{\'{e}}ment L. Canonne and
                  Erik Waingarten},
  title        = {Instance-Optimal Uniformity Testing and Tracking},
  booktitle    = {66th {IEEE} Annual Symposium on Foundations of Computer Science, {FOCS}
                  2025, Sydney, Australia, December 14-17, 2025},
  pages        = {1351--1365},
  publisher    = {{IEEE}},
  year         = {2025},
  url          = {https://doi.org/10.1109/FOCS63196.2025.00071},
  doi          = {10.1109/FOCS63196.2025.00071},
  timestamp    = {Tue, 17 Feb 2026 08:37:29 +0100},
  biburl       = {https://dblp.org/rec/conf/focs/BlancCW25.bib},
  bibsource    = {dblp computer science bibliography, https://dblp.org}
}

@article{siamcomp/ValiantV17,
  author       = {Gregory Valiant and
                  Paul Valiant},
  title        = {An Automatic Inequality Prover and Instance Optimal Identity Testing},
  journal      = {{SIAM} J. Comput.},
  volume       = {46},
  number       = {1},
  pages        = {429--455},
  year         = {2017},
  url          = {https://doi.org/10.1137/151002526},
  doi          = {10.1137/151002526},
  timestamp    = {Sat, 27 May 2017 14:22:58 +0200},
  biburl       = {https://dblp.org/rec/journals/siamcomp/ValiantV17.bib},
  bibsource    = {dblp computer science bibliography, https://dblp.org}
}

@incollection{books_ValiantV20,
  author       = {Gregory Valiant and
                  Paul Valiant},
  editor       = {Tim Roughgarden},
  title        = {Instance Optimal Distribution Testing and Learning},
  booktitle    = {Beyond the Worst-Case Analysis of Algorithms},
  pages        = {506--526},
  publisher    = {Cambridge University Press},
  year         = {2020},
  url          = {https://doi.org/10.1017/9781108637435.029},
  doi          = {10.1017/9781108637435.029},
  timestamp    = {Sun, 06 Oct 2024 20:54:47 +0200},
  biburl       = {https://dblp.org/rec/books/cu/20/ValiantV20.bib},
  bibsource    = {dblp computer science bibliography, https://dblp.org}
}

@book{book_Roughgarden_beyond,
  editor       = {Tim Roughgarden},
  title        = {Beyond the Worst-Case Analysis of Algorithms},
  publisher    = {Cambridge University Press},
  year         = {2020},
  url          = {https://doi.org/10.1017/9781108637435},
  doi          = {10.1017/9781108637435},
  isbn         = {9781108637435},
  timestamp    = {Mon, 04 Jan 2021 16:41:48 +0100},
  biburl       = {https://dblp.org/rec/books/cu/20/R2020.bib},
  bibsource    = {dblp computer science bibliography, https://dblp.org}
}

@incollection{books_Barbay20,
  author       = {J{\'{e}}r{\'{e}}my Barbay},
  editor       = {Tim Roughgarden},
  title        = {From Adaptive Analysis to Instance Optimality},
  booktitle    = {Beyond the Worst-Case Analysis of Algorithms},
  pages        = {52--71},
  publisher    = {Cambridge University Press},
  year         = {2020},
  url          = {https://doi.org/10.1017/9781108637435.005},
  doi          = {10.1017/9781108637435.005},
  timestamp    = {Mon, 04 Jan 2021 16:42:06 +0100},
  biburl       = {https://dblp.org/rec/books/cu/20/Barbay20.bib},
  bibsource    = {dblp computer science bibliography, https://dblp.org}
}

@inproceedings{colt_ChenL16_openproblem,
  author       = {Lijie Chen and
                  Jian Li},
  editor       = {Vitaly Feldman and
                  Alexander Rakhlin and
                  Ohad Shamir},
  title        = {Open Problem: Best Arm Identification: Almost Instance-Wise Optimality
                  and the Gap Entropy Conjecture},
  booktitle    = {Proceedings of the 29th Conference on Learning Theory, {COLT} 2016,
                  New York, USA, June 23-26, 2016},
  series       = {{JMLR} Workshop and Conference Proceedings},
  pages        = {1643--1646},
  publisher    = {JMLR.org},
  year         = {2016},
  url          = {http://proceedings.mlr.press/v49/chen16b.html},
  timestamp    = {Fri, 05 Feb 2021 10:53:54 +0100},
  biburl       = {https://dblp.org/rec/conf/colt/ChenL16.bib},
  bibsource    = {dblp computer science bibliography, https://dblp.org}
}

@inproceedings{nips_LiRNJJ22,
  author       = {Zhaoqi Li and
                  Lillian J. Ratliff and
                  Houssam Nassif and
                  Kevin Jamieson and
                  Lalit Jain},
  editor       = {Sanmi Koyejo and
                  S. Mohamed and
                  A. Agarwal and
                  Danielle Belgrave and
                  K. Cho and
                  A. Oh},
  title        = {Instance-optimal {PAC} Algorithms for Contextual Bandits},
  booktitle    = {Advances in Neural Information Processing Systems 35: Annual Conference
                  on Neural Information Processing Systems 2022, NeurIPS 2022, New Orleans,
                  LA, USA, November 28 - December 9, 2022},
  year         = {2022},
  url          = {http://papers.nips.cc/paper\_files/paper/2022/hash/f4821075019a058700f6e6738eea1365-Abstract-Conference.html},
  timestamp    = {Fri, 27 Feb 2026 15:49:54 +0100},
  biburl       = {https://dblp.org/rec/conf/nips/LiRNJJ22.bib},
  bibsource    = {dblp computer science bibliography, https://dblp.org}
}

@inproceedings{dang_neurips_23,
author = {Dang, Trung and Lee, Jasper C.H. and Song, Maoyuan and Valiant, Paul},
title = {Optimality in mean estimation: beyond worst-case, beyond sub-Gaussian, and beyond 1 + $\alpha$ moments},
year = {2023},
publisher = {Curran Associates Inc.},
address = {Red Hook, NY, USA},
booktitle = {Proceedings of the 37th International Conference on Neural Information Processing Systems},
articleno = {182},
numpages = {27},
location = {New Orleans, LA, USA},
series = {NIPS '23}
}

@article{FaginLN03_instance_optimal,
  author       = {Ronald Fagin and
                  Amnon Lotem and
                  Moni Naor},
  title        = {Optimal aggregation algorithms for middleware},
  journal      = {J. Comput. Syst. Sci.},
  volume       = {66},
  number       = {4},
  pages        = {614--656},
  year         = {2003},
  url          = {https://doi.org/10.1016/S0022-0000(03)00026-6},
  doi          = {10.1016/S0022-0000(03)00026-6},
  timestamp    = {Tue, 16 Feb 2021 14:04:01 +0100},
  biburl       = {https://dblp.org/rec/journals/jcss/FaginLN03.bib},
  bibsource    = {dblp computer science bibliography, https://dblp.org}
}

@article{bertram2025estimating,
  author       = {Christian Bertram and
                  Mads Vestergaard Jensen and
                  Mikkel Thorup and
                  Hanzhi Wang and
                  Shuyi Yan},
  title        = {Estimating Random-Walk Probabilities in Directed Graphs},
  journal      = {CoRR},
  volume       = {abs/2504.16481},
  year         = {2025},
  url          = {https://doi.org/10.48550/arXiv.2504.16481},
  doi          = {10.48550/ARXIV.2504.16481},
  eprinttype    = {arXiv},
  eprint       = {2504.16481},
  timestamp    = {Mon, 02 Jun 2025 16:27:00 +0200},
  biburl       = {https://dblp.org/rec/journals/corr/abs-2504-16481.bib},
  bibsource    = {dblp computer science bibliography, https://dblp.org}
}

@inproceedings{wang2024revisiting,
  author       = {Hanzhi Wang and
                  Zhewei Wei and
                  Ji{-}Rong Wen and
                  Mingji Yang},
  title        = {Revisiting local computation of {PageRank}: Simple and optimal},
  booktitle    = {Proceedings of the 56th Annual {ACM} Symposium on Theory of Computing},
  pages        = {911--922},
  year         = {2024},
  doi          = {10.1145/3618260.3649661}
}

@inproceedings{bressan2018sublinear,
  author       = {Marco Bressan and
                  Enoch Peserico and
                  Luca Pretto},
  title        = {Sublinear algorithms for local graph centrality estimation},
  booktitle    = {Proceedings of the 59th {IEEE} Symposium on Foundations of Computer Science},
  pages        = {709--718},
  year         = {2018},
  doi          = {10.1109/FOCS.2018.00073}
}

@article{bressan2023sublinear,
  author       = {Marco Bressan and
                  Enoch Peserico and
                  Luca Pretto},
  title        = {Sublinear algorithms for local graph-centrality estimation},
  journal      = {{SIAM} Journal on Computing},
  volume       = {52},
  number       = {4},
  pages        = {968--1008},
  year         = {2023},
  doi          = {10.1137/19M1266976}
}

@inproceedings{wang2018efficient,
  author       = {Sibo Wang and
                  Yufei Tao},
  title        = {Efficient algorithms for finding approximate heavy hitters in {Personalized} {PageRanks}},
  booktitle    = {Proceedings of the {ACM} International Conference on Management of Data},
  pages        = {1113--1127},
  year         = {2018},
  doi          = {10.1145/3183713.3196919}
}

@inproceedings{lofgren2014fast,
  author       = {Peter Lofgren and
                  Siddhartha Banerjee and
                  Ashish Goel and
                  Seshadhri Comandur},
  title        = {{FAST-PPR}: scaling {Personalized} {PageRank} estimation for large graphs},
  booktitle    = {Proceedings of the 20th {ACM} {SIGKDD} International Conference on Knowledge Discovery and Data Mining},
  pages        = {1436--1445},
  year         = {2014},
  doi          = {10.1145/2623330.2623745}
}

@inproceedings{yin2019scalable,
  author       = {Yuan Yin and
                  Zhewei Wei},
  title        = {Scalable Graph Embeddings via Sparse Transpose Proximities},
  booktitle    = {Proceedings of the 25th {ACM} {SIGKDD} International Conference on Knowledge Discovery and Data Mining},
  pages        = {1429--1437},
  year         = {2019}
}

@inproceedings{wang2020personalized,
  author       = {Hanzhi Wang and
                  Zhewei Wei and
                  Junhao Gan and
                  Sibo Wang and
                  Zengfeng Huang},
  title        = {Personalized {PageRank} to a target node, revisited},
  booktitle    = {Proceedings of the 26th {ACM} {SIGKDD} International Conference on Knowledge Discovery and Data Mining},
  pages        = {657--667},
  year         = {2020},
  doi          = {10.1145/3394486.3403108}
}

@inproceedings{wang_kdd2024_revisiting,
  author       = {Hanzhi Wang},
  editor       = {Ricardo Baeza{-}Yates and
                  Francesco Bonchi},
  title        = {Revisiting Local PageRank Estimation on Undirected Graphs: Simple
                  and Optimal},
  booktitle    = {Proceedings of the 30th {ACM} {SIGKDD} Conference on Knowledge Discovery
                  and Data Mining, {KDD} 2024, Barcelona, Spain, August 25-29, 2024},
  pages        = {3036--3044},
  publisher    = {{ACM}},
  year         = {2024},
  url          = {https://doi.org/10.1145/3637528.3671820},
  doi          = {10.1145/3637528.3671820},
  timestamp    = {Sun, 08 Sep 2024 16:05:54 +0200},
  biburl       = {https://dblp.org/rec/conf/kdd/000124.bib},
  bibsource    = {dblp computer science bibliography, https://dblp.org}
}

@inproceedings{wang2021approximate,
  author       = {Hanzhi Wang and
                  Mingguo He and
                  Zhewei Wei and
                  Sibo Wang and
                  Ye Yuan and
                  Xiaoyong Du and
                  Ji{-}Rong Wen},
  title        = {Approximate graph propagation},
  booktitle    = {Proceedings of the 27th {ACM} {SIGKDD} International Conference on Knowledge Discovery and Data Mining},
  pages        = {1686--1696},
  year         = {2021},
  doi          = {10.1145/3447548.3467243}
}

@article{wang2023estimating,
  author       = {Hanzhi Wang and
                  Zhewei Wei},
  title        = {Estimating single-node {PageRank} in $\tilde{O}\left(\min\big\{d_t,\sqrt{m}\big\}\right)$ time},
  journal      = {Proceedings of the VLDB Endowment},
  volume       = {16},
  number       = {11},
  pages        = {2949--2961},
  year         = {2023},
  doi          = {10.14778/3611479.3611500}
}

@inproceedings{bressan2013power,
  author       = {Marco Bressan and
                  Enoch Peserico and
                  Luca Pretto},
  title        = {The power of local information in {PageRank}},
  booktitle    = {Proceedings of the 22nd International World Wide Web Conference},
  pages        = {179--180},
  year         = {2013},
  doi          = {10.1145/2487788.2487878}
}

@inproceedings{chen2004local,
  author       = {Yen{-}Yu Chen and
                  Qingqing Gan and
                  Torsten Suel},
  title        = {Local methods for estimating {PageRank} values},
  booktitle    = {Proceedings of the 13th {ACM} International Conference on Information and Knowledge Management},
  pages        = {381--389},
  year         = {2004},
  doi          = {10.1145/1031171.1031248}
}

@inproceedings{bar2008local,
  author       = {Ziv Bar{-}Yossef and
                  Li{-}Tal Mashiach},
  title        = {Local approximation of {PageRank} and reverse {PageRank}},
  booktitle    = {Proceedings of the 17th {ACM} International Conference on Information and Knowledge Management},
  pages        = {279--288},
  year         = {2008},
  doi          = {10.1145/1458082.1458122}
}

@inproceedings{bressan2011local,
  author       = {Marco Bressan and
                  Luca Pretto},
  title        = {Local computation of {PageRank}: the ranking side},
  booktitle    = {Proceedings of the 20th {ACM} International Conference on Information and Knowledge Management},
  pages        = {631--640},
  year         = {2011},
  doi          = {10.1145/2063576.2063670}
}

@inproceedings{TetekT22,
  author       = {Jakub Tetek and
                  Mikkel Thorup},
  editor       = {Stefano Leonardi and
                  Anupam Gupta},
  title        = {Edge sampling and graph parameter estimation via vertex neighborhood
                  accesses},
  booktitle    = {{STOC} '22: 54th Annual {ACM} {SIGACT} Symposium on Theory of Computing,
                  Rome, Italy, June 20 - 24, 2022},
  pages        = {1116--1129},
  publisher    = {{ACM}},
  year         = {2022},
  url          = {https://doi.org/10.1145/3519935.3520059},
  doi          = {10.1145/3519935.3520059},
  timestamp    = {Sun, 19 Jan 2025 13:28:31 +0100},
  biburl       = {https://dblp.org/rec/conf/stoc/TetekT22.bib},
}

@inproceedings{lofgren2016personalized,
  author       = {Peter Lofgren and
                  Siddhartha Banerjee and
                  Ashish Goel},
  title        = {Personalized {PageRank} estimation and search: a bidirectional approach},
  booktitle    = {Proceedings of the 9th {ACM} International Conference on Web Search and Data Mining},
  pages        = {163--172},
  year         = {2016},
  doi          = {10.1145/2835776.2835823}
}

@article{berkhin2005survey,
  author       = {Pavel Berkhin},
  title        = {Survey: A Survey on PageRank Computing},
  journal      = {Internet Math.},
  volume       = {2},
  number       = {1},
  pages        = {73--120},
  year         = {2005}
}

@article{fogaras2005towards,
  author       = {D{\'{a}}niel Fogaras and
                  Bal{\'{a}}zs R{\'{a}}cz and
                  K{\'{a}}roly Csalog{\'{a}}ny and
                  Tam{\'{a}}s Sarl{\'{o}}s},
  title        = {Towards scaling fully {Personalized} {PageRank}: Algorithms, lower bounds, and experiments},
  journal      = {Internet Mathematics},
  volume       = {2},
  number       = {3},
  pages        = {333--358},
  year         = {2005},
  doi          = {10.1080/15427951.2005.10129104}
}

@article{gleich2007approximating,
  author       = {David F. Gleich and
                  Marzia Polito},
  title        = {Approximating Personalized PageRank with Minimal Use of Web Graph
                  Data},
  journal      = {Internet Math.},
  volume       = {3},
  number       = {3},
  pages        = {257--294},
  year         = {2007}
}

@article{andersen2008local,
  author       = {Reid Andersen and
                  Christian Borgs and
                  Jennifer T. Chayes and
                  John E. Hopcroft and
                  Vahab S. Mirrokni and
                  Shang{-}Hua Teng},
  title        = {Local computation of {PageRank} contributions},
  journal      = {Internet Mathematics},
  volume       = {5},
  number       = {1},
  pages        = {23--45},
  year         = {2008},
  doi          = {10.1080/15427951.2008.10129302}
}

@article{borgs2014multiscale,
  author       = {Christian Borgs and
                  Michael Brautbar and
                  Jennifer T. Chayes and
                  Shang{-}Hua Teng},
  title        = {Multiscale matrix sampling and sublinear-time {PageRank} computation},
  journal      = {Internet Mathematics},
  volume       = {10},
  number       = {1-2},
  pages        = {20--48},
  year         = {2014},
  doi          = {10.1080/15427951.2013.802752}
}

@inproceedings{borgs2012sublinear,
  author       = {Christian Borgs and
                  Michael Brautbar and
                  Jennifer T. Chayes and
                  Shang{-}Hua Teng},
  title        = {A sublinear time algorithm for {PageRank} computations},
  booktitle    = {Proceedings of the 9th International Workshop on Algorithms and Models for the Web-Graph},
  volume       = {7323},
  pages        = {41--53},
  year         = {2012},
  doi          = {10.1007/978-3-642-30541-2\_4}
}

@inproceedings{lofgren2015bidirectional,
  author       = {Peter Lofgren and
                  Siddhartha Banerjee and
                  Ashish Goel},
  title        = {Bidirectional {PageRank} estimation: from average-case to worst-case},
  booktitle    = {Proceedings of the 12th International Workshop on Algorithms Models Web Graph},
  volume       = {9479},
  pages        = {164--176},
  year         = {2015},
  doi          = {10.1007/978-3-319-26784-5\_13}
}

@techreport{page1998pagerank,
  author       = {Lawrence Page and
                  Sergey Brin and
                  Rajeev Motwani and
                  Terry Winograd},
  title        = {The {PageRank} citation ranking: Bringing order to the web},
  year         = {1998},
  institution  = {Stanford University},
  url          = {http://ilpubs.stanford.edu:8090/422/}
}

@article{avrachenkov2007monte,
  author       = {Konstantin Avrachenkov and
                  Nelly Litvak and
                  Danil Nemirovsky and
                  Natalia Osipova},
  title        = {Monte {Carlo} methods in {PageRank} computation: {When} one iteration is sufficient},
  journal      = {{SIAM} Journal on Numerical Analysis},
  volume       = {45},
  number       = {2},
  pages        = {890--904},
  year         = {2007},
  doi          = {10.1137/050643799}
}

@article{gleich2015pagerank,
  author       = {David F. Gleich},
  title        = {PageRank beyond the web},
  journal      = {{SIAM} Review},
  volume       = {57},
  number       = {3},
  pages        = {321--363},
  year         = {2015},
  doi          = {10.1137/140976649}
}

\end{document}